\numberwithin{equation}{section}
\newtheorem{theorem}{Theorem}[section]
\newtheorem{lemma}[theorem]{Lemma}
\newtheorem{corollary}[theorem]{Corollary}
\theoremstyle{definition}
\newtheorem{definition}{Definition}[section]
\newtheorem{assumption}{Assumption}[section]
\newtheorem{remark}{Remark}[section]
\newtheorem{example}{Example}[section]
\begin{document}

\begin{frontmatter}
%\title{Key words: Parallel service; replication; minimizing delay; stability region}
\title{Delay-optimal policies in partial fork-join systems with redundancy and random slowdowns}
%\runtitle{[Title]}

\begin{aug}
\author{\fnms{Martin} \snm{Zubeldia}\ead[label=e1]{m.zubeldia.suarez@tue.nl}}

%\runauthor{D. Gamarnik et al.}

\affiliation{Eindhoven University of Technology and University of Amsterdam}

\address{Martin Zubeldia\\
Department of Mathematics \& Computer Science\\
Eindhoven University of Technology\\
5600 MB Eindhoven, Netherlands.\\
\printead{e1}}

\end{aug}

%\vspace{3mm}
%\today
%\vspace{3mm}

\begin{abstract}
  We consider a large distributed service system consisting of $n$ homogeneous servers with infinite capacity FIFO queues. Jobs arrive as a Poisson process of rate $\lambda n/k_n$ (for some positive constant $\lambda$ and integer $k_n$). Each incoming job consists of $k_n$ identical tasks that can be executed in parallel, and that can be encoded into at least $k_n$ ``replicas" of the same size (by introducing redundancy) so that the job is considered to be completed when \emph{any} $k_n$ replicas associated with it finish their service. Moreover, we assume that servers can experience random slowdowns in their processing rate so that the service time of a replica is the product of its size and a random slowdown.

  First, we assume that the server slowdowns are shifted exponential and independent of the replica sizes. In this setting we show that the delay of a typical job is asymptotically minimized (as $n\to\infty$) when the number of replicas per task is a constant that only depends on the arrival rate $\lambda$, and on the expected slowdown of servers.

  Second, we introduce a new model for the server slowdowns in which larger tasks experience less variable slowdowns than smaller tasks. In this setting we show that, under the class of policies where all replicas start their service at the same time, the delay of a typical job is asymptotically minimized (as $n\to\infty$) when the number of replicas per task is made to depend on the actual size of the tasks being replicated, with smaller tasks being replicated more than larger tasks.
\end{abstract}

%\begin{keyword}
%
%\end{keyword}

\end{frontmatter}

\setcounter{tocdepth}{2}
\tableofcontents

\section{Introduction}

Consider a distributed service system consisting of a large number of servers operating in parallel, where each server can experience random slowdowns in its processing rate. Each incoming job consists of $k$ identical tasks that can be executed in parallel, and that can be encoded into at least $k$ ``replicas" of the same size (by introducing redundancy) so that the job is considered to be completed when \emph{any} $k$ replicas associated with it finish their service. For the case of $k=1$, this corresponds to simply creating copies of the job, dispatching them to different servers, and waiting for the first one to finish. For the general case with $k>1$, our model is a generalization of the classic fork-join model (which assumes $k=n$), and it is motivated by the following applications:
\begin{itemize}
  \item [(i)] {\bf Data file retrieval with redundancy and coding:} Consider a setting where a user wants to retrieve a large file, which can be downloaded from a large set of servers. In order to shorten the download time, the following scheme can be used \cite{joshiCoding}. The file is split into $k$ pieces of equal size, and then the $k$ pieces are encoded into $r\geq k$ pieces of the same size, in a way that allows the original file to be recovered from any $k$ pieces. The user starts to download the $r$ encoded pieces from $r$ different servers, waits for the first $k$ to finish, and then cancels the other downloads. In this setting, the slowdowns in the download come from the congestion in the route between the user that requests the file and the servers from which the pieces are downloaded, as well as from congestion at the server themselves. This encoding scheme was shown to be better than replicating each of the pieces in \cite{MDSbetter}.
  \item [(ii)] {\bf Approximate distributed computing:} Consider a setting where a user wants to compute $k$ gradient estimates in a Parallelized Stochastic Gradient Descent method, executed by a large server farm with $n\gg k$ servers \cite{distributedGradient}. In order to reduce processing time, we attempt to compute $r\geq k$ gradient estimates and we stop when any $k$ computations are finished. In this setting, the slowdowns come from reductions in server processing power due to background processes or similar issues.
\end{itemize}
In general, there are many sources of randomness for the service time of a task. These include:
\begin{itemize}
  \item [(i)] The intrinsic task size variability.
  \item [(ii)] Slowdowns in the local processing rate of the server due to exogenous interferences (such as background processes).
  \item [(iii)] Network congestion that interferes with communication.
\end{itemize}
The intrinsic task size variability is always a source of randomness in the service times (even when there are no slowdowns), and it is captured in almost every model in the literature, including the two models that we will work with. However, not all other sources of randomness are relevant for all applications. For instance, network congestion interfering with communication can create significant slowdowns for the retrieval of data from a server, whereas the reduction in the local processing power of a server can barely affect it. This behavior is more accurately reflected by the independent exponential slowdowns model introduced in Section \ref{sec:exponential}. On the other hand, random reductions in the processing power of local servers have a greater impact on the slowdowns in distributed computing than network congestion. This behavior is more accurately reflected by the size-based slowdowns model introduced in Section \ref{sec:sizeBased}.\\

Our objective is to understand the best possible performance of such systems and to propose near-optimal policies, with emphasis on the asymptotic regime when $n$ is large. In particular, our performance metric is the {\bf delay} of a typical job, i.e., the time between a job arrives to the system, and the time when $k$ replicas associated with it finish their services. The delay can be decomposed as the sum of the {\bf service time}, i.e., the total amount of time that at least one replica associated with the job is in service, and the {\bf waiting time}, i.e., the total amount of time that no replica associated with the job is in service.\\

A control policy for such systems must specify how many redundant tasks (called {\bf replicas} from now on) to create, when to create them (their creation can be staggered in time), and to which servers to send them. Furthermore, replicas can also be prematurely cancelled. All of these decisions have the potential of reducing the delay of a typical job.
\begin{enumerate}
\item For example, suppose that we send $r$ replicas to different servers, but as soon as any $k$ replicas associated with the same job start their service, all other replicas associated with the same job are cancelled. In this case, the $k$ replicas in service correspond to the $k$-th shortest queues out of the original $r$. This effectively reduces the queueing delay of replicas, and thus the queueing delay of the job. Furthermore, since the other replicas never start service, they do not consume any resources and do not affect the delays of other jobs.
\item Even if all replicas experience zero queueing delays, the creation of more replicas can reduce the delay of the job. This is because replicas processed by servers with higher processing rates (i.e., that experience less severe slowdowns) finish their service earlier. Thus, if $r>k$ replicas start their service at the same time, the delay of the job will be the $k$-th smallest out of the $r$ service times of the replicas associated with it.
\item While the creation of more replicas always reduces the delay of an individual job, too many replicas can lead to instability. Furthermore, even before the system becomes unstable, the cost of the increased congestion may overshadow the reduction in service times.
\end{enumerate}
The combination of having multiple tasks per job, the ability to create replicas, and the random slowdowns in the service rates, make replication-based systems difficult to analyze, even under simple control policies.

\subsection{Previous work}
Most of the prior theoretical work on parallel service systems with redundancy has been made for the case of a single task per job (i.e., the case $k_n=1$), and under the (unrealistic) \emph{independent runtimes model}, which stipulates that the times that replicas require to complete their service are i.i.d., regardless of whether they are associated with the same job or not. Using this model, a body of work has focused on characterizing optimal replication policies based on the log-concavity of the complementary cdf of the service times \cite{joshiConcavity,Koole08,generalists,optimalShroff}. In particular, it has been shown that if the complementary cdf of the service time distribution of a replica is log-convex, then the minimum delay is achieved when replicas are sent to all the servers. Furthermore, in \cite{redundancyD,lowLatency,joshiTradeoff,joshiTradeoff2} the authors explore the tradeoffs between the delays and the resources utilized, for several classes of service time distributions (mainly discrete and log-convex). The stability and performance for difference scheduling policies at the queues was studied in \cite{stabilityRedundancy,gardner2019}.

On the other hand, there is some recent work \cite{decoupledSlowdown} that introduces a more realistic model where service times of replicas associated with the same job are correlated. Policies under this model are significantly harder to analyze, and the available theoretical results are limited. In \cite{poloczek16,joshiCoding}, the authors obtain results about the performance of particular policies. Furthermore, in \cite{decoupledFairness} the authors develop policies that result in fair delays for multiple classes of jobs, under replication constraints.

Results from different settings can also cast a light on delay-optimal replication schemes. For example, in \cite{semReplication} the authors analyse a discrete-time parallel service system, where the service times of replicas are i.i.d. and geometrically distributed, and where replicas can be created and deleted after each time slot, so that the number of servers working on a job can change from slot to slot. In this setting, it has been shown that the delay is minimized when all servers are used all of the time, and the number of replicas associated with each job are all equal (or differ by at most 1). Furthermore, in \cite{decoupledOptimality} the authors analyse a different parallel service system where, instead of replication, jobs are amenable to parallel processing, with a sublinear improvement in processing rate. For the case of exponentially distributed jobs it was shown that if the number of servers processing a job can be changed at any point in time, then the delay is minimized when all servers are used all of the time, and the number of servers that process each job are all equal (or differ by at most 1).

Finally, for the more general case where $k_n>1$, our model is a generalization of the classic fork-join model, where the number of tasks per job is equal to the number of servers (i.e., where $k_n=n$). In this setting, tight characterizations of the delay are only known for the two-server case (see \cite{classicForkJoin} for a detailed survey). Although there are no tight delay characterizations for fork-join models, there are several asymptotic and non-asymptotic bounds \cite{kumarEtAl,leeEtAl,nelsonEtAl,rizkEtAl,shahEtAl,multiTaskAsymptotics}, with different levels of tightness.

\subsection{Our contribution}
We consider a broad family of control policies, which includes most policies considered in the literature, and work towards characterizing the achievable delay performance of jobs under two different models for the slowdowns.
\begin{enumerate}
\item We consider the $S\& X$ model, first introduced in \cite{decoupledSlowdown}, which assumes that the slowdowns are independent from the task sizes. This is a plausible model of the slowdowns which fluctuate on a time scale that is slower than the typical delay. Moreover, we assume that the slowdowns are exponential, which are observed, for example, in the download times from Amazon servers \cite{empirical1,empirical2}. Under this model, our first contribution is a universal lower bound for the expected delay of a typical job under any control policy, which provides a benchmark for any practical policy. Surprisingly, this lower bound is independent of the task size distribution, of the distribution of the inter-arrival times, and of the number of servers. Our second contribution is the introduction of simple control policies that asymptotically achieve the lower bound, under minor technical conditions. For these asymptotically optimal policies, the number of replicas created per task (i.e., the quantity $r/k$) is independent of the task sizes, and of the number of tasks.
\item We also consider a new Size-based slowdown model, under which the distribution of the slowdowns is a function of the task sizes. This reflects the fact that longer replicas should ``average out" the slowdowns and experience less variability in their service times than short job. This is more accurate for modeling slowdowns that are in the same time-scale as the delays (e.g., the slowdowns in distributed computing). Under this model, we consider policies where all replicas associated with the same job start their service at the same time. For policies that are optimal under this restriction, we show that the number of replicas per task is nonincreasing in the size of the task. This is consistent with current practice, but to the best of our knowledge this is the first tractable model that justifies this practice.
\end{enumerate}

\subsection{Outline of the paper}
The rest of the paper is organized as follows. In Section \ref{sec:replication_model_assumption} we introduce the general modeling assumptions, and the policies that are considered throughout the paper. In Section \ref{sec:exponential} we introduce the first model for the slowdowns, and the main results for that model. In Section \ref{sec:sizeBased} we introduce a new sized-based model for the slowdowns, and the main results for this new model. Finally, in Section \ref{sec:conclusions} we present our conclusions and suggestions for future work.

\section{Modeling assumptions and performance metrics} \label{sec:replication_model_assumption}
Throughout this paper we consider a system consisting of $n$ homogeneous servers, which can experience random slowdowns in their processing rates, and where each server is associated with an infinite capacity FIFO queue. We assume that jobs arrive to the system as a Poisson process of rate $\lambda n/k_n$, for some fixed $\lambda>0$ and some positive integer $k_n\leq n$. Each job consists of $k_n$ tasks of the same (albeit random) size. Task sizes are i.i.d. across different jobs, and have unit mean. Furthermore, we assume that we can encode the $k_n$ tasks into any number of at least $k_n$ replicas (of the same size as the tasks) such that a job is finished when \emph{any} $k_n$ replicas associated with it finish their service (at which point all remaining replicas associated with the same job are immediately removed from the system).\\

\noindent{\bf Service time variability:}
Let $X_j$ be the size of the replicas associated with the $j$-th job. As mentioned earlier, the $\{X_j\}_{j\geq 1}$ are i.i.d. The number of replicas created and associated with each job can be random, as it depends on the control policy, congestion, and other factors. Let $S_{j,r}$ be the slowdown that would be experienced by the $r$-th replica associated with the $j$-th job, if such a replica were to start its service. In that case, we assume that the time required for the $r$-th replica associated with the $j$-th job to finish its service is equal to $X_j(1+S_{j,r})$. Moreover, we assume the following.

\begin{assumption}\label{ass:basicSlowdown}
$ $
\begin{itemize}
   \item [(i)] There exists a family of cumulative distribution functions $\{F_x:x\geq 0\}$ such that, for all $x\geq 0$, and for every $j\geq 1$ and $R\geq 1$, we have
\[ \mathbb{P}\left(\left. \bigcap_{r=1}^{R} \big\{ S_{j,r}\leq s_r \big\} \,\right|\, X_j=x\right) = \prod_{r=1}^{R} F_x(s_r), \]
for all $s_1,\dots,s_{R}\geq 0$.
   \item [(ii)] The random sequences of slowdowns $\big\{(S_{j,r}:r\geq 1)\big\}_{j\geq 1}$ are independent, i.e., for every $J\geq 1$, we have
   \[ \mathbb{P}\left( \bigcap_{j=1}^J \Big\{\big(S_{j,r}:r\geq 1\big) \in A_j \Big\} \right) = \prod_{j=1}^J \mathbb{P}\big( (S_{j,r}:r\geq 1) \in A_j \big), \]
   for all measurable sets $A_1,\dots,A_J$.
\end{itemize}
\end{assumption}

\begin{remark}
  The first part of Assumption \ref{ass:basicSlowdown} asserts that, conditioned on the task size, the slowdowns experienced by different replicas associated with the same job are independent and identically distributed, and that the conditional distribution of the slowdowns is the same across different jobs. Combining this with the fact that replicas' sizes are i.i.d., we get that the random sequences of slowdowns $\big\{(S_{j,r}:r\geq 1)\big\}_{j\geq 1}$ are identically distributed, i.e., that
  \[ (S_{j,r}:r\geq 1) \overset{d}{=}(S_{j',r}:r\geq 1), \]
  for all $j,j'\geq 1$. Further combining this with the second part of Assumption \ref{ass:basicSlowdown}, we conclude that the random sequences of slowdowns $\big\{(S_{j,r}:r\geq 1)\big\}_{j\geq 1}$ are i.i.d.
\end{remark}

%\begin{remark}
%Since the slowdowns are tied to jobs and not to specific servers, part (i) of Assumption \ref{ass:basicSlowdown} implies that different replicas of reflect the homogeneity of the slowdowns across different servers, and the stationarity of slowdowns through time. However, the assumption also implies that the
%The second part states that the slowdowns experienced by the replicas associated with different jobs are also independent and identically distributed.
%\end{remark}

In the two models that we consider, there will be different additional assumptions on the family of cumulative distribution functions $\{F_x:x\geq 0\}$.

\subsection{Admissible control policies}\label{sec:assumptions}
In this subsection, we introduce a broad family of control policies for our system. In order to do this, we start by introducing the extended queue state process $\mathcal{Q}_n(\cdot)$. The extended queue state $\mathcal{Q}_n(t)$ at time $t$ is a set, where each element corresponds to a job in the system. In particular,
\[ \Big(x,u,\big\{d_1,\dots,d_k\big\},\big\{(n_1,y_1,e_1),\dots,(n_r,y_r,e_r)\big\}\Big)\in \mathcal{Q}_n(t) \]
if at time $t$ there is a job in the system that satisfies the following:
\begin{itemize}
  \item [(i)] the tasks associated with the job have size $x$,
  \item [(ii)] the job arrived $u$ units of time ago
  \item [(iii)] $k$ replicas associated with the job have already finished their service, $d_1,\dots,d_k$ units of time after the arrival of the job,
  \item [(iv)] the job has a total of $r$ replicas associated with it still in the system, in queues $n_1,\dots,n_r$, which
  \begin{enumerate}
    \item are in positions $y_1,\dots,y_r$ in their respective queues { (with the convention that if a replica is in service then it is in position $0$)},
    \item have elapsed service times equal to $e_1,\dots,e_r$.
  \end{enumerate}
\end{itemize}
{ For example, if at time $t$ there is a job in the system with tasks of size $x$, which arrived $u$ units of time ago, and no replicas were created yet, we have $(x,u,\emptyset,\emptyset)\in \mathcal{Q}_n(t)$.}\\

The evolution of the process $\mathcal{Q}_n(\cdot)$ is partly driven by intrinsic features of our queueing model, such as the arrival process, the size of incoming tasks, the slowdown of the replicas, the FIFO queues, and the fact that each job leaves the system (together with all of the replicas associated with it) at the moment that $k_n$ of its replicas finish their service. On the other hand, the control policy determines:
\begin{itemize}
  \item [(i)] when to create new replicas for jobs in the system,
  \item [(ii)] where to dispatch the newly created replicas,
  \item [(iii)] when to cancel replicas in the system.
\end{itemize}
In general, we consider policies that use the current extended queue state to make these decisions. This is formalized in the following assumption.

\begin{assumption}[Markovianity]\label{ass:markovianity}
  The extended queue state process $\mathcal{Q}_n(\cdot)$ is Markov.
\end{assumption}

Note that this assumption implies that the control policies considered are completely characterized by the Markov kernel of the corresponding extended queue state process $\mathcal{Q}_n(\cdot)$.

Moreover, the extended queue state process $\mathcal{Q}_n(\cdot)$ is rich enough to allow for policies that take into account task size, time since the last replica associated with a job finished its service, and more, in order to decide when to create and cancel replicas.\\

We now introduce an assumption that restricts when replicas can be cancelled.

\begin{assumption}[No late cancellations]\label{ass:noCancelInService}
  Replicas with positive elapsed times cannot be cancelled { (unless $k_n$ of them have finished their service)}.
\end{assumption}

This assumption is introduced to keep the policies tractable, but it can lead to a degradation of performance in some cases. For example, suppose that the expected remaining service time of a replica with positive elapsed time is larger than the expected service time of a new replica. In that case, it is better to cancel the preexisting replica and replace it with a new one.

On the other hand, when the distribution of the slowdowns has a non-decreasing hazard rate, the expected remaining service time of a replica with positive elapsed time is always smaller than the expected service time of a new replica. In that case, it is never a good idea to cancel a replica that is being processed to replace it with a new one. Thus, Assumption \ref{ass:noCancelInService} is not very restrictive when the slowdowns have non-decreasing hazard rates (as will be the case in Section \ref{sec:exponential}).\\

Moreover, Assumption \ref{ass:noCancelInService} implies the following.

\begin{lemma}\label{lem:boundedReplicas}
  Under Assumption \ref{ass:noCancelInService}, at most $n+k_n-1$ replicas associated with the same job are ever in service.
\end{lemma}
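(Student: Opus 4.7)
The plan is to fix an arbitrary job and count the total number of distinct replicas associated with it that ever begin service during its sojourn. The key structural observation is that, by Assumption \ref{ass:noCancelInService}, a replica of this job whose elapsed service time has become positive cannot be removed from the system except in one of two ways: by completing its service, or by being cancelled at the moment the job leaves (i.e., when the $k_n$-th of its replicas finishes). Therefore, at every time $t$ while the job is still in the system, the set of replicas of this job that have ever been in service is partitioned into those that have already finished and those currently in service.

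Next I would bound each part separately. Let $F(t)$ be the number of finished replicas and $I(t)$ the number of in-service replicas of this job at time $t$, so that the cumulative count of replicas ever in service is $S(t) = F(t) + I(t)$ by the observation above. Since the job is still in the system, $F(t) \leq k_n - 1$; and since each of the $n$ servers can process at most one replica at a time, $I(t) \leq n$. Summing gives $S(t) \leq n + k_n - 1$ throughout the job's sojourn.

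Finally, I would check that the completion event preserves the bound. At the instant when the $k_n$-th replica finishes, $F$ increases by one while $I$ decreases by one, so $S$ does not jump; the remaining in-service replicas are then removed and the job leaves, after which no additional replica associated with this job can enter service. Hence $\sup_t S(t) \leq n + k_n - 1$, which is exactly the claimed bound.

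There is no serious obstacle here; the statement is a short bookkeeping consequence of Assumption \ref{ass:noCancelInService} combined with the fact that there are only $n$ servers and the job leaves once $k_n$ replicas complete. The only subtlety is to treat the completion instant consistently, which is handled by noting that $S$ has no upward jump at that event.
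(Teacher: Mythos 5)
Your proof is correct and is essentially the paper's argument recast as a direct counting bound rather than a proof by contradiction: both rely on exactly the same three facts (at most $n$ replicas in service at once, departure triggered when $k_n$ replicas finish, and Assumption \ref{ass:noCancelInService} ensuring that a replica that has started service and is no longer in service must have finished rather than been cancelled). The paper supposes an $(n+k_n)$-th replica starts service and derives that $k_n$ must already have finished, a contradiction; you instead track $F(t)+I(t)$ and bound the two terms separately. Either route is fine, and neither buys anything the other lacks.
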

\begin{proof}
Suppose that at least $n+k_n$ replicas associated with the same job are eventually in service. Since there can only be at most $n$ replicas in service at the same time (because we have only $n$ FIFO queues), then in order for the ($n+k_n$)-th replica to start its service, at least $k_n$ replicas have to have finished their services before. This is a contradiction, because all replicas associated with the same job leave the system as soon as $k_n$ replicas associated with it finish their services.
\end{proof}

We are now ready to define the admissible control policies that will be used throughout this paper.

\begin{definition}[Admissible policies]
We say that a control policy is \emph{admissible} if the corresponding extended queue state process $\mathcal{Q}_n(\cdot)$ satisfies assumptions \ref{ass:markovianity} and \ref{ass:noCancelInService}.
\end{definition}

\begin{remark}
Note that a decision maker implementing the policies introduced in this section has full knowledge of the state of all queues in the system, including the size of all tasks. However, the decision maker has no information about the slowdowns of the servers, other than what can be inferred from the state of the queues.
\end{remark}

\subsection{Stability and performance metric}
We say that an admissible policy is {\bf stable} if the corresponding process $\mathcal{Q}_n(\cdot)$ admits a unique invariant probability measure. For stable policies, the performance metric of interest is the {\bf expected delay} of a typical job, i.e., the steady-state expectation of the time between the moment a job arrives to the system, and the moment that $k_n$ replicas associated with it finish their service.

The delay of a typical job (denoted by $W_n$) can be decomposed as the sum of two components. The first component is the {\bf service time} (denoted by $W^s_n$), and it is defined as the total amount of time that at least one replica associated with the job is in service. The second component is the {\bf queueing delay} (denoted by $W^q_n$), and it is defined as the total amount of time between the arrival and the departure of the job that no replica associated with it is in service.

\begin{remark}
Note that, since $k_n$ replicas associated with a job need to finish their service before the job leaves the system, and since the replicas can be staggered in time, a job can be in service intermittently over time. In particular, while a job is in the system, it incurs service time when at least one of the replicas associated with it is in service. The rest of the time it is incurring queueing delay. This is akin to the service time and queueing delay of a job in a preemptive queue, where a job can start and resume its service many times.
\end{remark}

%Our secondary performance metric of interest is the {\bf stability region} of a dispatching/replication policy, which we define as the set of rates $\Lambda\subset\mathbb{R}_+$ such that the policy is stable for all arrival rates $\lambda\in\Lambda$.

\section{Independent exponential slowdowns}\label{sec:exponential}
In this section, we explore the stability and delay performance of admissible policies under the assumption that slowdowns are exponential and independent from the replica size. In particular, in Subsection \ref{sec:expLowerBound} we obtain necessary and sufficient conditions in the parameters of the system for the existence of stable policies, and we obtain a universal lower bound on the expected delay of any admissible policy. Moreover, in Subsection \ref{sec:upperBoundFinite} we introduce a pair of admissible policies and show that they are asymptotically delay-optimal under certain conditions on the moments of the task sizes.\\

Throughout this section, we assume the following.

\begin{assumption}\label{ass:independentExponential}
 For all $j,r\geq 1$, we have
 \[ \mathbb{P}\big(S_{j,r} \leq s\big)=1-e^{-\mu s}, \]
 for all $s\geq 0$.
\end{assumption}

Note that combining Assumption \ref{ass:independentExponential} with Assumption \ref{ass:basicSlowdown} we get that, for all $j\geq 1$, the slowdowns $\{S_{j,r}\}_{r\geq 1}$ are i.i.d. exponential random variables with mean $1/\mu$, which are also independent from the replicas' size $X_j$. This corresponds to the $S\&X$ model introduced in \cite{decoupledSlowdown}, for the special case where slowdowns are exponential. On the practical side, exponential slowdowns are observed, for example, in the download times from Amazon servers \cite{empirical1,empirical2}.\\

In what follows, we present a systematic approach to the analysis and design of policies, which culminates with the introduction and analysis of asymptotically optimal policies.

\subsection{Delay lower bound}\label{sec:expLowerBound}
In this subsection, we first obtain a necessary and sufficient condition on the parameters of the system for the existence of stable admissible policies. Then, we establish a lower bound on the expected delay of a typical job for any stable admissible policy.

\begin{theorem}\label{thm:stability}
  Under Assumption \ref{ass:independentExponential}, there exists a stable admissible policy if and only if
  \begin{equation}\label{eq:condition}
   \lambda < \frac{1}{1+\frac{1}{\mu}} .
  \end{equation}
\end{theorem}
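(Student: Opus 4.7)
The plan is to prove necessity ($\Rightarrow$) via a universal work-conservation lower bound, and sufficiency ($\Leftarrow$) by exhibiting a simple stable policy that operates below the threshold.

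For necessity, the key intermediate result I will establish is that, for any admissible policy, the expected total busy time the $n$ servers devote to a single job is at least $k_n(1+1/\mu)$. Conditional on the task size $X_j = x$, I will decompose each replica's service time $x(1+S_{j,r}) = x + x S_{j,r}$ into a \emph{deterministic phase} of length $x$ and an \emph{exponential phase} of length $x S_{j,r} \sim \mathrm{Exp}(\mu/x)$. By Assumption \ref{ass:noCancelInService}, each of the $k_n$ replicas that eventually complete must run through its full deterministic phase, so the deterministic-phase work is at least $k_n x$. For the exponential phase, letting $N^E(t)$ be the number of replicas of the job currently in the exponential phase and $K(t)$ the number of completions by time $t$, memorylessness makes $K(t) - (\mu/x)\int_0^t N^E(s)\,ds$ a local martingale; optional stopping at the job completion time $T$ (where $K(T)=k_n$) yields $\mathbb{E}\bigl[\int_0^T N^E(s)\,ds \,\big|\, X_j=x\bigr] = k_n x/\mu$. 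Unconditioning via $\mathbb{E}[X_j]=1$ gives the per-job lower bound $k_n(1+1/\mu)$.

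To convert this into the stability threshold, I will plug this bound into the standard work-conservation inequality: in steady state, the total arrival rate of work $(\lambda n/k_n)\,\mathbb{E}[\text{work per job}]$ must be strictly less than the aggregate service capacity $n$ for positive recurrence of $\mathcal{Q}_n(\cdot)$. Substituting the lower bound immediately yields $\lambda(1+1/\mu) < 1$, i.e., condition \eqref{eq:condition}. For sufficiency, I will verify stability of the baseline policy that creates exactly $k_n$ replicas per job and dispatches them uniformly at random to $k_n$ distinct servers, with each server processing its replicas FIFO and no cancellations occurring before job completion. By independent Poisson thinning across jobs, each server receives a Poisson arrival stream of replicas at rate $\lambda$ with mean service time $1+1/\mu$, so its load equals $\lambda(1+1/\mu) < 1$ and it behaves as a stable M/G/1 queue; lifting this per-server stability to positive recurrence of the extended state process is straightforward since the policy is essentially stateless.

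The main obstacle will be the optional stopping step in the work lower bound: I need the job completion time $T$ to be almost surely finite and $\mathbb{E}\bigl[\int_0^T N^E(s)\,ds\bigr]$ to be finite, uniformly over admissible policies whose stability is being tested. Lemma \ref{lem:boundedReplicas} caps $N^E(t)$ by $n+k_n-1$, which gives the required integrability on bounded intervals, but pathological admissible policies that never start enough replicas could in principle make $T$ infinite; these must be ruled out by the separate observation that such policies cannot themselves be positive recurrent, which keeps the contrapositive argument valid.
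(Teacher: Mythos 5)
Your proof is correct and reaches the same key intermediate inequality as the paper, namely that the expected total server time devoted to any job is at least $k_n(1+1/\mu)$, followed by the same work-conservation step and the same sufficiency construction. However, the \emph{route} to that inequality is genuinely different. The paper decomposes the workload by replica count, writing $M_{n+k_n-1}=M_{k_n}+\sum_{i>k_n}(M_i-M_{i-1})$, computes $\mathbb{E}[M_{k_n}]=k_n(1+1/\mu)$ directly, and then shows each increment $\mathbb{E}[M_i-M_{i-1}]\geq 0$ by conditioning on events that track how many replicas have finished by the time the $i$-th replica has been in service $X$ units, exploiting memorylessness to match the post-$T_i+X$ workloads of the two coupled systems. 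Your argument instead decomposes each replica's service time into a deterministic and an exponential phase and applies optional stopping to the compensated completion-counting process $K(t)-(\mu/x)\int_0^t N^E(s)\,ds$; this replaces the paper's casework over $E_1$ and $E_2(\ell)$ with a single accounting identity. The martingale route is cleaner and arguably more transparent about where memorylessness enters, while the paper's route yields the additional structural fact that adding a replica never decreases expected workload, which is instructive in its own right. Your handling of the optional-stopping technicalities is sound: the intensity is bounded by $(\mu/x)(n+k_n-1)$ via Lemma~\ref{lem:boundedReplicas}, so the compensated process is a true martingale, and the convergence $\mathbb{E}[K(t\wedge T)]\to k_n$ (bounded) together with monotone convergence of the compensator forces $\mathbb{E}\big[\int_0^T N^E\big]=k_n x/\mu$ with no need for $\mathbb{E}[T]<\infty$; and as you note, a stable policy must have $T<\infty$ almost surely (otherwise the job count would grow, contradicting stationarity), so the contrapositive argument goes through. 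The sufficiency direction is the same baseline policy as in the paper.
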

The sufficiency of this condition is established using a very simple policy: each time a new job arrives, $k_n$ replicas are created and dispatched uniformly at random among the $n$ servers. Under this policy, it is easily checked that each queue behaves as a M/G/1 queue with arrival rate $\lambda$ and expected service time equal to $1+1/\mu$, which is known to be stable as long as Equation \eqref{eq:condition} is satisfied.

On the other hand, the necessity of Equation \eqref{eq:condition} is established by showing that increasing the number of replicas created cannot enlarge the stability region. The proof of necessity is given in Appendix \ref{proof:stability}.

\begin{remark}
Note that Equation \eqref{eq:condition} is equivalent to
\[ \frac{1}{\lambda} - \frac{1}{\mu} >1. \]
Thus, the existence of stable admissible policies only depends on the quantity
\begin{equation}\label{eq:r^*}
  r^*= \frac{1}{\lambda} - \frac{1}{\mu}.
\end{equation}
This is a key quantity that will be ubiquitous throughout this paper.
\end{remark}

The condition in Equation \eqref{eq:condition} does not depend on $n$ or $k_n$, but it critically depends on the assumption that the slowdowns are exponential and independent from the task size. Note the above described policy, which creates only $k_n$ replicas, has the largest stability region (in the sense that the policy is stable for the largest possible set of values of $\lambda$). In contrast, as shown in \cite{poloczek16}, the largest stability region can sometimes be augmented by using more than $k_n$ replicas, when the exponential slowdown assumption is relaxed.\\

%\begin{example}
%  Suppose that $k_n=1$, and that the complementary cumulative distribution functions of the slowdowns, $\big\{\overline F_x(\cdot):x\geq 0\big\}$, are strictly log-convex (i.e., $\log\big(\overline F_x(\cdot)\big)$ is strictly convex). For such distributions, it is known (cf. Corollary 3 in \cite{joshiConcavity}) that $r\mathbb{E}\big[\min\{S_1,\dots,S_r\}\big]$ is a decreasing function of $r$, and of course lower bounded by zero. Note that the server time required to process $r$ replicas that start at the same time is $r+r\mathbb{E}\big[\min\{S_1,\dots,S_r\}\big]$. { Thus, when the impact of the second term is big enough (which was shown to happen in \cite{poloczek16} with Pareto slowdowns),} the minimum over $r$ is achieved at some $r^*>1$. From this, it follows easily that the stability region of the system is augmented when $r^*>1$ replicas are made.
%\end{example}

We now introduce a lower bound for the expected delay of a job, under any stable admissible policy.

\begin{theorem}\label{thm:boundGeneral}
  Fix some $n$. Consider a stable admissible policy, i.e., a policy for which the process $\mathcal{Q}_n(\cdot)$ satisfies assumptions \ref{ass:markovianity} and \ref{ass:noCancelInService}, and under which the process $\mathcal{Q}_n(\cdot)$ admits a unique invariant probability measure $\pi_n$. Then, under Assumption \ref{ass:independentExponential}, we have
  \begin{equation}\label{eq:firstLowerBound}
   \mathbb{E}_{\pi_n}[W_n] \geq 1+ \frac{1}{\mu}\sum\limits_{i=1}^{k_n} \frac{1}{k_n \left(\frac{1}{\lambda}-\frac{1}{\mu}\right)-i+1}.
  \end{equation}
\end{theorem}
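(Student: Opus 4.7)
The plan is to combine a pathwise lower bound on $W_n$ with a server-capacity conservation law and an application of Cauchy--Schwarz. For any admissible policy, let $R_j$ denote the number of replicas of job $j$ that enter service, let $b_{j,1},\dots,b_{j,R_j}\ge 0$ be their start times relative to the job's arrival, and let $S_{j,1},\dots,S_{j,R_j}$ be their exponential slowdowns. Under Assumption \ref{ass:noCancelInService}, no in-service replica is cancelled before the job completes, so $W_n$ equals the $k_n$-th smallest of the hypothetical completion times $b_{j,i}+X_j(1+S_{j,i})$ over $i=1,\dots,R_j$. Since $b_{j,i}\ge 0$, this yields the pathwise bound
\[
W_n \ge X_j\bigl(1+S^{R_j}_{(k_n)}\bigr),
\]
where $S^{R_j}_{(k_n)}$ is the $k_n$-th smallest of $S_{j,1},\dots,S_{j,R_j}$.

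Next I would invoke server capacity: the total elapsed service time used by the $n$ servers is at most $n$ per unit time, and jobs arrive at rate $\lambda n/k_n$, so in steady state the per-job total elapsed service time $P_j := \sum_{i=1}^{R_j} E_{j,i}$ satisfies $\mathbb{E}_{\pi_n}[P_j] \le k_n/\lambda$.

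In the clean case where all $R_j$ replicas start service immediately upon arrival (every $b_{j,i}=0$), a direct calculation using the memoryless property of the exponential slowdowns gives the identity $\mathbb{E}_{\pi_n}[P_j]=\mathbb{E}_{\pi_n}[X_j R_j]+k_n/\mu$ (the key cancellation being $\sum_{i=1}^{k_n}\mathbb{E}[S_{(i)}]+(R_j-k_n)\mathbb{E}[S_{(k_n)}]=k_n/\mu$ when computed against the $R_j$ order statistics). Combined with the capacity bound this gives $\mathbb{E}_{\pi_n}[X_j R_j]\le k_n r^*$. Taking expectations in the pathwise bound, using $\mathbb{E}[X_j]=1$ and (still in the parallel case) $\mathbb{E}[S^{R_j}_{(k_n)}\mid X_j,R_j]=\frac{1}{\mu}\sum_{i=1}^{k_n}\frac{1}{R_j-i+1}$, reduces the problem to lower bounding each $\mathbb{E}[X_j/(R_j-i+1)]$. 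An application of Cauchy--Schwarz gives $\mathbb{E}[X_j/(R_j-i+1)]\cdot\mathbb{E}[X_j(R_j-i+1)]\ge\mathbb{E}[X_j]^2=1$, and combining with $\mathbb{E}[X_j(R_j-i+1)]=\mathbb{E}[X_jR_j]-(i-1)\le k_n r^*-i+1$ yields $\mathbb{E}[X_j/(R_j-i+1)]\ge 1/(k_n r^*-i+1)$. Summing over $i=1,\dots,k_n$ produces the claimed lower bound.

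The main obstacle is extending the argument to general admissible policies, which may stagger replica start times (some $b_{j,i}>0$) or choose $R_j$ adaptively based on the observed dynamics. For such policies, neither the identity $\mathbb{E}[P_j]=\mathbb{E}[X_jR_j]+k_n/\mu$ nor the conditional formula for $\mathbb{E}[S^{R_j}_{(k_n)}\mid X_j,R_j]$ holds in general (indeed, one can construct staggered adaptive policies under which $\mathbb{E}[X_jR_j]>k_n r^*$), so the reduction above does not apply directly. Closing this gap will likely require either coupling the general system with an auxiliary parallel-start system of equal or smaller expected processing time, or exploiting the memoryless property of the exponential slowdowns via an optional-stopping/Wald-type argument to decouple the adaptive replica-creation decisions from the underlying slowdown sequence while preserving the combined Cauchy--Schwarz plus capacity inequality.
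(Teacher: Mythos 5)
Your argument is correct for the restricted subclass of policies in which all $R_j$ replicas start service at the job's arrival and $R_j$ is chosen independently of the slowdown sequence: the pathwise bound $W_n\ge X_j(1+S^{R_j}_{(k_n)})$ is valid (since $b_{j,i}\ge 0$ implies componentwise, hence order-statistic, domination), the telescoping identity $\sum_{i=1}^{k_n}\mathbb{E}[S_{(i)}]+(R_j-k_n)\mathbb{E}[S_{(k_n)}]=k_n/\mu$ holds for exponential order statistics, and the Cauchy--Schwarz step cleanly yields the claimed sum. However, as you explicitly acknowledge, this does not prove the theorem, and the gap you identify is genuine. For admissible policies that stagger start times or choose when to launch replicas adaptively based on the observed state, (a) $R_j$ may be measurable with respect to the slowdowns $\{S_{j,i}\}$, so $\mathbb{E}[S^{R_j}_{(k_n)}\mid X_j,R_j]$ is not the order-statistic formula; (b) the identity $\mathbb{E}[P_j]=\mathbb{E}[X_jR_j]+k_n/\mu$ fails because staggered replicas that start late contribute less than $X_j(1+S_{(k_n)})$ to the workload; and (c) the crude pathwise bound throws away exactly the information (the positive offsets $b_{j,i}$) that a staggered policy pays for in delay. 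None of these is easily patched by Wald-type or optional-stopping arguments alone, because the adaptive decisions can condition on precisely the memoryless clocks you would want to decouple.

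The paper closes this gap by a fundamentally different route than the one you propose. Rather than aggregating workload at the granularity of whole jobs, it partitions each job's sojourn into $k_n+1$ \emph{phases}, where phase $i\ge 1$ begins once at least one replica has been in service for $X_j$ units of time and exactly $i-1$ replicas have completed. This decomposition is tailored to the Markovianity of the extended queue state together with the memorylessness of exponential slowdowns: because the departure rate of jobs from phase $i$ equals $\mu/x$ times the expected number of replicas in phase $i$, Little's law pins down $\mathbb{E}[N_n^{(i)}(x)]=\lambda p(x) x n/(k_n\mu)$ for all $i\ge 1$ \emph{regardless of policy and staggering}. A separate Little's law argument at the level of replicas bounds the expected time a typical replica spends in phase $i$, and the crucial comparison lemma (Lemma~\ref{lem:second}) shows, via a hazard-rate domination argument that again exploits memorylessness, that a typical job's expected time in phase $i$ dominates a typical replica's. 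Summing phases and then minimizing over the steady-state server allocation $\tilde p(x)$ (a finite-dimensional convex program, solved at $\tilde p(x)=xp(x)/\eta$) yields the bound for finite task-size supports, with a monotone coupling handling the general case. Notably, the capacity constraint enters the paper's argument as a constraint in this optimization rather than as an input to Cauchy--Schwarz; your Cauchy--Schwarz step and the paper's convex minimization play analogous roles, but the phase/Little's-law scaffolding is what permits the paper to reach all admissible policies. Your conjectured need to ``couple the general system with an auxiliary parallel-start system'' is exactly what the replica-versus-job hazard-rate comparison accomplishes.
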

The proof is given in Appendix \ref{proof:lowerBound}.\\

A special case of interest is when $k_n\to\infty$ as $n\to\infty$, in which case we have the following result.

\begin{corollary}\label{cor:largeK}
  Under the same assumptions as in Theorem \ref{thm:boundGeneral}, and when $k_n\to\infty$ as $n\to\infty$, we have
  \begin{equation}\label{eq:secondLowerBound}
  \liminf\limits_{n\to\infty} \mathbb{E}_{\pi_n}[W_n] \geq 1+ \frac{1}{\mu}\log\left( \frac{\frac{1}{\lambda}-\frac{1}{\mu}}{\frac{1}{\lambda}-\frac{1}{\mu}-1} \right).
  \end{equation}
\end{corollary}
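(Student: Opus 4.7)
The plan is to recognize the finite-$n$ lower bound of Theorem \ref{thm:boundGeneral} as a Riemann sum and pass to the limit. I set $r^*=\frac{1}{\lambda}-\frac{1}{\mu}$, which by Theorem \ref{thm:stability} satisfies $r^*>1$ whenever a stable admissible policy exists, since \eqref{eq:condition} is equivalent to $r^*>1$. Under this notation, inequality \eqref{eq:firstLowerBound} becomes, for every $n$,
\[
\mathbb{E}_{\pi_n}[W_n]\ \geq\ 1+\frac{1}{\mu}\sum_{i=1}^{k_n}\frac{1}{k_n r^* - i + 1}.
\]

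Next I would factor $k_n$ out of the denominator to rewrite the sum as
\[
\sum_{i=1}^{k_n}\frac{1}{k_n r^* - i + 1}\ =\ \frac{1}{k_n}\sum_{i=1}^{k_n}\frac{1}{r^*-\frac{i-1}{k_n}},
\]
which is the left Riemann sum, with uniform mesh $1/k_n$, of the function $f(x)=1/(r^*-x)$ on the interval $[0,1]$. Because $r^*>1$, the function $f$ is continuous (indeed smooth and monotone) on $[0,1]$, so standard Riemann integration theory gives
\[
\lim_{k_n\to\infty}\frac{1}{k_n}\sum_{i=1}^{k_n}\frac{1}{r^*-\frac{i-1}{k_n}}\ =\ \int_0^1\frac{dx}{r^*-x}\ =\ \log\!\left(\frac{r^*}{r^*-1}\right).
\]

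Combining this convergence with the finite-$n$ inequality above, and taking $\liminf_{n\to\infty}$ on both sides (using that the right-hand side actually converges, so its liminf coincides with the limit), yields the stated inequality. I do not expect any substantive obstacle here: the argument reduces to identifying a Riemann sum for a bounded continuous integrand, with the required boundedness on $[0,1]$ being guaranteed precisely by the stability condition $r^*>1$ that Theorem \ref{thm:boundGeneral} already presupposes.
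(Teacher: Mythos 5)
Your proposal is correct and takes the same route the paper implicitly relies on (the paper only remarks that the limit "is easily shown"). Recognizing $\frac{1}{k_n}\sum_{i=1}^{k_n}\frac{1}{r^*-(i-1)/k_n}$ as a left Riemann sum of $1/(r^*-x)$ on $[0,1]$, and noting that the stability condition $r^*>1$ keeps the integrand bounded and continuous, is exactly the intended computation, and your handling of the $\liminf$ is sound.
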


Surprisingly, the lower bounds in equations \eqref{eq:firstLowerBound} and \eqref{eq:secondLowerBound} are independent from the task size distribution and from the number of servers $n$. Moreover, it can be shown to hold even when the arrivals form a renewal process. However, it does depend on the number of tasks per job $k_n$, and on the quantity
\[ r^*=\frac{1}{\lambda}-\frac{1}{\mu}, \]
which also appeared in Theorem \ref{thm:stability}. Using this notation, we can rewrite the lower bound in Corollary \ref{cor:largeK} as
\[ \liminf\limits_{n\to\infty} \mathbb{E}_{\pi_n}[W_n] \geq 1+ \frac{1}{\mu}\log\left( \frac{r^*}{r^*-1} \right), \]
and the one in Theorem \ref{thm:boundGeneral} as
\[ \mathbb{E}_{\pi_n}[W_n] \geq 1 + \frac{1}{\mu}\sum\limits_{i=1}^{k_n} \frac{1}{k_n r^*-i+1}. \]
Note that, if $k_n r^*$ is an integer, then
\[ \frac{1}{\mu}\sum\limits_{i=1}^{k_n} \frac{1}{k_n r^*-i+1} \]
is the expectation of the $k_n$-th order statistic of $k_n r^*$ i.i.d. exponential random variables with mean $1/\mu$. Thus, if an admissible policy creates $k_n r^*$ replicas per job, and they all start their service at the same time, the expected service time of the jobs would match the lower bound on the expected delay given in Theorem \ref{thm:boundGeneral}. However, the expected delay of a typical job under such a policy might be higher due to queueing delays. In fact, it can be checked that such a policy would result in a system that is critically loaded (i.e., it would have a load of $\rho=1$ using typical queueing theory notation).\\

The above observations seem to indicate that no admissible policy could match the lower bound for any finite $n$. However, even if an admissible policy is not optimal for any finite $n$, it may be asymptotically optimal as $n\to\infty$. Such policies could be obtained by following the following design principles.
\begin{enumerate}
  \item Replicas are sent to idle servers whenever possible.
  \item For each job, the number of replicas associated with it ($r_n$) is such that $\frac{r_n}{k_n} \uparrow r^*$ as $n\to\infty$. This forces a heavy-traffic regime through increased replication.
  \item The forced heavy-traffic is light enough to ensure that new replicas can be sent to idle servers with high probability.
\end{enumerate}
In the following subsection, we formally introduce policies designed according to these principles.

\subsection{Asymptotically optimal policies}\label{sec:upperBoundFinite}
In this subsection, we introduce a pair of admissible policies that are asymptotically optimal under a mild technical condition on the moment of the task sizes, and on the integrality of $r^* k_n$. While both policies are stated and can be implemented for any value of $k_n$, one of them appears to only be tractable for the case of $k_n=1$.

\paragraph{Intuition behind the policies}

While the reasoning at the end of the previous subsection assumed that $r^* k_n$ is an integer, this is not always the case. However, we can always create a random number of replicas for each job so that we create $r^* k_n$ replicas per job in average. In particular, let $p_{k_n}\in(0,1]$ be such that
\[ p_{k_n}\lceil r^* k_n \rceil + (1-p_{k_n})(\lceil r^* k_n \rceil -1)= r^* k_n. \]
Then, creating $\lceil r^* k_n \rceil$ replicas per job with probability $p_{k_n}$, and $\lceil r^* k_n \rceil-1$ replicas per job with probability $1-p_{k_n}$, yields an average of $r^* k_n$ replicas per job.

It can be checked that if all replicas associated with the same job start their service at the same time, a policy that creates the number of replicas described above would require all $n$ servers to process the incoming jobs. As a result, such policies would suffer from the same instability problem as the one mentioned in the previous subsection for the case where $r^* k_n$ is an integer: both subsystems would be critically loaded (i.e., they  would have a load of $\rho=1$ using typical queueing theory notation).

In order to obtain stable policies that have the same asymptotic performance as the ones described above, we force the system to be in heavy-traffic by slightly reducing the average number of replicas created per job (while still creating the same number of replicas per job in the limit as $n\to\infty$), as follows.

Let $\alpha\in(1/2,1)$ be a constant. Suppose that $\lceil r^* k_n \rceil$ replicas per job are created with probability $\left( p_{k_n} - 2n^{\alpha-1} \right)^+$, and $\lceil r^* k_n \rceil-1$ replicas per job are created with probability $1-\left( p_{k_n} - 2n^{\alpha-1} \right)^+$. For all $n$ large enough, this yields an average of $r^* k_n - 2n^{\alpha-1}$ replicas per job, which is slightly less than the $r^* k_n$ replicas per job that would critically load the system. It can be checked that if all replicas associated with the same job start their service at the same time, a policy that creates the number of replicas described above would need
\begin{equation}\label{eq:busy1}
 \frac{\lambda n}{k_n} \left( p_{k_n} - 2n^{\alpha-1} \right)^+\left(\lceil r^* k_n \rceil+ \frac{k_n}{\mu} \right)
\end{equation}
servers to handle the jobs that start with $\lceil r^* k_n \rceil$ associated replicas, and
\begin{equation}\label{eq:busy2}
 \frac{\lambda n}{k_n}\left[1- \left( p_{k_n}-2n^{\alpha-1} \right)^+ \right]\left( \lceil r^* k_n \rceil-1+ \frac{k_n}{\mu} \right)
\end{equation}
servers to handle the jobs that start with $\lceil r^* k_n \rceil-1$ associated replicas. Combining these two quantities with the definitions of $p_{k_n}$ and $r^*$, it leads to a total of
\[ n-\frac{2\lambda n^\alpha}{k_n} \]
servers. Thus, there are
\begin{equation}\label{eq:idle}
\frac{2\lambda n^\alpha}{k_n}
\end{equation}
``spare'' servers in the system.

\paragraph{Description of the policies}
We now introduce the two policies. In order to simplify the analysis, for both policies we partition the $n$-server system into two subsystems, one with
\[ n^{(1)}\triangleq \left\lfloor \frac{\lambda n}{k_n} \left( p_{k_n} - 2n^{\alpha-1} \right)^+\left(\lceil r^* k_n \rceil+ \frac{k_n}{\mu} \right) + \frac{\lambda n^{\alpha}}{k_n} \right\rfloor \]
servers, and the other one with
\[ n^{(2)}\triangleq \left\lceil \frac{\lambda n}{k_n}\left[1- \left( p_{k_n}-2n^{\alpha-1} \right)^+ \right]\left( \lceil r^* k_n \rceil-1+ \frac{k_n}{\mu} \right) + \frac{\lambda n^{\alpha}}{k_n} \right\rceil \]
servers. {  Note that the number of servers in the first (respectively, second) subsystem is equal to the number of servers required to process the jobs that start with $\lceil r^* k_n \rceil$ (respectively, $\lceil r^* k_n \rceil-1$) replicas associated with it, given in Equation \eqref{eq:busy1} (respectively, Equation \eqref{eq:busy2}), plus half of the ``spare'' servers given in Equation \eqref{eq:idle}.} Without loss of generality, we assume that $n^{(1)}$ is a multiple of $\lceil r^* \rceil$, and that $n^{(2)}$ is a multiple of $\lceil r^* \rceil-1$. If this is not the case, we can always choose not to use some of the original $n$ servers. When a job arrives to the system, it is sent to the first subsystem with probability $\big(p_{k_n}-2n^{\alpha-1}\big)^+$, and to the second subsystem with probability $1-\big(p_{k_n}-2n^{\alpha-1}\big)^+$. Then, we do one of the following.
\begin{itemize}
\item {\bf Full Replication with Early Cancellation (FREC)}: If at the time of an arrival to the first (second) subsystem there are at least $\lceil r^* k_n \rceil$ (respectively, $\lceil r^* k_n \rceil -1$) idle servers, then $\lceil r^* k_n \rceil$ (respectively, $\lceil r^* k_n \rceil -1$) replicas are created and dispatched to idle servers. Otherwise, replicas are created and dispatched to \emph{all} servers in the subsystem. When $\lceil r^* k_n \rceil$ (respectively, $\lceil r^*k_n \rceil -1$) of these replicas have started their service, all other replicas associated with the same job are cancelled and immediately leave the system. (Note that this is compatible with Assumption \ref{ass:noCancelInService}, as canceled replicas have not started their service.)
\item {\bf Dummy Queues (DQ)}: If at the time of an arrival to the first (second) subsystem there are at least $\lceil r^* k_n \rceil$ (respectively, $\lceil r^* k_n \rceil -1$) idle servers among the first $n^{(1)}-\lambda n^\alpha/2k_n$ (respectively, $n^{(2)}-\lambda n^\alpha/2k_n$) servers in the subsystem, then $\lceil r^* k_n \rceil$ (respectively, $\lceil r^* k_n \rceil -1$) replicas are dispatched to those idle servers. Otherwise, $k_n$ replicas are dispatched uniformly at random among the last $\lambda n^\alpha/2k_n$ servers of the first (respectively, second) subsystem.
\end{itemize}
Note that when there are enough idle servers, both of these policies send all replicas to idle servers. Thus, if the systems are in light enough heavy-traffic, the vast majority of replicas will be sent to idle servers in both cases. We also note that although the FREC policy seems superior, it appears to be tractable only for the case of $k_n=1$.

\paragraph{Main results}

We now present the main results of this subsection.

\begin{theorem}\label{thm:UpperBound1}
 Suppose that $k_n=1$, for all $n$. If
 \[ \lambda < \frac{1}{1+\frac{1}{\mu}}, \]
 then the FREC policy is stable, and
  \begin{equation*}\label{eq:upper1}
   \lim_{n\to\infty} \mathbb{E}[W^s_n]=1+\frac{p_1}{\mu\lceil r^* \rceil} + \frac{1-p_1}{\mu(\lceil r^* \rceil-1)}.
  \end{equation*}
  Furthermore, if $\mathbb{E}\big[X^{2+\epsilon}\big]<\infty$ for some $\epsilon>0$, then
  \[ \lim_{n\to\infty} \mathbb{E}[W^q_n]= 0. \]
\end{theorem}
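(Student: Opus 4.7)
The plan is to exploit the fact that the Bernoulli routing renders the two subsystems probabilistically independent, so I analyze each separately and combine. In each subsystem $i\in\{1,2\}$, every arriving job spawns $r_i \in \{\lceil r^*\rceil, \lceil r^*\rceil-1\}$ replicas, and FREC sends all $r_i$ to distinct idle servers whenever at least $r_i$ idle servers are available (call this the ``good event''). The analysis reduces to establishing three facts: stability; that the probability of the good event tends to $1$; and that conditional on the good event the service and queueing times behave as claimed.

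For stability and the service-time limit, the key quantity is the total expected server occupancy per job. On the good event each of the $r_i$ occupied servers is busy for $X(1 + \min_{j=1}^{r_i} S_j)$, with total expected occupancy $r_i(1 + 1/(r_i\mu)) = r_i + 1/\mu$; in the overflow regime, early cancellations only reduce this. Multiplying by the subsystem arrival rate and using the definitions of $n^{(1)}, n^{(2)}$, the nominal load becomes $\rho^{(i)}_n = 1 - \Theta(n^{\alpha-1})$, leaving expected slack $\Theta(n^\alpha)$ idle servers. Stability for each $n$ then follows by a standard Foster--Lyapunov drift argument on the total remaining work. For the limiting service time, conditional on the good event $\min_j S_j \sim \mathrm{Exp}(r_i\mu)$ is independent of $X$ and $\mathbb{E}[X]=1$, so $\mathbb{E}[W_n^s \mid \text{good}] = 1 + 1/(r_i\mu)$; averaging across the two subsystems with weights $(p_1 - 2n^{\alpha-1})^+ \to p_1$ and its complement yields exactly the claimed formula, once the overflow contribution is absorbed via the uniform bound $\mathbb{E}[W_n^s \mid \text{overflow}] \leq 1 + 1/\mu$ (obtained by dominating the job's service time by that of its first replica to begin service).

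The main technical obstacle is to show $\mathbb{P}(\text{overflow}) \to 0$ fast enough to force $\mathbb{E}[W_n^q] \to 0$, since the load approaches $1$ and queueing delays threaten to blow up. On the good event $W_n^q = 0$, so by H\"older's inequality
\[ \mathbb{E}[W_n^q] = \mathbb{E}[W_n^q \cdot \mathbf{1}_{\text{overflow}}] \leq \bigl(\mathbb{E}[(W_n^q)^{1+\delta}]\bigr)^{1/(1+\delta)} \,\mathbb{P}(\text{overflow})^{\delta/(1+\delta)}, \]
so it suffices to show that $\mathbb{E}[(W_n^q)^{1+\delta}]$ is uniformly bounded in $n$ for some $\delta>0$, and that $\mathbb{P}(\text{overflow}) \to 0$. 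The assumption $\mathbb{E}[X^{2+\epsilon}]<\infty$ (with $\delta = \epsilon/2$) gives the former via standard Kingman-type moment bounds applied to the stationary work in each subsystem, since the per-server workload inherits the $(2+\epsilon)$-th moment of $X$.

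For the overflow probability, I would show that the stationary number of busy servers $B_n^{(i)}$ concentrates around its mean $\bar B_n^{(i)}$ with fluctuations of order $o(n^\alpha)$; since the slack $n^{(i)} - \bar B_n^{(i)}$ is of order $n^\alpha$ and $\alpha > 1/2$, a centered second-moment bound $\mathrm{Var}_{\pi_n}(B_n^{(i)}) = O(n)$ combined with Chebyshev would suffice. This is the hardest step, since $\pi_n$ is not explicit and FREC has state-dependent dispatching. My approach would be to couple FREC with a simpler dominating system --- for instance, one that dispatches replicas uniformly at random across all servers of the subsystem, yielding independent M/G/1-type per-server queues of load $1 - \Theta(n^{\alpha-1})$ --- and transfer variance bounds across the coupling, using that FREC's use of idle servers can only improve matters. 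Alternatively, a direct drift analysis of the Lyapunov function $(B_n^{(i)} - \bar B_n^{(i)})^2$ should yield the same order.
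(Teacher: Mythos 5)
Your proposal takes a genuinely different and far more laborious route than the paper, and the key ideas needed to close it are missing. The paper's proof hinges on a structural observation (Lemma~\ref{lem:startFinish}) that you do not make: because $n^{(1)}$ is chosen to be a multiple of $\lceil r^*\rceil$ (and $n^{(2)}$ a multiple of $\lceil r^*\rceil-1$), and FREC cancels excess replicas the moment $\lceil r^*\rceil$ (resp.\ $\lceil r^*\rceil-1$) of them start service, the number of busy servers in each subsystem is always a multiple of $\lceil r^*\rceil$ (resp.\ $\lceil r^*\rceil-1$), even in the overflow regime. Hence \emph{all} replicas of a given job start and finish simultaneously, regardless of whether the ``good event'' occurs. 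This collapses each subsystem exactly into an M/G/$m_1$ (resp.\ M/G/$m_2$) queue with i.i.d.\ service times $X(1+\min_j S_j)$. Once that reduction is in hand, stability is the classical M/G/$m$ criterion $\rho<1$, the service-time formula is \emph{exact} (no overflow case analysis needed), and $\mathbb{E}[W^q_n]\to 0$ follows in two lines from a uniform explicit bound for the stationary waiting time of an M/G/$m$ queue (Corollary~2 of Gamarnik--Goldberg), which requires precisely $\mathbb{E}[X^{2+\epsilon}]<\infty$ and yields $\mathbb{E}[W^q_n]\in O(1/(n(1-\rho_n)))=O(n^{-\alpha})$.

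Without this reduction, the parts of your argument that you flag as ``the hardest step'' are genuinely unresolved. The Kingman-type moment bound on $\mathbb{E}[(W_n^q)^{1+\delta}]$ is not justified: each subsystem under FREC is a many-server pool with state-dependent dispatching and cancellations, and there is no single-server (or even product-form) structure to which Kingman's bound applies; the relevant tool is exactly the many-server heavy-traffic bound that the M/G/$m$ reduction unlocks. The variance/Chebyshev bound $\mathrm{Var}_{\pi_n}(B_n^{(i)})=O(n)$ is asserted but not derived, and the proposed coupling with a system that dispatches replicas uniformly at random does not obviously transfer it: under random dispatching the replicas of a job no longer start simultaneously, so the cancellation dynamics (and hence the workload processed per job) are altered, and the inequality ``FREC's use of idle servers can only improve matters'' would need to be made precise at the level of the busy-server count, which is nontrivial since FREC \emph{increases} the number of servers touched on overflow events. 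Finally, your bound $\mathbb{E}[W_n^s\mid\text{overflow}]\le 1+1/\mu$ (via dominating by the first replica's service time) is valid but is an artifact of not knowing Lemma~\ref{lem:startFinish}; under FREC the service time on the overflow event is still exactly $X(1+\min_j S_j)$.
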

The proof is given in Appendix \ref{proof:upperBound}.

\begin{theorem}\label{thm:UpperBoundK}
   If
   \[ \lambda < \frac{1}{1+\frac{1}{\mu}}, \]
   then the DQ policy is stable for all $n$ large enough, and:
  \begin{itemize}
    \item [(i)] If $k_n=k$ for all $n$, then
   \begin{equation}\label{eq:upperK}
   \lim_{n\to\infty} \mathbb{E}[W_n^s] = 1+ \frac{1}{\mu}\sum\limits_{i=1}^{k} \left( \frac{p_k}{\lceil r^* k \rceil -i+1} + \frac{1-p_k}{\lceil r^* k \rceil -i} \right).
   \end{equation}
   Furthermore, if $\mathbb{E}\big[X^{2+\epsilon}\big]<\infty$ for some $\epsilon>0$, then
  \[ \lim\limits_{n\to\infty} \mathbb{P}(W_n^q>0) =0. \]
  \item [(ii)] If $k_n\to\infty$ as $n\to\infty$, then
   \begin{equation}\label{eq:upperInf}
   \lim_{n\to\infty} \mathbb{E}[W_n^s] = 1+ \frac{1}{\mu}\log\left( \frac{r^*}{r^*-1} \right).
   \end{equation}
   Furthermore, if $\mathbb{E}\big[X^{2+\epsilon}\big]<\infty$ for some $\epsilon>0$, $k_n\in O\big(n^\beta\big)$ for some $\beta<1/5$, and $\alpha>\max\big\{(2+4\beta)/3,\,(1+5\beta)/2\big\}$, then
  \[ \lim\limits_{n\to\infty} \mathbb{P}(W_n^q>0) =0. \]
  \end{itemize}

\end{theorem}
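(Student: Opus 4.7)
The proof decomposes across the two subsystems. Write $r_n^{(1)}=\lceil r^* k_n\rceil$, $r_n^{(2)}=\lceil r^* k_n\rceil-1$, and let $q_n=(p_{k_n}-2n^{\alpha-1})^+$ be the routing probability to subsystem 1. My plan is to show (a) stability; (b) that in steady state an arriving job finds at least $r_n^{(i)}$ idle servers among the first $n^{(i)}-\lambda n^\alpha/(2k_n)$ ``main'' servers of its subsystem with probability tending to $1$; (c) on this event, compute the service time exactly; and (d) combine (b) and (c) to get both the limit of $\mathbb{E}[W_n^s]$ and the vanishing of $\mathbb{P}(W_n^q>0)$.

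For stability, the construction of $n^{(i)}$ via Equations (\ref{eq:busy1})--(\ref{eq:idle}) makes the total expected work rate arriving at the main portion of each subsystem strictly below its capacity by a margin of order $\lambda n^\alpha/(2k_n)$. Coupling the dummy queues to an artificial system that always routes to them and using a Foster-Lyapunov drift argument, I would show that each subsystem admits a unique invariant distribution for $n$ large enough. For (b), observe that on the event $E_n$ that a job finds $r_n^{(i)}$ idle main servers, all its replicas start simultaneously, so the main portion of each subsystem behaves as an M/G/$\infty$ queue with arrival rate $\lambda n q_n/k_n$ (resp.\ $\lambda n (1-q_n)/k_n$), per-arrival occupancy of $r_n^{(i)}$ servers each held for a mean time $1+1/\mu$ and a variance controlled by $\mathbb{E}[X^{2+\epsilon}]$ together with the exponential slowdowns. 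By a truncation-plus-Chebyshev (or Bennett-type) argument, the number of busy main servers concentrates around its mean $n^{(i)}-\lambda n^\alpha/(2k_n)$ with deviations of strictly smaller order than the buffer $\lambda n^\alpha/(2k_n)$, yielding $\mathbb{P}(E_n^c)\to 0$.

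On $E_n$ the job's service time equals $X(1+S_{(k_n:r_n^{(i)})})$, where $S_{(k_n:r_n^{(i)})}$ is the $k_n$-th order statistic of $r_n^{(i)}$ i.i.d.\ exponentials with mean $1/\mu$. Since $X$ has unit mean and is independent of the slowdowns,
\[ \mathbb{E}\big[W_n^s\mid E_n,\text{subsystem }i\big] = 1 + \frac{1}{\mu}\sum_{j=1}^{k_n} \frac{1}{r_n^{(i)}-j+1}. \]
Averaging over the routing with weights $q_n$ and $1-q_n$, using $q_n\to p_{k_n}$, and bounding the contribution of $E_n^c$ (whose probability vanishes and on which the service time admits a uniform bound via the overall work concentration), I obtain Equation (\ref{eq:upperK}) in case (i). For case (ii), I then recognize $\sum_{j=1}^{k_n} 1/(\lceil r^* k_n\rceil-j+1)$ as a Riemann sum for $\int_{r^*-1}^{r^*} dx/x = \log\!\big(r^*/(r^*-1)\big)$ and pass to the limit, giving Equation (\ref{eq:upperInf}). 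Since $W_n^q>0$ only if the job is routed to the dummy queues (otherwise all replicas begin on idle servers immediately), $\mathbb{P}(W_n^q>0)\leq \mathbb{P}(E_n^c)\to 0$.

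The main obstacle is the sharp concentration in Step (b), especially for case (ii) when $k_n\to\infty$: the deviations of the busy-main-server count must be strictly smaller than the buffer $\lambda n^\alpha/(2k_n)$. This is where the conditions $k_n\in O(n^\beta)$ with $\beta<1/5$ and $\alpha>\max\{(2+4\beta)/3,(1+5\beta)/2\}$ enter: they balance, respectively, the variance of the per-arrival work (which scales with $r_n^{(i)}\asymp k_n$ and depends on $\mathbb{E}[X^{2+\epsilon}]$) and the tail bound for the rare event $E_n^c$, against the buffer size $n^{\alpha-\beta}$. Case (i), where $k_n$ is constant, only needs the milder moment assumption $\mathbb{E}[X^{2+\epsilon}]<\infty$ because the buffer grows like $n^\alpha$ and the per-arrival work variance is $O(1)$, so standard Chebyshev suffices.
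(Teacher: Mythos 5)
Your overall plan matches the paper's high-level structure: split into two subsystems, couple each main pool to an infinite-server process, argue concentration so that an arriving job finds enough idle main servers with high probability, compute the service time as an order statistic on that event, and kill the bad event's contribution using the rate conditions. However, there are errors in the quantitative accounting and a genuine structural gap in the concentration step.

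First, the occupancy accounting is wrong. A job in the large pool that starts $\lceil r^* k_n\rceil$ replicas and stops at $k_n$ completions has expected total server time $\lceil r^* k_n\rceil + k_n/\mu$, so the per-replica mean holding time is $1+k_n/(\mu\lceil r^* k_n\rceil)$, not $1+1/\mu$. With the correct accounting, the mean busy-main-server count sits near $n^{(i)}-\lambda n^\alpha/k_n$, whereas $n^{(i)}-\lambda n^\alpha/(2k_n)$ is the main-pool \emph{capacity}; the buffer between the two is $\lambda n^\alpha/(2k_n)$. As written, your concentration claim (deviations smaller than the buffer around $n^{(i)}-\lambda n^\alpha/(2k_n)$) would only give that the busy count rarely exceeds $n^{(i)}$, which is not the threshold the dispatch rule requires (you need it to rarely exceed $n^{(i)}-\lambda n^\alpha/(2k_n)-\lceil r^*k_n\rceil$).

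Second, and more seriously, you apply Chebyshev to the busy-main-server count as a single scalar. But each job occupies a stochastically decreasing number of servers as its replicas finish, so the busy count is a shot-noise process whose variance is not immediate, especially when $k_n\to\infty$. The paper's key device is to track, for each $i\leq k_n$, the count $Q^{(1)}_i$ of jobs with fewer than $i$ completed replicas, express the busy count as $\sum_{i<k_n}Q^{(1)}_i + (\lceil r^*k_n\rceil-k_n+1)Q^{(1)}_{k_n}$, union-bound across $i$, and tail-bound each $Q^{(1)}_i$ by dominating it (via the M/G/$\infty$ process) by an M/G/$m$ queue with $1-\rho\in\Theta(n^{\alpha-1}/k_n)$, to which the cited M/G/$m$ tail bound (which is where $\mathbb{E}[X^{2+\epsilon}]<\infty$ actually enters) applies. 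Your proposal skips this decomposition entirely; note also that if you instead stop at the M/G/$\infty$ coupling and use the exact Poisson steady-state law, you would not need the moment condition for the concentration at all, so invoking it there is inconsistent. Finally, the bookkeeping is only gestured at: the per-$i$ tail bound gives $\mathbb{P}(E_n^c)\in O(k_n^4/n^{2\alpha-1})$; dummy-pool stability needs $k_n^4/n^{3\alpha-2}\to0$ (giving $\alpha>(2+4\beta)/3$); and the dummy-pool service-time contribution $\mathbb{P}(E_n^c)\cdot k_n(1+1/\mu)\in O(k_n^5/n^{2\alpha-1})$ must vanish (giving $\alpha>(1+5\beta)/2$ and, since $\alpha<1$, $\beta<1/5$). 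These derivations are needed to justify the hypotheses of the theorem.
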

The proof is given in Appendix \ref{proof:upperBound2}.

\begin{remark}
Recall that if $k_n=k$ for all $n$, then Theorem \ref{thm:boundGeneral} gives us the lower bound
\begin{equation}\label{eq:lowerK}
 \mathbb{E}[W^s_n]\geq 1+ \frac{1}{\mu}\sum\limits_{i=1}^{k} \frac{1}{ r^* k -i+1}.
\end{equation}
In general, there is a small gap between this lower bound and the asymptotic service time of the DQ policy (Equation \eqref{eq:upperK}). However, when $r^* k$ is an integer, there is no gap. This is depicted in Figure \ref{fig:upperVsLower}, where we plot the expressions in equations \eqref{eq:lowerK} and \eqref{eq:upperK} as functions of { $r^*=1/\lambda-1/\mu$ when we keep $\mu$ and $k$ fixed and we vary $\lambda$.}

\begin{center}
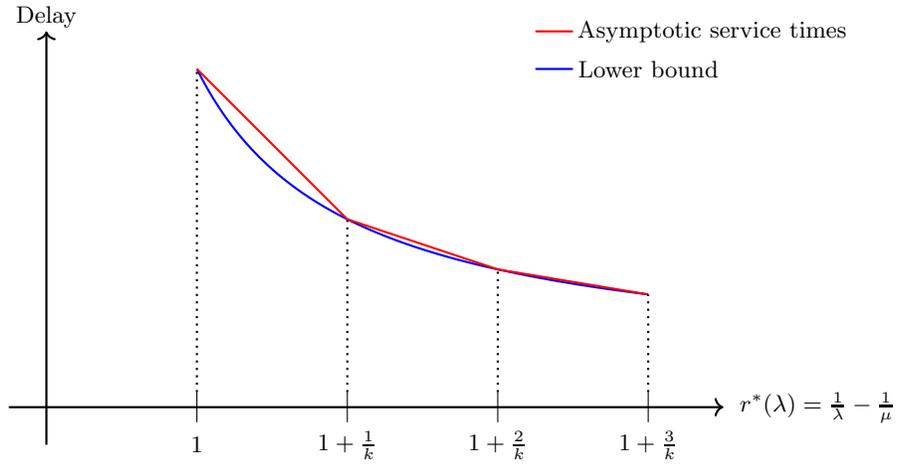
\begin{figure}[ht!]
  \begin{tikzpicture}[scale=1]
    \draw [->, thick] (-0.5,0) -- (9,0);
    \draw (10.25,0) node {$r^*(\lambda)=\frac{1}{\lambda}-\frac{1}{\mu}$};
    \draw [->, thick] (0,-0.5) -- (0,5);
    \draw (0,5.2) node {Delay};
    \draw [smooth,samples=100,domain=2:8,color=blue,thick] plot(\x,1/2+8/\x);
    \draw [color=red,thick] (2,1/2+8/2) -- (4,1/2+8/4) -- (6,1/2+8/6) -- (8,1/2+8/8);

    %\draw (-0.2,0.5) -- (0.2,0.5);
    %\draw (-0.2,1) -- (0.2,1);
    %\draw (-0.2,1.5) -- (0.2,1.5);
    %\draw (-0.2,2) -- (0.2,2);
    %\draw (-0.2,2.5) -- (0.2,2.5);
    %\draw (-0.2,3) -- (0.2,3);
    %\draw (-0.2,3.5) -- (0.2,3.5);
    %\draw (-0.2,4) -- (0.2,4);
    %\draw (-0.2,4.5) -- (0.2,4.5);

    \draw (2,-0.2) -- (2,0.2);
    \draw (4,-0.2) -- (4,0.2);
    \draw (6,-0.2) -- (6,0.2);
    \draw (8,-0.2) -- (8,0.2);

    \draw (2,-0.5) node {$1$};
    \draw (4,-0.5) node {$1+\frac{1}{k}$};
    \draw (6,-0.5) node {$1+\frac{2}{k}$};
    \draw (8,-0.5) node {$1+\frac{3}{k}$};

    \draw [color=red,thick] (6.5,5) -- (7,5);
    \draw (8.85,5) node {Asymptotic service times};
    \draw [color=blue,thick] (6.5,4.5) -- (7,4.5);
    \draw (8,4.5) node {Lower bound};

    \draw [dotted,thick] (2,0.2) -- (2,1/2+8/2);
    \draw [dotted,thick] (4,0.2) -- (4,1/2+8/4);
    \draw [dotted,thick] (6,0.2) -- (6,1/2+8/6);
    \draw [dotted,thick] (8,0.2) -- (8,1/2+8/8);
  \end{tikzpicture}
  \label{fig:upperVsLower}
  \caption{The asymptotic service time of the DQ policy vs. the corresponding lower bound, as functions of $r^*(\lambda)$.}
\end{figure}
\end{center}
On the other hand, if $k_n\to\infty$ as $n\to\infty$, it is easily shown that the lower bound, as $n\to\infty$, is
\[ \lim\limits_{n\to\infty} \left[ 1+ \frac{1}{\mu}\sum\limits_{i=1}^{k_n} \frac{1}{ r^* k_n -i+1} \right] = 1+ \frac{1}{\mu}\log\left( \frac{r^*}{r^*-1} \right), \]
which coincides with the asymptotic service time of the DQ policy given in Equation \eqref{eq:upperInf}.

\end{remark}

\begin{remark}
  In both theorems, the queueing delay $W_n^q$ converges to zero as $n\to\infty$. However, while in Theorem \ref{thm:UpperBound1} it converges to zero in expectation, in Theorem \ref{thm:UpperBoundK} it converges to zero in probability. This weaker result is an artifact of our analysis, and we conjecture that convergence in expectation holds in all cases.
\end{remark}

{
\begin{remark}
  We conjecture that even simpler policies such as Join-Idle-Queue \cite{joinIdleQueue}, are also asymptotically optimal. However, Join-Idle-Queue is surprisingly hard to analyze in this setting.
\end{remark}

\subsection{Main takeaways}
In this section, we obtained a universal lower bound on the expected delay of a typical job, and designed and analyzed dispatching/replication policies that match the universal lower bound, at least in the limit as $n\to\infty$, as long as $r^*k_n$ either is an integer, or diverges as $n\to\infty$. This establishes both the asymptotic optimality of our policies, as well as the tightness of our lower bound, in those cases. It is unclear whether our policies are always asymptotically optimal. However, there appears to be some slack in certain inequalities in the proof of the delay lower bound (Theorem \ref{thm:boundGeneral}), which suggests that our policies may indeed be always asymptotically optimal.

It is worth noting that the average number of replicas created per task under our policies is always equal to $r^*=1/\lambda-1/\mu$. In particular, this is independent of the task size distribution, of the realization of the task sizes, and of the current state of the queues. Combined with relatively simple dispatching rules for the replicas, this makes the aforementioned policies simple enough to be practical.

Finally, note that the lower bounds obtained in Theorem \ref{thm:boundGeneral} and Corollary \ref{cor:largeK} are increasing in the arrival rate $\lambda$. Since the lower bounds ignore possible queueing delays, the fact that they are increasing in $\lambda$ is solely due to the fact that higher arrival rates allow less overall replication (i.e., higher arrival rates lead to smaller $r^*$). Thus, this arrival rate versus delay performance tradeoff is a fundamental limitation of this kind of systems, and not a reflection of the usual effect of congestion on delays.

%\begin{center}
%\begin{figure}[ht!]
%  \begin{tikzpicture}[scale=1.25]
%    \draw [->, thick] (-0.5,0) -- (11,0);
%    \draw (11.2,0) node {$\lambda$};
%    \draw [->, thick] (0,-0.5) -- (0,6);
%    \draw (0,6.2) node {Delay};
%    \draw [smooth,samples=100,domain=0:10,color=blue,thick] plot(\x,{1+(\x/20)/(1-\x/20)});
%    \draw [smooth,samples=100,domain=0:9.96,color=red,thick] plot(\x,{1+ln((1-\x/20)/(1-2*\x/20))});
%
%
%%    \draw (2,-0.2) -- (2,0.2);
%%    \draw (4,-0.2) -- (4,0.2);
%%    \draw (6,-0.2) -- (6,0.2);
%%    \draw (8,-0.2) -- (8,0.2);
%    \draw (10,-0.2) -- (10,0.2);
%
%%    \draw (2,-0.5) node {0.1};
%%    \draw (4,-0.5) node {0.2};
%%    \draw (6,-0.5) node {0.3};
%%    \draw (8,-0.5) node {0.4};
%    \draw (10.2,-0.7) node {$\displaystyle{\left(1+\frac{1}{\mu}\right)^{-1}}$};
%
%    \draw [color=red,thick] (0.5,5) -- (1,5);
%    \draw (2.9,5) node {Lower bound for $k_n\to\infty$ as $n\to\infty$};
%    \draw [color=blue,thick] (0.5,4.5) -- (1,4.5);
%    \draw (2.8,4.5) node {Lower bound for $k_n=k$};
%
%    \draw [dotted,thick] (10,0.2) -- (10,6);
%
%  \end{tikzpicture}
%  \label{fig:uppers}
%  \caption{Lower bounds on the expected delay of a typical job.}
%\end{figure}
%\end{center}
%
%
%In particular, when $\lambda\uparrow1/(1+1/\mu)$, the lower bound converges to
%\[ 1+ \frac{1}{\mu}\sum\limits_{i=1}^{k_n} \frac{1}{k_n -i+1}, \]
%which is the service time that would be obtained if no more than $k_n$ replicas were made. When $k_n\to\infty$ as $n\to\infty$, this diverges as $n\to\infty$.

}

\section{General size-based slowdowns}\label{sec:sizeBased}
In this section, we introduce a new model for the slowdowns, and obtain asymptotically optimal policies within a somewhat restricted set of admissible policies.\\

For certain applications, the assumption that the slowdown  is independent from the task size is not realistic. Indeed, if replicas are very large, one can argue that the server side variability is ``averaged out'', which then results in less variable service times. By the same argument, if replicas are very small, it is more likely for them to complete their service within a time period when the server is atypically slow or fast, and service times become more variable. This intuition is somewhat validated by the fact that the higher variability of smaller tasks is widely observed in cloud computing systems operated by Facebook and Microsoft \cite{stoicaClones,stoicaIdleResources}.

In order to model these size-based slowdowns, we introduce an appropriate dependence structure between the slowdowns and the replicas' sizes.

\begin{assumption}\label{ass:sizeBasedSlowdowns}
For $r\geq 1$ and for $i\leq r$, let $S_{[i:r]}$ be the $i$-th order statistic of slowdowns $S_{1},\dots,S_{r}$ corresponding to a single job with replicas of size $X$. We assume the following.
\begin{enumerate}
\item For all $r \geq 1$, and all $x\geq 0$, we have
\[ \mathbb{E}\big[S_r \,\big|\, X=x \big]=\frac{1}{\mu}. \]
\item For all $r\geq 2$ and $i\leq r$, the expression
\[ \mathbb{E}\big[S_{[i:r]}\,\big|\, X=x\big] - \frac{1}{\mu} \]
is either increasing or decreasing in $x$. Furthermore,
\[ \lim\limits_{x\to\infty} \mathbb{E}\big[S_{[i:r]}\,\big|\, X=x\big] = \frac{1}{\mu}, \]
for all $r\geq 2$ and $i\leq r$.
\item For all $r\geq 1$ and $i\leq r$, the expression
\[ \mathbb{E}\big[S_{[i:r]}\,\big|\, X=x \big] - \mathbb{E}\big[ S_{[i:r+1]}\,\big|\, X=x\big] \]
is decreasing in $x$.
\end{enumerate}
\end{assumption}

Part 1 states that the expectation of the slowdowns does not depend on the task sizes. Part 2 reflects the premise that bigger tasks observe less variable slowdowns, and thus their order statistics converge monotonically to their mean. For instance, we might be dealing with a situation where the variance of the slowdowns converges to $0$ as $x\to\infty$. Finally, Part 3 implies that the improvement brought by an extra replica is smaller for larger tasks.

{
\begin{example}
  Suppose that the slowdowns have a Gamma distribution with shape parameter $x$, and rate parameter $\mu/x$. It is straightforward (but tedious) to check that these slowdowns satisfy Assumption \ref{ass:sizeBasedSlowdowns}.
\end{example}
}

\subsection{Block policies}
In order to keep the problem tractable, we restrict ourselves to a subset of the admissible policies introduced in Subsection \ref{sec:assumptions}, those that satisfy the following assumption.

\begin{assumption}[Block policy]\label{ass:block}
All the replicas associated with the same job start their service at the same time.
\end{assumption}
For the case $k_n=1$, this assumption can be satisfied by sending replicas to different queues in a careful way so that it is guaranteed that they will start their service at the same time. An example of this is the FREC policy introduced in Subsection \ref{sec:upperBoundFinite}, which is shown to be a Block policy in Lemma~\ref{lem:startFinish}. On the other hand, Assumption \ref{ass:block} excludes policies where the start times of the replicas are staggered in time, which could yield better performance in some cases.\\% as shown in the following example.

%\begin{example}\label{ex:stability}
%Consider the case $k_n=1$, and where jobs are all of size one, deterministically. Suppose that the slowdowns can take a large value $m-1$, with a small probability $p\triangleq 1/(m-1)$, or the value of $0$, with a large probability $1-p$. Consider a policy that only creates three replicas, which start their service at the same time. In this case, the expected service time of a job is
%  \[ (1-p^3)+m p^3 = 1 + \Theta\left( \frac{1}{m^2} \right), \]
%  and the total expected server time is
%  \[ 3(1-p^3)+ 3 m p^3 = 3 \Theta\left( \frac{1}{m^2} \right). \]
%  On the other hand, consider a policy that staggers the three replicas, so that their respective service starts at times $0$, $1$, and $2$. In this case, the expected service time of a job can be calculated to be
%  \[ (1-p)+2p(1-p)+3p^2(1-p)+mp^3 = 1+ \Theta\left( \frac{1}{m} \right), \]
%  and the total expected server time is
%  \[ (1-p)+3p(1-p)+6p^2(1-p)+(3m-1-2)p^3 = 1 + \Theta\left( \frac{1}{m} \right). \]
%  This suggests that, when $m$ is large, staggering replicas in time allows us to obtain approximately the same delay performance, using significantly less server time.
%\end{example}

%Note that such an example is not possible when we have exponential slowdowns as in Section \ref{sec:exponential}. This is why in the previous section we obtained asymptotically optimal policies without staggering the start times.\\

Unfortunately, analyzing the class of general admissible policies appears to be difficult. For this reason, we restrict ourselves to Block policies, which are easier to analyze.

\subsection{Delay lower bound}
In this subsection, we present a straightforward lower bound on the expected delay of a typical job under Block policies, as the solution of a minimization problem. In order to do this, we consider a relaxation of the problem in which we have an infinite number of servers, but where there can only be $n$ busy servers in expectation, in steady-state.\\

First, note that the delay of a typical job is trivially lower bounded by the service time of a typical job. Since all replicas associated with the same job start their service at the same time (Assumption \ref{ass:block}), and since no replica with elapsed time can be prematurely cancelled (Assumption \ref{ass:noCancelInService}), the service time of a typical job under a Block policy is completely determined by the (possibly random) number of replicas that start their service. In particular, the service time of a typical job is determined by a (policy specific) measurable function $p:\mathbb{R}_+\to[0,1]^\infty$, where $p_r(x)$ is the probability that a typical job with tasks of size $x$ has $r$ replicas associated with it that get created and start their service. Using this, we obtain a lower bound on the expected delay of a typical job under a Block policy, as the solution of an optimization problem.

\begin{lemma}\label{lem:sizeBasedLowerBound}
Fix some $n$, and consider an admissible policy (i.e., a policy for which the process $\mathcal{Q}_n(\cdot)$ satisfies assumptions \ref{ass:markovianity} and \ref{ass:noCancelInService}), that satisfies Assumption~\ref{ass:block}, and under which the process $\mathcal{Q}_n(\cdot)$ admits at least one invariant probability measure $\pi_n$. Then, under Assumption \ref{ass:sizeBasedSlowdowns}, we have
\begin{align}
    \mathbb{E}_{\pi_n}[W_n]\geq \inf_{p\in \mathcal{P}_{k_n}} \quad &\int\limits_0^\infty x \sum\limits_{r=k_n}^\infty p_r(x)\Big(1+\mathbb{E}\big[ S_{[k_n:r]} \,\big|\, X=x \big] \Big) d\mathbb{P}_X(x) \label{eq:problem1} \\
    s.t.\quad & \frac{\lambda}{k_n} \int\limits_0^\infty x \sum\limits_{r=k_n}^\infty p_r(x) \Big[ r + (r-k_n)\mathbb{E}\big[S_{[k_n:r]}\,\big|\, X=x\big] \label{eq:problem2} \\
     &\qquad\qquad\quad \left. + \sum\limits_{i=1}^{k_n} \mathbb{E}\big[ S_{[i:r]}\,\big|\, X=x \big] \right] d\mathbb{P}_X(x) \leq 1, \nonumber
\end{align}
where $\mathbb{P}_X$ is the distribution of the task sizes, and $\mathcal{P}_{k_n}$ is the set of measurable functions from $\mathbb{R}_+$ to the infinite-dimensional simplex
\[ P_{k_n} \triangleq \left\{p\in[0,1]^\infty : \sum\limits_{r=k_n}^\infty p_r=1\right\}. \]
\end{lemma}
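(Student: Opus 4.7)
The plan is to bound $\mathbb{E}_{\pi_n}[W_n]\geq \mathbb{E}_{\pi_n}[W_n^s]$ (since the queueing delay is non-negative) and then identify the policy-induced conditional distribution of the number of replicas that actually start service, given the task size, as a feasible point of the optimization problem \eqref{eq:problem1}--\eqref{eq:problem2} whose objective value equals $\mathbb{E}_{\pi_n}[W_n^s]$. The two substantive tasks are (i) rewriting $\mathbb{E}_{\pi_n}[W_n^s]$ in the integral form of \eqref{eq:problem1}, and (ii) deriving the server-capacity constraint \eqref{eq:problem2} from $\pi_n$-stationarity.

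For (i), Assumption \ref{ass:block} guarantees that all replicas of a given job start service simultaneously, and Assumption \ref{ass:noCancelInService} guarantees that none of those started replicas is cancelled before $k_n$ of them complete, so the random number $R$ of replicas that actually enter service is well defined with $R\geq k_n$. Given $X=x$ and $R=r$, Assumption \ref{ass:basicSlowdown} makes the $r$ in-service slowdowns i.i.d.\ with c.d.f.\ $F_x$ (the policy cannot observe slowdowns prior to service, so selecting which replicas to start does not bias their conditional distribution), so the service time of a typical job equals $X(1+S_{[k_n:R]})$. Setting $p_r(x)\triangleq \mathbb{P}_{\pi_n}(R=r\mid X=x)$ and applying the tower property yields exactly \eqref{eq:problem1}.

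For (ii), I would compute the total server time consumed by a typical job with tasks of size $x$ and $R=r$: the $k_n$ earliest-finishing replicas contribute $x(1+S_{[i:r]})$ each for $i=1,\ldots,k_n$, and the remaining $r-k_n$ replicas are cancelled at the instant the $k_n$-th one finishes, each having used $x(1+S_{[k_n:r]})$ of server time. Summing gives per-job occupation $x\bigl[r+\sum_{i=1}^{k_n}S_{[i:r]}+(r-k_n)S_{[k_n:r]}\bigr]$. By Little's law in the $H=\lambda G$ form for stationary processes (with PASTA identifying the Palm and time-stationary distributions for Poisson arrivals), the $\pi_n$-expected number of busy servers equals $(\lambda n/k_n)$ times the Palm expectation of the above quantity, and this is trivially bounded above by $n$. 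Dividing by $n$ and conditioning on $X$ then on $R$ gives precisely \eqref{eq:problem2}.

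The main obstacle I anticipate is the rigorous justification of (ii) without assuming uniqueness or ergodicity of $\pi_n$: one must invoke a Palm-calculus version of Little's law rather than a long-run time-average. A clean route starts from the pathwise bound $B(u)\leq n$ on the number of busy servers, integrates in time, and applies the $H=\lambda G$ identity on the marked point process of Poisson arrivals, which holds under stationarity alone. The remaining assembly---combining \eqref{eq:problem1}, the constraint \eqref{eq:problem2}, and the feasibility of the policy-induced $p$ for the program---yields $\mathbb{E}_{\pi_n}[W_n]\geq \mathbb{E}_{\pi_n}[W_n^s]\geq \inf_{p\in\mathcal{P}_{k_n}}\{\cdots\}$, completing the proof.
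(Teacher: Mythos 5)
Your proof is correct and takes essentially the same approach as the paper: both lower bound the delay by the service time, identify the policy-induced conditional distribution of the number of started replicas as a feasible point of the program, compute the service time and per-job server occupation by order statistics of the i.i.d.\ conditional slowdowns, and close with Little's law on the number of busy servers. Your note about invoking the $H=\lambda G$/Palm form of Little's law to avoid relying on ergodicity is a refinement the paper glosses over, but it does not change the structure of the argument.
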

The proof is given in Appendix \ref{app:problemFormulation}.\\

{
The optimization problem defined by equations \eqref{eq:problem1} and \eqref{eq:problem2} corresponds to the minimization of the service time of a typical job over all possible distributions $p$ for the number of replicas that are created and start service, subject to the constraint that the average number of busy servers must be at most $n$.
}

In general, there is no closed form solution for the optimization problem in Lemma \ref{lem:sizeBasedLowerBound}. { However, in the following subsection we will derive some properties of optimal solutions in order to design delay optimal policies and understand their behavior.

\subsection{Properties of optimal solutions}

If $p^*$ is a feasible solution of the optimization problem defined by equations \eqref{eq:problem1} and \eqref{eq:problem2} that attains the minimum, we are interested in characterizing how it depends on $x$, and $k_n$.

To begin with, depending on the parameters of the system, the optimization problem defined by equations \eqref{eq:problem1} and \eqref{eq:problem2} may not have feasible solutions. A sufficient condition for the existence of feasible solutions is established in the following lemma.}

\begin{lemma}\label{lem:feasibility}
  If
  \begin{equation}\label{eq:stability2}
   \lambda\leq \frac{1}{1+\frac{1}{\mu}},
  \end{equation}
  then the optimization problem defined by equations \eqref{eq:problem1} and \eqref{eq:problem2} has a finite minimum, which is attained.
\end{lemma}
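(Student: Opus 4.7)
My plan is to split the proof into two parts. First I would exhibit a feasible solution in order to show the feasible set is nonempty and the infimum is finite, then use a compactness argument to show attainment.

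For feasibility, consider the candidate $p^\circ$ defined by $p^\circ_{k_n}(x) \equiv 1$ (create exactly $k_n$ replicas for every job, regardless of its size). Under this choice the $(r-k_n)$-term in the constraint \eqref{eq:problem2} vanishes, and since $\sum_{i=1}^{k_n} \mathbb{E}\big[S_{[i:k_n]}\,\big|\,X=x\big] = \sum_{j=1}^{k_n} \mathbb{E}\big[S_j\,\big|\,X=x\big] = k_n/\mu$ by Assumption \ref{ass:sizeBasedSlowdowns}(1) (the order statistics sum to the sum of the original slowdowns), the LHS of \eqref{eq:problem2} reduces to $\lambda(1 + 1/\mu)\mathbb{E}[X] = \lambda(1+1/\mu) \leq 1$, which holds by hypothesis. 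The objective \eqref{eq:problem1} at $p^\circ$ is non-negative and, using $\mathbb{E}\big[S_{[k_n:k_n]}\,\big|\,X=x\big] \leq \sum_i \mathbb{E}\big[S_{[i:k_n]}\,\big|\,X=x\big] = k_n/\mu$, is bounded above by $1 + k_n/\mu$, so the infimum is finite.

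For attainment, I would reformulate the problem in the space of probability measures. To each $p \in \mathcal{P}_{k_n}$ associate the measure $\tilde\mu_p(dx, dr) := x\,p_r(x)\,d\mathbb{P}_X(x)$ on $\mathbb{R}_+ \times \{k_n, k_n+1, \dots\}$; since $\mathbb{E}[X]=1$ this has unit mass, and its $x$-marginal is the fixed probability measure $x\,d\mathbb{P}_X(x)$. Both the objective and the constraint become linear functionals of $\tilde\mu_p$. Given a minimizing sequence $\{p^{(m)}\}$, consider the corresponding sequence $\{\tilde\mu_m\}$. Tightness of the $x$-marginals is automatic (they all coincide), and the constraint forces $\int r\,d\tilde\mu_m \leq k_n/\lambda$ uniformly in $m$, giving tightness on the $r$-coordinate. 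Prokhorov's theorem then yields a subsequential weak limit $\tilde\mu^*$, which I would disintegrate against its $x$-marginal to extract a candidate $p^* \in \mathcal{P}_{k_n}$.

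The main obstacle is passing to the limit in the integrands $\mathbb{E}\big[S_{[i:r]}\,\big|\,X=x\big]$, which are not guaranteed to be continuous in $x$. The key observation is that Assumption \ref{ass:sizeBasedSlowdowns}(2) makes each of them monotone in $x$ for fixed $r$, hence continuous outside a countable set. Combining this with Portmanteau (applied at continuity points, handling the countably many jump points by sandwiching with continuous functions from above and below) together with Fatou's lemma for the unbounded $r$-dependent terms, whose integrands are non-negative, should yield simultaneously that $\tilde\mu^*$ is feasible and that its objective value is at most $\liminf_j J(p^{(m_j)}) = J^*$; feasibility of $\tilde\mu^*$ then forces equality, completing the argument.
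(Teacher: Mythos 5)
Your feasibility-and-finiteness step is the same argument the paper uses: $p_{k_n}(x)\equiv 1$ is feasible because the constraint's left-hand side collapses to $\lambda(1+1/\mu)\mathbb{E}[X]\leq 1$, and the objective is bounded between $1$ and $1+k_n/\mu$. Your attainment step, however, is a genuinely different route. The paper works at the level of the value function of the perturbed infinite linear program: it checks the core condition $0\in\operatorname{core}\big[g(\mathcal{P}_{k_n})+\mathbb{R}_+\big]$, deduces relative continuity and a nonempty subdifferential at $0$, and then cites an abstract attainment theorem from infinite linear programming. You instead apply the direct method, lifting $p$ to a measure $\tilde\mu_p(dx,dr)=x\,p_r(x)\,d\mathbb{P}_X(x)$ and using Prokhorov plus lower semicontinuity. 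The latter is more hands-on and would, if it closed, also give sequential compactness of minimizing sequences rather than bare attainment; the former is shorter, stays at the functional-analytic level, and makes no continuity demand at all on $x\mapsto\mathbb{E}\big[S_{[i:r]}\,\big|\,X=x\big]$.

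That last point is exactly where your argument has a gap. Assumption \ref{ass:sizeBasedSlowdowns}(2) gives you only monotonicity of $x\mapsto\mathbb{E}\big[S_{[i:r]}\,\big|\,X=x\big]$, hence at most countably many jumps for each $(i,r)$ --- but nothing forces those jump points to be $\mathbb{P}_X$-null, and the paper does not restrict $\mathbb{P}_X$ to be non-atomic in this lemma (Theorem \ref{thm:sizeBased}(ii) treats non-atomicity as an extra hypothesis). If $\mathbb{P}_X$ has an atom at a jump point $x_0$ of the objective integrand $g(x,r)=1+\mathbb{E}[S_{[k_n:r]}\,|\,X=x]$, then the set $\{x_0\}\times\{r\}$ has positive $\tilde\mu^*$-mass, and the sandwich $g^-\leq g\leq g^+$ by the left- and right-continuous envelopes does not close there: you obtain $\int g^- \, d\tilde\mu^*\leq\liminf_m\int g\,d\tilde\mu_m\leq\limsup_m\int g\,d\tilde\mu_m\leq\int g^+\,d\tilde\mu^*$, but $\int g^-\,d\tilde\mu^*<\int g\,d\tilde\mu^*$ when the atom sits at a jump, so you cannot conclude $\int g\,d\tilde\mu^*\leq\liminf_m\int g\,d\tilde\mu_m$. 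The same issue afflicts the constraint integrand, which you also need to be lower semicontinuous (or modify on a $\tilde\mu^*$-null set) to get feasibility of the limit. To repair the argument you would need either to impose continuity of the conditional order-statistic means in $x$, to assume $\mathbb{P}_X$ non-atomic, or to handle the atomic part of $\mathbb{P}_X$ separately (for instance by extracting, via a diagonal argument, a pointwise-convergent subsequence of the conditional laws $p^{(m)}(x_0)\in P_{k_n}$ at each atom $x_0$, where $P_{k_n}$ is weak-$*$ compact under the moment bound from the constraint). None of this is needed in the paper's duality-based proof.
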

The proof is given in Appendix \ref{app:feasibility}.\\

Note that this sufficient condition for feasibility does not depend on $n$, $k_n$, nor on the task size distribution.

\begin{remark}
  Surprisingly, the condition of Equation \eqref{eq:stability2}  for the existence of a feasible solution is the same as the one in Theorem \ref{thm:stability} for the existence of a stable policy, which concerned the model with independent exponential slowdowns. However, unlike Theorem \ref{thm:stability}, this condition may not be necessary under this model.
\end{remark}

\begin{remark}
Note that any feasible point $\tilde{p}$ such that $\tilde{p}(x)=p^*(x)$ for almost every $x$ (with respect to $\mathbb{P}_X$) is also optimal. Therefore, any necessary properties of the optimal solutions are only true almost everywhere with respect to $\mathbb{P}_X$.
\end{remark}

%
%\subsubsection{Case $k_n>1$}
%For the general case of $k_n>1$, we can still conclude that smaller jobs should be replicated more than very large jobs. This is formalized in the following result.

{ The following result states some properties of the optimal solutions for arbitrary values of $k_n$.}

\begin{theorem}\label{thm:sizeBasedK}
Every feasible point $p$ that achieves the minimum in the problem defined by equations \eqref{eq:problem1} and \eqref{eq:problem2} is equal almost everywhere (with respect to $\mathbb{P}_X$) to a point $p^*$ that satisfies the following:
\begin{itemize}
  \item [(i)] $p^*_{k_n}(x) = 1$ for all $x$ large enough.
  \item [(ii)] There exists $\overline{r}_{k_n}\in O(k_n)$ such that $p_r^*(x)=0$ for all $r\geq \overline{r}_{k_n}$ and $x\geq 0$.
\end{itemize}
\end{theorem}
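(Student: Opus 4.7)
The plan is to attack this through Lagrangian duality. Since both the objective and the constraint in equations \eqref{eq:problem1}--\eqref{eq:problem2} are linear in $p$, the feasible region (measurable selections into a simplex) is convex, and Lemma \ref{lem:feasibility} guarantees attainment with strict feasibility available whenever $\lambda$ is below the boundary, strong duality applies. Hence there exists $\theta^*\geq 0$ such that every optimal $p^*$ minimizes $L(p,\theta^*)=\int_0^\infty x\sum_{r=k_n}^\infty p_r(x)\alpha_r(x)\,d\mathbb{P}_X(x)$, where
\[ \alpha_r(x):=\bigl(1+\mathbb{E}[S_{[k_n:r]}\mid X=x]\bigr)+\frac{\theta^*\lambda}{k_n}\left(r+(r-k_n)\mathbb{E}[S_{[k_n:r]}\mid X=x]+\sum_{i=1}^{k_n}\mathbb{E}[S_{[i:r]}\mid X=x]\right). \]
Observe that $\theta^*>0$: otherwise $\alpha_r$ would be nonincreasing in $r$ (by the stochastic ordering of order statistics) and no feasible solution could be optimal. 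Because $L(\cdot,\theta^*)$ is separable in $x$ and linear in $p(x)$, the pointwise minimization over the simplex $P_{k_n}$ forces $p^*(x)$ to be supported on $R^*(x):=\arg\min_{r\geq k_n}\alpha_r(x)$ for $\mathbb{P}_X$-a.e.\ $x$. The proof of both (i) and (ii) then reduces to analyzing the set-valued map $x\mapsto R^*(x)$.

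For part (ii), the key estimate is the inequality
\[ (r-k_n)\,\mathbb{E}[S_{[k_n:r]}\mid X=x]+\sum_{i=1}^{k_n}\mathbb{E}[S_{[i:r]}\mid X=x]\leq r/\mu, \]
obtained by combining $\sum_{i=1}^r \mathbb{E}[S_{[i:r]}\mid X=x]=r/\mu$ (Assumption \ref{ass:sizeBasedSlowdowns}(1)) with the pointwise bound $S_{[i:r]}\geq S_{[k_n:r]}$ for $i\geq k_n$. Evaluating it at $r=2k_n$ yields $\mathbb{E}[S_{[k_n:2k_n]}\mid X=x]\leq 2/\mu$ and hence the uniform bound $\alpha_{2k_n}(x)\leq 1+\tfrac{2}{\mu}+2\theta^*\lambda(1+1/\mu)=:M$, independent of $x$ and $k_n$. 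On the other hand, dropping nonnegative terms gives $\alpha_r(x)\geq 1+\theta^*\lambda r/k_n$ for any $r\geq k_n$. Therefore $\alpha_r(x)>\alpha_{2k_n}(x)$ whenever $r>k_n(M-1)/(\theta^*\lambda)$, which defines an $\overline{r}_{k_n}\in O(k_n)$ outside of which $R^*(x)$ cannot lie, and hence outside of which $p^*_r(x)$ must vanish.

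For part (i), Assumption \ref{ass:sizeBasedSlowdowns}(2) gives $\mathbb{E}[S_{[i:r]}\mid X=x]\to 1/\mu$ as $x\to\infty$ for each fixed $(i,r)$. Consequently, for every fixed $r$, $\alpha_r(x)\to (1+1/\mu)(1+\theta^*\lambda r/k_n)$ as $x\to\infty$, and this limit is strictly increasing in $r$. In particular, for each $r\in\{k_n+1,\dots,\overline{r}_{k_n}\}$ there exists $X_r<\infty$ with $\alpha_{k_n}(x)<\alpha_r(x)$ for all $x>X_r$. Since by (ii) it suffices to consider only the finitely many $r$ in that range, setting $X:=\max_{k_n<r\leq\overline{r}_{k_n}}X_r$ gives $R^*(x)=\{k_n\}$ for all $x>X$, whence $p^*_{k_n}(x)=1$ there.

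The main obstacle is the rigorous justification of the duality step in this infinite-dimensional setting where the primal variable is a measurable function $p:\mathbb{R}_+\to P_{k_n}$. Two approaches are viable: an abstract Fenchel duality argument on an $L^\infty$-like space paired with finite signed measures; or, avoiding duality altogether, a perturbation argument showing that any alleged optimal $p^*$ violating (i) or (ii) on a set of positive $\mathbb{P}_X$-measure admits a feasible variation that strictly lowers the objective, contradicting optimality. The separability and linearity of the problem make the latter approach workable, though more notationally involved, and it produces exactly the pointwise support characterization used above.
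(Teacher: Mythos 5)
Your proposal is correct and takes essentially the same route as the paper's proof: strong duality with multiplier $y^*>0$ (the paper's Lemma \ref{lem:strongDuality}), pointwise minimization of the per-$x$ Lagrangian over the simplex, and then control of the resulting $\arg\min$ set via the asymptotic behavior of the order-statistic expectations. Your handling of part (ii) is actually a bit cleaner than the paper's: you derive the uniform bound $\alpha_{2k_n}(x)\leq M$ directly from the identity $\sum_{i=1}^{r}\mathbb{E}[S_{[i:r]}\mid X=x]=r/\mu$ together with $S_{[i:r]}\geq S_{[k_n:r]}$ for $i\geq k_n$, which gives a clean $O(k_n)$ cutoff in one step; the paper instead invokes the monotonicity in $x$ from Assumption \ref{ass:sizeBasedSlowdowns}(2) to reduce to the case $x=0$, introduces an auxiliary threshold $\underline{r}\in O(k_n)$ with $\mathbb{E}[S_{[k_n:r]}\mid X=0]\leq 1/\mu$ for $r\geq\underline r$, and sandwiches $f_{x,y}$ between two affine functions. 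Your version avoids that extra threshold entirely. For part (i) you compute $\lim_{x\to\infty}\alpha_r(x)=(1+1/\mu)(1+\theta^*\lambda r/k_n)$ and note it is strictly increasing in $r$, whereas the paper computes the limit of the increment $f_{x,y}(r)-f_{x,y}(r-1)$; these are equivalent, and you correctly flag that one must first invoke part (ii) so that only finitely many $r$ are in play, which the paper leaves implicit. The duality step you flag as the ``main obstacle'' is exactly where the paper leans on an infinite-dimensional LP duality result (its Lemma \ref{lem:strongDuality}), so your assessment of where rigor is needed matches the paper's structure.
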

The proof is given in Appendix \ref{proof:sizeBasedK}.

\begin{remark}
  This result shows us that smaller tasks should be replicated more (although only up to a constant number of replicas per task), and that very large tasks should not be replicated at all. This conclusion is consistent with and provides support for current practice \cite{stoicaClones,stoicaIdleResources}. Note however that this conclusion can only be reached when the slowdowns depend on the task size; it does not hold under independent slowdowns, as in the $S\& X$ model \cite{decoupledSlowdown}.
\end{remark}

\subsubsection{Case $k_n=1$}
For the case where $k_n=1$ for all $n$, we can refine the results in Theorem \ref{thm:sizeBasedK} under the following additional assumption on the distribution of the slowdowns.

\begin{assumption}\label{ass:convexity}
 The expression
 \[ r\mathbb{E}\big[S_{[1:r]}\,\big|\, X=x\big] \]
 is convex as a function of the integer parameter $r$, for all $x\geq 0$.
\end{assumption}

The following lemma provides a sufficient condition for this assumption to hold.
\begin{lemma}
  If the expression
  \begin{equation}\label{eq:sufficientCondition}
   \mathbb{P}\left( \left. S > \frac{s}{r} \,\right|\, X=x \right)^{r}
  \end{equation}
  is convex as a function of the integer parameter $r$, for all $s,x>0$, then Assumption \ref{ass:convexity} holds.
\end{lemma}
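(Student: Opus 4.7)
The plan is to rewrite $r\,\mathbb{E}[S_{[1:r]}\mid X=x]$ as an integral whose integrand is precisely the expression assumed convex in the lemma, and then invoke that integration against a nonnegative measure preserves (discrete) convexity. First, I would use that under Assumption~\ref{ass:basicSlowdown} the variable $S_{[1:r]}$ is the minimum of $r$ i.i.d.\ copies of $S$ conditional on $X=x$, so the conditional survival function factors as $\mathbb{P}(S_{[1:r]}>t\mid X=x)=\mathbb{P}(S>t\mid X=x)^r$. The tail formula for the expectation then gives
\[
\mathbb{E}\bigl[S_{[1:r]}\,\big|\,X=x\bigr]=\int_0^\infty \mathbb{P}(S>t\mid X=x)^r\,dt,
\]
and the linear change of variables $s=rt$ produces the key identity
\[
r\,\mathbb{E}\bigl[S_{[1:r]}\,\big|\,X=x\bigr]=\int_0^\infty \mathbb{P}\bigl(S>s/r\,\big|\,X=x\bigr)^r\,ds.
\]

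Next, I would observe that the integrand on the right is exactly the quantity in Equation~\eqref{eq:sufficientCondition}, which by hypothesis is convex as a function of the integer $r$ for every fixed $s,x>0$. Discrete convexity of an integer-valued function $f$ is characterized by $f(r+1)+f(r-1)\geq 2f(r)$ for $r\geq 2$. Applying this pointwise to the integrand and integrating the (nonnegative) inequality over $s\in(0,\infty)$ yields
\[
(r+1)\,\mathbb{E}\bigl[S_{[1:r+1]}\,\big|\,X=x\bigr]+(r-1)\,\mathbb{E}\bigl[S_{[1:r-1]}\,\big|\,X=x\bigr]\geq 2r\,\mathbb{E}\bigl[S_{[1:r]}\,\big|\,X=x\bigr],
\]
which is precisely the discrete convexity of $r\mapsto r\,\mathbb{E}[S_{[1:r]}\mid X=x]$ required by Assumption~\ref{ass:convexity}.

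The only loose end is finiteness, which is needed to integrate the inequality termwise; but this is immediate from part~1 of Assumption~\ref{ass:sizeBasedSlowdowns}, since $S_{[1:r]}\leq S$ pointwise gives $r\,\mathbb{E}[S_{[1:r]}\mid X=x]\leq r\,\mathbb{E}[S\mid X=x]=r/\mu<\infty$ for every finite $r$. I do not foresee a genuine obstacle: the entire argument reduces to the tail formula, the substitution $s=rt$, and a pointwise-in-$s$ inequality followed by integration.
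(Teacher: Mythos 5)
Your argument is correct and complete. The paper states this lemma without proof, so there is no internal argument to compare against; but your route is the one the hypothesis invites: the survival-function factorization $\mathbb{P}(S_{[1:r]}>t\mid X=x)=\mathbb{P}(S>t\mid X=x)^r$ (from Assumption~\ref{ass:basicSlowdown}), the tail formula for expectations, and the substitution $s=rt$ yield the identity
\[
r\,\mathbb{E}\bigl[S_{[1:r]}\,\big|\,X=x\bigr]=\int_0^\infty \mathbb{P}\bigl(S>s/r\,\big|\,X=x\bigr)^{r}\,ds,
\]
after which pointwise discrete convexity of the integrand integrates to discrete convexity of $r\mapsto r\,\mathbb{E}[S_{[1:r]}\mid X=x]$. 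Your finiteness remark (via $S_{[1:r]}\leq S$ and $\mathbb{E}[S\mid X=x]=1/\mu$) correctly justifies the termwise integration. No gaps.
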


For example, the exponential and the Pareto distributions satisfy Equation \eqref{eq:sufficientCondition}. In general, distributions with tails heavier than the exponential will also satisfy it, and therefore will also satisfy Assumption \ref{ass:convexity}.

\begin{theorem}\label{thm:sizeBased}
Suppose that $k_n=1$ for all $n$, and that the distributions of the slowdowns satisfy Assumption \ref{ass:convexity}. Then, every feasible point $p$ that achieves the minimum in the problem defined by equations \eqref{eq:problem1} and \eqref{eq:problem2} is equal almost everywhere (with respect to $\mathbb{P}_X$) to a point $p^*$ that satisfies the following:
\begin{itemize}
  \item [(i)] For every $x$, $p^*(x)$ is concentrated on up to two consecutive integers.
  \item [(ii)] If $\mathbb{P}_X$ is non-atomic, then $p^*(x)$ is concentrated in a single integer, for all $x$.
  \item [(iii)] The expected number of replicas,
  \[ r^*(x)\triangleq \sum\limits_{r=1}^\infty r p^*_r(x), \]
  is nonincreasing with $x$.
\end{itemize}
\end{theorem}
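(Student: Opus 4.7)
The plan is to convert the constrained minimization in \eqref{eq:problem1}--\eqref{eq:problem2} into a pointwise (in $x$) minimization via Lagrangian duality, and then exploit convexity of the pointwise objective in $r$ together with monotone comparative statics in $(r,x)$. Since the objective and the single integral constraint are linear in $p$ and the feasible set is convex, strong duality yields a multiplier $\nu\geq 0$ such that any optimal $p^*$ must, for $\mathbb{P}_X$-almost every $x$, be supported on the set $R(x)\subseteq\mathbb{N}$ of minimizers of
\[ g_r(x)\;:=\;x\bigl(1+e_r(x)\bigr)\bigl(1+\nu\lambda r\bigr), \]
where $e_r(x):=\mathbb{E}\big[S_{[1:r]}\,\big|\,X=x\big]$.

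For (i), the key step is to show that $g_r(x)$ is convex in the integer $r$, from which it follows that $R(x)$ is an interval of consecutive integers. Writing
\[ g_r(x)/x \;=\; 1+\nu\lambda r+e_r(x)+\nu\lambda\cdot re_r(x), \]
the term $re_r(x)$ is convex in $r$ by Assumption~\ref{ass:convexity}, while $e_r(x)$ is convex in $r$ because the integral representation $e_r(x)=\int_0^\infty \bar{F}_x(t)^r\,dt$ gives second differences $e_{r+2}(x)+e_r(x)-2e_{r+1}(x)=\int_0^\infty \bar{F}_x(t)^r(1-\bar{F}_x(t))^2\,dt\geq 0$. With $R(x)$ consisting of consecutive integers, reducing the support of $p^*(x)$ to at most two consecutive integers is a pointwise construction: on the set where $|R(x)|\geq 3$, the objective $g_r$ is constant on $R(x)$ but the constraint weight $b_r(x)=xr(1+e_r(x))$ strictly increases in $r$, so for each such $x$ one can select a pair $\{r_0,r_0+1\}\subseteq R(x)$ whose interval $[b_{r_0}(x),b_{r_0+1}(x)]$ contains the original mixed contribution. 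For (ii), under non-atomicity of $\mathbb{P}_X$, Lyapunov's theorem on the range of non-atomic vector measures then allows one to partition the ``mixed'' region $\{x:0<p^*_{r_x+1}(x)<1\}$ into two measurable pieces on which $p^*$ is concentrated at $r_x$ and $r_x+1$ respectively, while preserving the integral of $b$.

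For (iii), I would apply Topkis-style monotone comparative statics: it suffices to show that $\Delta_r g(x)/x$ is nondecreasing in $x$ for each fixed $r$, which then forces both $\min R(\cdot)$ and $\max R(\cdot)$ to be nonincreasing in $x$. Direct computation yields
\[ \Delta_r g(x)/x \;=\; \nu\lambda\bigl(1+e_{r+1}(x)\bigr)-\bigl(1+\nu\lambda r\bigr)\bigl(e_r(x)-e_{r+1}(x)\bigr). \]
The first summand is nondecreasing in $x$ because $e_{r+1}(x)\uparrow 1/\mu$ by Parts 1 and 2 of Assumption~\ref{ass:sizeBasedSlowdowns}; the second is nondecreasing in $x$ because $e_r(x)-e_{r+1}(x)$ is nonincreasing in $x$ by Part 3 of Assumption~\ref{ass:sizeBasedSlowdowns}.

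The main obstacle will be coordinating these pointwise monotonicity properties with the single integral constraint to produce one $p^*$ whose expected value $r^*(x)=\sum_r r\,p^*_r(x)$ is globally nonincreasing in $x$: even though the argmin intervals $R(x)$ shift leftward with $x$, consecutive intervals may overlap, and independent choices of mixture weights inside each $R(x)$ could in principle destroy global monotonicity of $r^*$. I expect that a water-filling type construction, which greedily allocates the \emph{larger} element of $R(x)$ to smaller values of $x$ (where size-based slowdowns are more harmful and so additional replicas are more valuable), together with a measurable selection of the thresholds where $R(\cdot)$ transitions between consecutive intervals, will produce the desired $p^*$; verifying measurability and compatibility with the active constraint throughout is the most delicate step.
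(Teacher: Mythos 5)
Your proposal follows essentially the same route as the paper: Lagrangian duality collapses the problem to a pointwise (in $x$) minimization of $f_{x,y}(r)=(1+e_r(x))(1+y\lambda r)-y$ over $r$; convexity in $r$ gives (i); strict monotonicity in $x$ of the forward differences $f_{x,y}(r+1)-f_{x,y}(r)$ gives (ii) and (iii). The main differences are that you overcomplicate (i), (ii), and (iii) with machinery the paper shows to be unnecessary. For (i), your own second-difference computation $e_{r+2}(x)+e_r(x)-2e_{r+1}(x)=\int_0^\infty \bar F_x(t)^r(1-\bar F_x(t))^2\,dt$ is in fact \emph{strictly} positive whenever the slowdowns are non-degenerate (this is exactly the paper's Lemma~\ref{lem:convexity}); once you note strict convexity of $e_r$ in $r$, $f_{x,y}$ is strictly convex in $r$ and the argmin automatically consists of at most two consecutive integers, so the pair-selection/projection step is superfluous. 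For (ii), Lyapunov's convexity theorem is not needed: since $f_{x,y}(r+1)-f_{x,y}(r)$ is \emph{strictly} increasing in $x$, for each fixed pair $\{r,r+1\}$ there is at most one $\hat{x}$ where both are minimizers, so the ``tied'' set is countable and $\mathbb{P}_X$-null when $\mathbb{P}_X$ is non-atomic; the original optimal $p$ is already $\mathbb{P}_X$-a.e.\ concentrated on a single integer, with no splitting construction required. For (iii), the coordination issue you flag does not actually arise: because each consecutive pair is tied at a \emph{unique} $x$, the argmin correspondence $R(\cdot)$ is totally ordered in the strong set order (intervals $R(x_1),R(x_2)$ with $x_1<x_2$ overlap in at most one integer), and hence any choice of mixture weight inside a tie yields a value of $r^*(\hat{x})\in[\min R(\hat x),\max R(\hat x)]$ that is sandwiched between the argmins at nearby $x$'s, so $r^*(\cdot)$ is automatically nonincreasing; no water-filling or threshold-selection construction is required.
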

The proof is given in Appendix \ref{proof:sizeBasedThm}.

\begin{remark}
For the special case of $k_n=1$, Theorem \ref{thm:sizeBased} provides a crisper characterization of the optimal solutions $p^*$ than Theorem \ref{thm:sizeBasedK}. In particular, it states that the optimal number of replicas is a nonincreasing function of the task size.
\end{remark}

In the next subsection, we will use the properties derived in this subsection to design an asymptotically optimal Block policy.

\subsection{Asymptotically optimal Block policies}
In this subsection, we show that the lower bound on the delay obtained in the previous subsection is asymptotically attainable using an appropriate sequence of Block policies based on the FREC and DQ policies introduced in Subsection \ref{sec:upperBoundFinite}.\\

Suppose that
\[ \lambda < \frac{1}{1+\frac{1}{\mu}}. \]
Let $\alpha\in(1/2,1)$ be a constant, and let $p^{(n)}$ be a solution of the following optimization problem:
\begin{align*}
     \inf_{p\in \mathcal{P}_{k_n}} \quad &\int\limits_0^\infty x \sum\limits_{r=k_n}^\infty p_r(x)\Big(1+\mathbb{E}\big[ S_{[k_n:r]}\,\big|\, X=x \big] \Big) d\mathbb{P}_X(x)  \\
    s.t.\quad & \frac{\lambda}{k_n} \int\limits_0^\infty x \sum\limits_{r=k_n}^\infty p_r(x) \left[ r + (r-k_n)\mathbb{E}\big[S_{[k_n:r]}\,\big|\, X=x \big] {\color{white}\sum\limits_{i=1}^{k_n}} \right. \\
     &\qquad\qquad\qquad\qquad\qquad \left. + \sum\limits_{i=1}^{k_n} \mathbb{E}\big[ S_{[i:r]}\,\big|\, X=x \big] \right] d\mathbb{P}_X(x) \leq 1-n^{\alpha-1}.
\end{align*}
Note that, by moving the $1-n^{\alpha-1}$ to the other side, we get the same optimization problem as in Lemma \ref{lem:sizeBasedLowerBound}, with $\lambda$ replaced by $\lambda/(1-n^{\alpha-1})$. Furthermore, for all $n$ large enough, we have
\[ \frac{\lambda}{1-n^{\alpha-1}} \leq \frac{1}{1+\frac{1}{\mu}}. \]
Thus, Lemma \ref{lem:feasibility} applies, and a solution $p^{(n)}$ is guaranteed to exist. Moreover, recall that Theorem \ref{thm:sizeBasedK} states that there exists $\overline{r}_{k_n}^{(n)}\in O(k_n)$ such that $p_r^{(n)}(x)=0$ for all $r\geq \overline{r}_{k_n}^{(n)}$ and $x\geq 0$. In other words, $p^{(n)}$ is concentrated in its first $\overline{r}_{k_n}^{(n)}-k_n+1$ indices.

Let us partition the $n$-server system into $\overline{r}_{k_n}^{(n)}-k_n+1$ subsystems, indexed by $r=k_n,\dots,\overline{r}_{k_n}^{(n)}$, such that the $r$-th subsystem has
\begin{align*}
 n^{(r)} &\triangleq \left\lfloor \frac{\lambda n}{k_n} \int\limits_0^\infty x p^{(n)}_r(x) \left[ r + (r-k_n)\mathbb{E}\big[S_{[k_n:r]}\,\big|\, X=x \big] {\color{white}\sum\limits_{i=1}^{k_n}} \right. \right. \\
 &\left.\left. \qquad\qquad\qquad\qquad\qquad\qquad\quad + \sum\limits_{i=1}^{k_n} \mathbb{E}\big[ S_{[i:r]}\,\big|\, X=x \big] \right] d\mathbb{P}_X(x)  + \frac{n^{\alpha}}{\overline{r}_{k_n}^{(n)}} \right\rfloor
\end{align*}
servers. { As it was the case for the policies introduced in Subsection \ref{sec:upperBoundFinite}, the number of servers in the $r$-th subsystem is equal to the number of servers required to handle the incoming jobs to the subsystem, namely,
\begin{align*}
 &\frac{\lambda n}{k_n} \int\limits_0^\infty x p^{(n)}_r(x) \left[ r + (r-k_n)\mathbb{E}\big[S_{[k_n:r]}\,\big|\, X=x \big] {\color{white}\sum\limits_{i=1}^{k_n}} \right. \\
  &\qquad\qquad\qquad\qquad\qquad\qquad\qquad\qquad\qquad + \left. \sum\limits_{i=1}^{k_n} \mathbb{E}\big[ S_{[i:r]}\,\big|\, X=x \big] \right] d\mathbb{P}_X(x),
\end{align*}
plus a fraction of the ``spare'' servers,
\[ \frac{n^{\alpha}}{\overline{r}_{k_n}^{(n)}}. \]}
Without loss of generality, we assume that $n^{(r)}$ is a multiple of $r$, for all $r$. If this is not the case, we can always choose not to use some of the original $n$ servers.\\

We now introduce two policies. In both, when a job with tasks of size $x$ arrives to the system, it is sent to the $r$-th subsystem with probability $p_r^{(n)}(x)$. Then, we do one of the following.
\begin{itemize}
\item {\bf Size-Based Full Replication with Early Cancellation (SB-FREC)}: If at the time of an arrival to the $r$-th subsystem there are at least $r$ idle servers, then $r$ replicas are created and dispatched to idle servers. Otherwise, replicas are created and dispatched to \emph{all} servers in the subsystem. { When $r$ of these replicas have started their service, all other replicas associated with the same job are cancelled and immediately leave the system.}
\item {\bf Size Based Dummy Queues (SB-DQ)}: If at the time of an arrival to the $r$-th subsystem there are at least $r$ idle servers among the first $n^{(r)}- n^\alpha/2 \overline{r}_{k_n}^{(n)}$ servers in the subsystem, then $r$ replicas are dispatched to those idle servers. Otherwise, if there are at least $r$ idle servers among the last $n^\alpha/2\overline{r}_{k_n}^{(n)}$ servers of the $r$-th subsystem, then $r$ replicas are dispatched to those idle servers. Otherwise, jobs are queued in a virtual FIFO queue at the dispatcher until there are enough idle servers among the last $n^\alpha/2\overline{r}_{k_n}^{(n)}$ servers of the $r$-th subsystem.
\end{itemize}
Note that when there are enough idle servers, both of these policies send all replicas to idle servers. Thus, if the systems are in light enough heavy-traffic, the vast majority of replicas will be sent to idle servers in both cases. We also note that although the FREC policy seems superior, it appears to be tractable only for the case of $k_n=1$.\\

We now present the main results of this subsection.

\begin{theorem}\label{thm:sizeBasedThm2}
Suppose that $k_n=1$ for all $n$, and that
\[ \lambda < \frac{1}{1+\frac{1}{\mu}}. \]
Then, the SB-FREC policy is stable for all $n$ large enough, and we have
\[ \lim\limits_{n\to\infty} \mathbb{E}\big[W_n^s\big] = \int\limits_0^\infty x \sum\limits_{r=1}^\infty p^*_r(x)\Big(1+\mathbb{E}\big[S_{[1:r]}\,\big|\, X=x\big] \Big) d\mathbb{P}_X(x), \]
where $p^*$ is a solution of the optimization problem defined by equations \eqref{eq:problem1} and \eqref{eq:problem2}, for $k_n=1$. Furthermore, if
\[ \mathbb{E}\left[\Big(XS\Big)^{2+\epsilon}\right]<\infty \]
for some $\epsilon>0$, we also have
  \[ \lim\limits_{n\to\infty} \mathbb{E}\big[W_n^q\big] =0. \]
\end{theorem}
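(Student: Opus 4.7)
The proof will closely parallel that of Theorem~\ref{thm:UpperBound1}, with the two-way partition of servers replaced by an $\overline{r}_1^{(n)}$-way partition indexed by the number of replicas $r$. Since Theorem~\ref{thm:sizeBasedK}(ii) guarantees $\overline{r}_1^{(n)} \in O(1)$ when $k_n=1$, this partition consists of a uniformly bounded number of independent FREC-like subsystems, each driven by a thinned Poisson arrival stream of rate $\lambda n \int p_r^{(n)}(x)\, d\mathbb{P}_X(x)$ with conditional service time $X(1+S_{[1:r]})$ given task size $X$. The uniform boundedness of the number of subsystems is essential, since it lets all the per-subsystem estimates be aggregated by a single finite sum.

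The first step is to establish stability and to show that $p^{(n)} \to p^*$ as $n\to\infty$ in a sense strong enough to pass to the limit inside the integrals defining the service time. The perturbed optimization problem defining $p^{(n)}$ coincides with the problem of Lemma~\ref{lem:sizeBasedLowerBound} with $\lambda$ replaced by $\lambda_n \triangleq \lambda/(1-n^{\alpha-1}) \to \lambda$; a Berge-type continuity argument, leveraging the structural characterization of optimizers in Theorem~\ref{thm:sizeBased} (concentration on at most two consecutive integers per $x$, and nonincreasing $r^*(x)$) would yield the desired convergence. For stability, the perturbed constraint ensures that the load on each subsystem is bounded by $1 - c\, n^{\alpha-1}$ for some $c>0$, which together with the $n^\alpha/\overline{r}_1^{(n)}$ extra servers allocated per subsystem provides the necessary slack.

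Next, for the service time, I would adapt the argument from Appendix~\ref{proof:upperBound}: show that the probability an arriving job finds fewer than $r$ idle servers in its subsystem is $o(1)$, so with high probability the job is dispatched to idle servers and its conditional service time equals $X(1+S_{[1:r]})$. Summing over the finitely many subsystems, multiplying by the thinning probabilities $p_r^{(n)}(x)$, and using $p^{(n)} \to p^*$ then yields the claimed limit $\int x \sum_r p^*_r(x)\big(1+\mathbb{E}[S_{[1:r]}\,|\,X=x]\big)\, d\mathbb{P}_X(x)$. For the queueing delay, decompose according to whether the job found $r$ idle servers upon arrival; in the overflow event, bound the conditional expectation of $W_n^q$ by the worst-case waiting time in an M/G/$n^{(r)}$-type system, whose relevant moment is controlled by $\mathbb{E}[(XS)^{2+\epsilon}]<\infty$ via the pointwise bound $X(1+S_{[1:r]}) \leq X(1+\sum_{i=1}^r S_i)$. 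Combining this with the overflow probability estimate (which decays polynomially in $n$ since $\alpha>1/2$) via a H\"older-type inequality gives $\mathbb{E}[W_n^q]\to 0$.

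The principal obstacle is the first step: showing that the minimizers $p^{(n)}$ of the perturbed problem converge to an optimizer $p^*$ of the unperturbed problem in a mode sufficient to pass to the limit inside the integrals. The minimizers need not be unique and live in an infinite-dimensional function space, so care with a selection argument is required. However, Theorem~\ref{thm:sizeBased} considerably restricts the effective degrees of freedom, and the problem can likely be handled by a measurable selection theorem combined with the uniform integrability bounds that follow from the uniform boundedness of $\overline{r}_1^{(n)}$ and the moment assumption on $XS$.
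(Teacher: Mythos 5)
Your proposal identifies the right ingredients (the SB-FREC policy decomposes into finitely many FREC-like subsystems by Theorem~\ref{thm:sizeBasedK}(ii), the Block property from the analogue of Lemma~\ref{lem:startFinish}, and the M/G/$m$ machinery from Appendix~\ref{proof:upperBound} for the queueing delay), and those parts track the paper's argument. However, you have created an unnecessary and in fact problematic obstacle for yourself in the service-time step, and this is where the proposal genuinely falls short.

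You set out to prove that the minimizers $p^{(n)}$ of the perturbed problem converge to a minimizer $p^*$ of the unperturbed problem in a topology strong enough to pass the limit inside the integral, and you correctly flag this as the principal difficulty: the minimizers live in an infinite-dimensional space and need not be unique, so a Berge-type argument combined with a measurable selection theorem is delicate at best. But this step is not needed, and a convergence $p^{(n)}\to p^*$ to a \emph{prespecified} optimizer need not hold when optimizers are non-unique. The quantity appearing in the theorem is the \emph{optimal value}, not the optimizer: $\int x\sum_r p^*_r(x)\big(1+\mathbb{E}[S_{[1:r]}\mid X=x]\big)\,d\mathbb{P}_X(x)=I(0)$, which is the same for every optimizer $p^*$. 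The paper's proof exploits two facts that make the whole issue evaporate. First, under the Block property the service time of a job routed to the $r$-th subsystem is \emph{exactly} $X\big(1+S_{[1:r]}\big)$ --- not merely with high probability, but deterministically, because when idle servers are unavailable the replicas still start simultaneously once a batch of $r$ servers frees up. Consequently $\mathbb{E}[W_n^s]$ is not asymptotically close to $I(-n^{\alpha-1})$: it \emph{equals} $I(-n^{\alpha-1})$ by the definition of $p^{(n)}$. Second, $I(\cdot)$ is the scalar value function of a convex problem and is shown to be relatively continuous at $0$ (Lemma~\ref{lem:strongDuality}), so $I(-n^{\alpha-1})\to I(0)$ immediately. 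Your route via ``overflow probability is $o(1)$'' and ``$p^{(n)}\to p^*$'' would, if it could be completed, give the same conclusion, but the selection-theorem step you flag as the main obstacle is a self-imposed detour that the paper avoids entirely by arguing at the level of objective values rather than optimizers.
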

The proof is given in Appendix \ref{sec:proofSizeBasedUpper}.

\begin{theorem}\label{thm:sizeBasedThm4}
Suppose that $k_n=k$ for all $n$, and that
\[ \lambda < \frac{1}{1+\frac{1}{\mu}}. \]
Then, the SB-DQ policy is stable for all $n$ large enough, and we have
\[ \lim\limits_{n\to\infty} \mathbb{E}\big[W_n^s\big] = \int\limits_0^\infty x \sum\limits_{r=k}^\infty p^*_r(x)\Big(1+\mathbb{E}\big[S_{[k:r]}\,\big|\, X=x\big] \Big) d\mathbb{P}_X(x), \]
where $p^*$ is a solution of the optimization problem defined by equations \eqref{eq:problem1} and \eqref{eq:problem2}, for $k_n=k$. Furthermore, if
\[ \mathbb{E}\left[\Big(XS_{[k:k]}\Big)^{2+\epsilon}\right]<\infty \]
for some $\epsilon>0$, we also have
\[ \lim\limits_{n\to\infty} \mathbb{P}\big(W_n^q>0\big) =0. \]
\end{theorem}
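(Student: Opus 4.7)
The plan is to adapt the proof of Theorem~\ref{thm:UpperBoundK}(i) to the size-based setting, exploiting the fact that when $k_n=k$ is constant, Theorem~\ref{thm:sizeBasedK}(ii) gives $\overline{r}_k^{(n)}\in O(1)$ uniformly in $n$, so the SB-DQ policy decomposes the $n$-server system into only $O(1)$ subsystems. I would treat each subsystem separately and then union-bound the few error probabilities. For the $r$-th subsystem, arrivals form a thinned Poisson process of rate $(\lambda n/k)\int p_r^{(n)}(x)\,d\mathbb{P}_X(x)$, each arrival occupies exactly $r$ servers simultaneously (by the Block property), and the total server-time it consumes equals $X\bigl[r+(r-k)S_{[k:r]}+\sum_{i=1}^k S_{[i:r]}\bigr]$. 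By construction of $n^{(r)}$, the steady-state average number of busy servers in the $r$-th subsystem is strictly below $n^{(r)}$ by an amount of order $n^\alpha/\overline{r}_k^{(n)}=\Theta(n^\alpha)$.

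I would first establish stability by coupling each subsystem with an $M/G/\infty$ process in which the main and dummy servers jointly absorb all replicas; the mean-load slack of order $n^\alpha$ ensures that a stationary distribution exists for all $n$ large enough. Next, I would show $\mathbb{P}(W_n^q>0)\to 0$. An arriving job waits only when its target subsystem has fewer than $r$ idle servers among its main portion. The number of busy main servers in the $M/G/\infty$ coupling is a sum of indicators of ongoing jobs, so its fluctuations around the mean are at most of order $\sqrt{n}$ up to a slowly growing factor, via Chebyshev together with a truncation argument that uses the moment hypothesis $\mathbb{E}\bigl[(X S_{[k:k]})^{2+\epsilon}\bigr]<\infty$ to control the heaviest per-job contribution. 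Since $\alpha>1/2$, these fluctuations are $o(n^\alpha)$, and a union bound over the $O(1)$ subsystems yields the claim.

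To conclude the service-time formula I would combine two ingredients. First, on the event $\{W_n^q=0\}$ the service time of a typical job routed to the $r$-th subsystem equals exactly $X(1+S_{[k:r]})$; combined with the moment hypothesis this lets dominated convergence transfer the vanishing of $\mathbb{P}(W_n^q>0)$ into the statement that $\mathbb{E}[W_n^s]$ is asymptotically equal to the objective value attained by $p^{(n)}$. Second, I would show that the optimal value of the $(1-n^{\alpha-1})$-constrained optimization problem converges to that of the original problem from Lemma~\ref{lem:sizeBasedLowerBound}. For this, I would shift an $O(n^{\alpha-1})$ fraction of mass from $p^*_r(x)$ with $r>k$ onto $p^*_k(x)$; since $k$ replicas use the minimal server-time $k(1+1/\mu)$ per task, this produces a feasible sequence at the tighter constraint whose objective converges to the optimum at the looser one. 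Combined with the feasibility of $p^{(n)}$ at the looser constraint, this sandwiches the two optimal values together and identifies the limit as the objective at $p^*$.

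The main obstacle will be the concentration step inside each subsystem, since the service times of the $r$ replicas of a single job are strongly correlated through the common factor $X$, so one cannot decouple the server occupancies as independent. Handling this requires an $M/G/\infty$-style coupling where the job (not the replica) is the unit of occupancy, together with a tail bound on $X S_{[k:k]}$ to control the variance of the aggregate main-server occupancy. This is the direct analogue of the argument sketched in Appendix~\ref{proof:upperBound2} for the DQ policy, made intricate by the size-dependence of the slowdowns.
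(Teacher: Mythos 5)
Your overall structure is sound and the stability/queueing-delay parts track the paper's approach (which simply inherits the $M/G/m$-comparison and the Goldberg-type tail bound from the DQ analysis, specialized to $\overline{r}_k^{(n)}\in O(1)$ subsystems). Where you diverge from the paper is the service-time limit, and two points there deserve comment.

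First, your step of conditioning on $\{W_n^q=0\}$ and then invoking dominated convergence with the moment hypothesis is unnecessary and slightly misleading. Under SB-DQ, a job routed to the $r$-th subsystem \emph{always} gets exactly $r$ replicas that all start service simultaneously -- the virtual FIFO queue only delays the start time, it does not change the number of replicas dispatched or desynchronize them. Hence the service time is exactly $X(1+S_{[k:r]})$ for \emph{every} such job, waiting or not, and
\[
\mathbb{E}[W_n^s]=\int_0^\infty x\sum_{r=k}^{\infty}p^{(n)}_r(x)\bigl(1+\mathbb{E}[S_{[k:r]}\mid X=x]\bigr)\,d\mathbb{P}_X(x)=I(-n^{\alpha-1})
\]
holds as an identity, not an asymptotic. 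The moment hypothesis plays no role here; the theorem correctly restricts it to the queueing-delay conclusion only. This is precisely what the paper exploits: once this identity is recorded, the whole service-time proof reduces to showing the value function $I(\cdot)$ is continuous at $0$.

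Second, for that continuity step the paper invokes the abstract convex-duality machinery of Lemma~\ref{lem:strongDuality} (relative continuity via nonempty subdifferential and strong duality from the cited infinite-dimensional LP theorem), whereas you give an explicit perturbation argument: interpolate $p^*$ with the all-$p_k$ point to get a strictly feasible sequence, and use monotonicity of $I$ in the other direction. That sandwich is a perfectly valid and arguably more elementary route. One small caution: you justify the perturbation by asserting that $k$ replicas use the ``minimal server-time.'' In the exponential-slowdown model that is proved (Appendix~B), but in the general size-based model it is not guaranteed that adding replicas increases the expected server time per job. Fortunately your argument does not actually need minimality: it only needs the all-$p_k$ point to have constraint value $\lambda(1+1/\mu)<1$, which is exactly the standing hypothesis. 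So the conclusion stands, but the stated justification should be replaced by the strict-feasibility observation.
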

The proof is given in Appendix \ref{sec:proofSizeBasedUpper}.

\begin{remark}
  Theorems \ref{thm:sizeBasedThm2} and \ref{thm:sizeBasedThm4} imply that the lower bound for the expected delay of a typical job established in Theorem \ref{lem:sizeBasedLowerBound} is asymptotically attained. Moreover, since the asymptotically delay optimal polices were constructed by using the solutions to a suitable version of the optimization problem that serves as the lower bound, the properties derived in theorems \ref{thm:sizeBasedK} and \ref{thm:sizeBased} (such as the fact that smaller tasks should be replicated more than large tasks) still hold and are therefore properties of our asymptotically delay-optimal policies.
\end{remark}

\subsection{Main takeaways}
In this section, we obtained a lower bound on the expected delay of a typical job under a restricted class of policies, as the solution of a certain optimization problem. Moreover, we analyzed the properties of its optimal solutions to guide the design of asymptotically optimal dispatching/replication policies.

Similar to the case of independent exponential slowdowns studied in Section \ref{sec:expLowerBound}, the number of replicas created per task is independent of the task size distribution, and of the current state of the queues. However, the number of replicas created for each incoming job now depends on the realization of the task sizes. This is a consequence of the dependence of the slowdowns on the size of the tasks (Assumption \ref{ass:sizeBasedSlowdowns}), which was not present in simpler models.

Finally, note that the constraint of the optimization problem that describes the asymptotic delay performance of our policies (Equation \eqref{eq:problem2}) depends on the arrival rate $\lambda$ in a way that a larger $\lambda$ can only increase the delay of the policy. Since the asymptotic delays of our policies are just their asymptotic service times, the fact that they are increasing in $\lambda$ is solely due to the fact that higher arrival rates allow less overall replication (i.e., higher arrival rates lead to smaller $r^*$). This was also observed in Section~\ref{sec:exponential} under the independent exponential slowdowns mode, which suggests that this arrival rate versus delay performance tradeoff is a fundamental limitation of replication-based systems, which transcends the model used for the distribution of the slowdowns.

\section{Conclusions and future work}\label{sec:conclusions}

The main objective of this paper was to study the impact of replication on the delay performance of distributed service systems with random server slowdowns. We did this under two different models for the server slowdowns, inspired by different applications.

When the server slowdowns are independent from the task sizes, we showed that the asymptotic expected delay is minimized when the number of replicas created per task is equal to a constant that only depends on the arrival rate, and on the expected slowdown. Surprisingly, it is independent from the number of tasks per job, and from the inter-arrival and task size distributions.

On the other hand, when the server slowdowns depend on the task sizes in some particular way, we showed that the asymptotic expected delay is minimized (among all policies where replicas associated with the same job start their service at the same time) when smaller tasks are replicated more than larger tasks.

There are several interesting directions for future research. For example:
\begin{itemize}
\item [(i)] In this paper, as well as in most of the literature, all servers have the same average processing rate. Thus, one interesting future line of work would be to explore the impact of heterogeneity in the average processing rates of servers in replication systems.
\item [(ii)] On a similar note, another interesting possible research direction would be to study the impact of heterogeneous jobs, where not all jobs can be served by the same number of servers (or at least not at the same speed).
    \item [(iii)] We conjecture that even simpler policies such as Join-Idle-Queue \cite{joinIdleQueue}, are also asymptotically optimal. However, Join-Idle-Queue is surprisingly hard to analyze in this setting.
\end{itemize}

%\paragraph{Differences with the popular independent runtimes model.}
%In the independent runtimes model it is assumed that the service times of the replicas associated with the same job are i.i.d.. It was shown (for example in \cite{joshiThesis}) that, if the distribution of the service times is log-convex, then it is always optimal to replicate to all servers. This stands in stark contrast to our model, where the optimal policy is independent from the distribution of the job sizes. Furthermore, since the total slowdown has a constant lower bound ($1+S\geq 1$), then the service time of a job is lower bounded by its size, regardless of the replication policy. As a result, the system becomes unstable with a sufficiently large (but constant) number of replicas.
%
%
%On the other hand, in the special case where the job size is deterministic of size $1$, our model is equivalent to the one with i.i.d. service times across replicas associated with the same job, with job sizes that are a shifted exponential. This distribution is log-concave, and the fact that dispatching all replicas at the same time is (at least asymptotically) optimal, seems to contradict Lemma 8 in \cite{joshiThesis} that states that log-concave distributions benefit from replicas with different starting times. This lemma was presented without a formal proof, but was inspired by simulations with $n=6$.

\appendix

\section{Convenient notation}\label{sec:convenientNotation}
In this appendix, we introduce some definitions and notation that will be used in the proofs of theorems \ref{thm:stability} and \ref{thm:boundGeneral}.

For $j\geq 1$, we denote the {\bf relative service start times} of the replicas associated with the $j$-th job by $T_{j,r}$, ordered so that $0\leq T_{j,1}\leq\cdots\leq T_{j,n+k_n-1} \leq \infty$. More precisely, $T_{j,r}$ is the (random) time, relative to the arrival of the $j$-th job, when the $r$-th replica associated with it starts its service. As a convention, we set $T_{j,r}(\omega)=\infty$ if the $r$-th replica associated with the $j$-th job does not start its service under the sample path $\omega$. The times $T_{j,r}$ are mostly determined by the policy, but also depend on the queueing delays of replicas, and potentially on the service times of replicas associated with the same job that have already finished their service.

For $j\geq 1$ and for $i=1,\dots,k_n$, we denote the {\bf relative departure times} of the replicas associated with the $j$-th job by $D^{(i)}_j$, ordered so that $0\leq D^{(1)}_j\leq \cdots \leq D^{(k_n)}_j \leq \infty$. More precisely, $D^{(i)}_j$ is the time, relative to the arrival of the $j$-th job, when $i$ replicas associated with it finish their service. Since replicas cannot be cancelled after they start service (Assumption \ref{ass:noCancelInService}), $D^{(i)}_j$ is the $i$-th order statistic of the random variables
\[ \Big\{T_{j,r}+X_j\big(1+S_{j,r}\big)\Big\}_{r=1}^{n+k_n-1}. \]
In particular, $D^{(k_n)}_j$ is the delay of the $j$-th job. Moreover, we use the convention $D^{(0)}_j=0$.\\

The rest of the notation and definitions are only used for the proof of Theorem \ref{thm:boundGeneral}.

\subsection{Partition of the jobs in service}
We now introduce a dynamic partition of the set of jobs that are in service (i.e., the jobs that have at least one replica associated with it in service), by assigning a phase to each job in service. At any given time:
\begin{itemize}
  \item [a)] A job with tasks of size $x$ is in phase $0$ if at least one replica associated with it has started its service, and if all of the replicas associated with it have been in service for less than $x$ units of time.
  \item [b)] For $i=1,\dots,k_n$, a job with tasks of size $x$ is in phase $i$ if it is not in phase $0$ (i.e., if at least one replica associated with it, past or present, has been in service at least $x$ units of time) and if exactly $i-1$ replicas associated with it have finished their services.
\end{itemize}
These phases define a partition of the jobs that are in service, consisting of $k_n+1$ subsets (or phases). Moreover, since this is a dynamical system, jobs not only start and finish their services, but they also change phases. In particular, all jobs start their services in phase $0$, they can only leave the system if they are in phase $k_n$, and they go through all the intermediate phases in ascending order (possibly spending $0$ time in some phases). The possible transitions between the phases are depicted in Figure \ref{fig:jobTransitions}.

\begin{figure}[ht!]
  \begin{center}
    \begin{tikzpicture}[scale=0.5]

        % Setup the style for the states
        \tikzset{node style/.style={state,
                                    minimum width=0.7cm,
                                    line width=0.3mm,
                                    fill=gray!20!white}}

        % Draw the states
        \node at (0,0.4) (start) {New};
        \node at (0,-0.4) (start) {job};
        \node[node style] at (4, 0)     (zero)     {$0$};
        \node[node style] at (8, 0)     (one)     {$1$};
        \node[node style] at (12, 0)     (two)     {$2$};
        \node[node style] at (20, 0)     (last)     {$k_n$};
        \node at (23,0) (exit) {Exit};

        % Connect the states with arrows
        \draw[
              auto=right,
              line width=0.3mm,
              >=latex]
            (1,0)      edge[->] node {} (zero)
            (zero)     edge[->] node {} (one)
            (one)     edge[->] node {} (two)
            (two)     edge[->] node {} (15,0)
            (17,0)     edge[->] node {} (last)
            (last)    edge[->] node {} (exit);
        \draw[dotted,very thick] (15.25,0) -- (16.75,0);
\end{tikzpicture}
  \end{center}
  \label{fig:jobTransitions}
  \caption{Possible transitions of jobs in service between the phases.}
\end{figure}
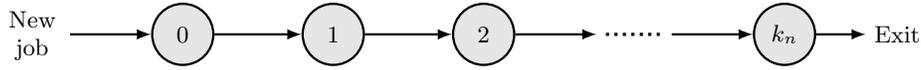

Suppose that the $j$-th job arrives to the system at time $0$. Then:
\begin{itemize}
  \item [a)] The $j$-th job is in phase $0$ between the time that the first replica associated with it starts its service ($T_{j,1}$), and the time that the first replica associated with it has spent $X_j$ units of time in service ($T_{j,1}+X_j$). At that point, the job moves to phase $1$.
  \item [b)] The $j$-th job is in phase $1$ between the time that the first replica associated with it is in service for $X_j$ units of time ($T_{j,1}+X_j$), and the first time that one of the replicas associated with it finishes its service. At that point, it moves to phase $2$.
  \item [c)] For $i=2,\dots,k_n$, the $j$-th job is in phase $i$ between the $(i-1)$-th time that one of the replicas associated with it finishes its service ($D^{(i-1)}_j$), and the $i$-th time that one of the replicas associated with it finishes its service ($D^{(i)}_j$). At that point, the job either moves to phase $i+1$ (if $i<k_n$) or leaves the system (if $i=k_n$).
\end{itemize}

\subsection{Partition of the replicas in service}
We now introduce similar terminology for the replicas that are in service. First, we define two types of replicas.
\begin{itemize}
  \item [a)] A replica of size $x$ is \emph{wasteful} if is in service less than $x$ units of time in total. In particular, the $r$-th replica of the $j$-th job is wasteful if and only if $D_j^{(k_n)}<T_{j,r}+X_j$.
  \item [b)] A replica of size $x$ is \emph{useful} if it is in service for at least $x$ units of time in total. In particular, the $r$-th replica of the $j$-th job is useful if and only if $D_j^{(k_n)}\geq T_{j,r}+X_j$.
\end{itemize}
Note that the type of a replica is future-dependent, noncausal property. As such, it does not change with the passage of time.

Second, we define a partition of the useful replicas in service by assigning a phase to each one of them, as follows. At any given time:
\begin{itemize}
  %\item A replica of size $x$ is in phase $0^*$ if it has been in service less than $x$ units of time, and if it will be in service less than $x$ units of time in total.
  \item [a)] A useful replica of size $x$ is in phase $0$ if it has been in service less than $x$ units of time.
  \item [b)] For $i=1\dots,k_n$, a useful replica of size $x$ is in phase $i$ if it has been in service for at least $x$ units of time, and if it is associated with a job in phase $i$.
\end{itemize}
Note that this indeed defines a partition of all the useful replicas in service into $k_n+1$ subsets (or phases).  Moreover, as the system evolves, useful replicas change phases over time. In particular, the transitions between phases are as follows. Suppose that the $j$-th job arrives to the system at time $0$. The $r$-th replica associated with the $j$-th job starts its service at time $T_{j,r}$, as long as this time comes before the job leaves the system, i.e., only if $T_{j,r}<D^{(k_n)}_j$. Assuming that this is the case, and that the replica is useful (i.e., that $T_{j,r}+X_j\leq D^{(k_n)}_j$ and thus it spends at least $X_j$ units of time in service), we have the following.
\begin{itemize}
  %\item The $r$-th replica associated with the $j$-th job either starts its service as a wasteful replica (if it will be in service less than $X_j$ units of time in total, i.e., if $T_{j,r}+X_j>D^{(k_n)}_j$), or it starts its service as a useful replica in phase $0$ (if it will be in service at least $X_j$ units of time in total, i.e., if $T_{j,r}+X_j\leq D^{(k_n)}_j$).
  %\item The $r$-th replica associated with the $j$-th job is in phase $0^*$ between the time that it starts its service ($T_{j,r}$), and the time when its job leaves the system ($D^{(k_n)}_j$). At that point, the replica leaves the system.
  \item [a)] The $r$-th replica associated with the $j$-th job is in phase $0$ between the time that it starts its service ($T_{j,r}$), and the time that it spends $X_j$ units of time in service ($T_{j,r}+X_j$). At that point, it moves to one of the phases $1,\dots,k_n$. In particular, it moves to phase $i$ if it is associated with a job in phase $i$.
  \item [b)] The $r$-th replica associated with the $j$-th job is in phase $1$ between the time when it has spent $X_j$ units of time in service ($T_{j,r}+X_j$), and the first time that a replica associated with the same job has finished its service ($D^{(1)}_j$). At that point, the replica either leaves the system (if it has finished its service), or moves to phase $2$ (if it has not finished its service).
  \item [c)] For $i=2,\dots,k_n$, the $r$-th replica associated with the $j$-th job is in phase $i$ between the time when it has spent $X_j$ units of time in service and $i-1$ replicas associated with the same job have already finished service ($\max\{T_{j,r}+X_j,D^{(i-1)}_j\}$), and the $i$-th time that a replica associated with the same job finishes its service ($D^{(i)}_j$). At that point, the replica either leaves the system (if it has finished its service or if $i=k_n$), or moves to phase $i+1$ (if it has not finished its service and $i<k_n$).
\end{itemize}
The possible transitions between phases are depicted in Figure \ref{fig:replicaTransitions}.

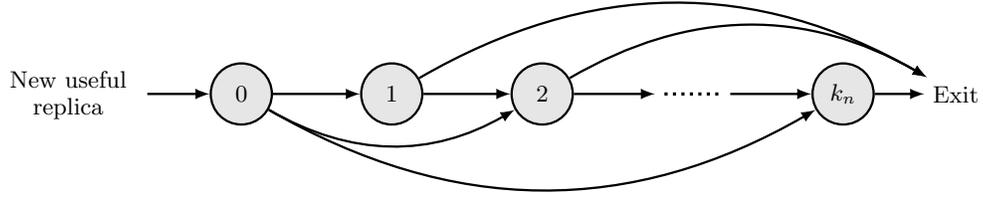
\begin{figure}[ht!]
  \begin{center}
    \begin{tikzpicture}[scale=0.5]

        % Setup the style for the states
        \tikzset{node style/.style={state,
                                    minimum width=0.7cm,
                                    line width=0.3mm,
                                    fill=gray!20!white}}

        % Draw the states
        \node at (-0.6,0.4) (start) {New useful};
        \node at (-0.6,-0.4) (start) {replica};
        \node[node style] at (4, 0)   (zero)  {$0$};
        \node[node style] at (8, 0)   (one)   {$1$};
        \node[node style] at (12, 0)  (two)   {$2$};
        \node[node style] at (20, 0)  (last)  {$k_n$};
        \node at (23,0) (exit) {Exit};

        % Connect the states with arrows
        \draw[
              auto=right,
              line width=0.3mm,
              >=latex]
            (1.5,0)  edge[->]               node {} (zero)
            (one)   edge[->,bend left=30]  node {} (exit)
            (two)   edge[->,bend left=30]  node {} (exit)
            (zero)  edge[->,bend right=30] node {} (two)
            (zero)  edge[->,bend right=30] node {} (last)
            (zero)  edge[->]               node {} (one)
            (one)   edge[->]               node {} (two)
            (two)   edge[->]               node {} (15,0)
            (17,0)  edge[->]               node {} (last)
            (last)  edge[->]               node {} (exit);
        \draw[dotted,very thick] (15.25,0) -- (16.75,0);
\end{tikzpicture}
  \end{center}
  \label{fig:replicaTransitions}
  \caption{Possible transitions of replicas in service between the phases.}
\end{figure}

For $j\geq 1$, and for $i=1,\dots,k_n$, let $T_{j,r}^{(i)}$ be the {\bf relative phase start times} of the replicas in phase $i$ associated with the $j$-th job, arranged so that $0\leq T_{j,1}^{(i)} \leq \cdots \leq T_{j,n}^{(i)} \leq \infty$. More precisely, $T_{j,r}^{(i)}$ is the time elapsed between the moment when the $j$-th job starts phase $i$, and the moment when the $r$-th replica associated with it enters phase $i$. As a convention, we set $T^{(i)}_{j,r}(\omega)=\infty$ if the $r$-th replica associated with the $j$-th job is never in phase $i$ under the sample path $\omega$. These times are mostly determined by the policy, but also depend on the queueing delays of replicas, and potentially on the service times of replicas associated with the same job that have already finished their services.

Furthermore, for $j\geq 1$ and for $i=1,\dots,k_n$, let $S_{j,1}^{(i)},\dots, S_{j,n}^{(i)}$ be such that $X_j S_{j,1}^{(i)},\dots, X_j S_{j,n}^{(i)}$ are the {\bf remaining service times} of the replicas in phase $i$ associated with the $j$-th job. That is, $X_j S_{j,r}^{(i)}$ is the remaining service time of the $r$-th replica associated with the $j$-th job, when the replica starts phase $i$. Thus, for $i=1,\dots,k_n$, the time that the $j$-th job spends in the $i$-th phase is equal to
\[ \min\limits_{r=1,\dots,n} \left\{ T_{j,r}^{(i)}+X_j S_{j,r}^{(i)} \right\}. \]

%Finally, under Assumption \ref{ass:independentExponential}, we have the following result on the remaining service times of any admissible policy.
%
%
%\begin{lemma}\label{lem:independence}
%Let us fix an admissible policy. For every $j$, conditioned on $X_j=x$, the remaining service times $x S_{j,1}^{(i)},\dots,x S_{j,n}^{(i)}$ are independent exponential random variables with mean $x/\mu$, for all $i=1,\dots,k_n$.
%\end{lemma}
%\begin{proof}
%{\color{red}
%It follows from the memoryless property of the exponential slowdowns, and the fact that $\mathcal{Q}_n(\cdot)$ is Markov (Assumption \ref{ass:markovianity}).}
%\end{proof}
%
%In particular, this implies that each phase behaves as a replication system with $k_n=1$ and independent exponential slowdowns.

We illustrate the concepts and definitions introduced above with the following example.

\begin{example}
Suppose that $k_n=2$. In Figure \ref{fig:phasesExample} we depict the evolution of the phases of the $j$-th job, and of its replicas, for a sample path of a certain policy. In our example, only the first four replicas that start their service are useful, while the fifth one is wasteful. Moreover, the fourth replica is never in phase $1$.

\begin{figure}[ht!]
\begin{center}
\begin{tikzpicture}[scale=0.8]
  \filldraw[black!10] (1,0) rectangle (13,1);
  \filldraw[black!10] (2,1) rectangle (9,2);
  \filldraw[black!10] (3,2) rectangle (13,3);
  \filldraw[black!10] (7,3) rectangle (13,4);
  \filldraw[black!10] (11,4) rectangle (13,5);

  \draw (1,0) rectangle (13,1);
  \draw (2,1) rectangle (9,2);
  \draw (3,2) rectangle (13,3);
  \draw (7,3) rectangle (13,4);
  \draw (11,4) rectangle (13,5);

  \draw[thick,->] (0,-0.5) -- (0,7);
  \draw (0.5,7.4) node {Replicas};
  \draw[thick,->] (-0.5,0) -- (14,0);
  \draw (14.8,0.25) node {Relative};
  \draw (14.8,-0.25) node {time};

  \draw (4.5,0) rectangle (9,1);
  \draw (5.5,1) rectangle (9,2);
  \draw (6.5,2) rectangle (9,3);
  \draw (10.5,3) rectangle (13,4);

  \draw (2.75,0.5) node {Phase $0$};
  \draw (3.75,1.5) node {Phase $0$};
  \draw (4.75,2.5) node {Phase $0$};
  \draw (8.75,3.5) node {Phase $0$};

  \draw (6.75,0.5) node {Phase $1$};
  \draw (7.25,1.5) node {Phase $1$};
  \draw (7.75,2.5) node {Phase $1$};

  \draw (11,0.5) node {Phase $2$};
  \draw (11,2.5) node {Phase $2$};
  \draw (11.75,3.5) node {Phase $2$};

  \draw (12,4.5) node {Wasteful};

  \draw[dotted,thick] (1,1) -- (1,6.5);
  \draw[dotted,thick] (2,0) -- (2,1);
  \draw[dotted,thick] (3,0) -- (3,2);
  \draw[dotted,thick] (4.5,1) -- (4.5,6.5);
  \draw[dotted,thick] (7,0) -- (7,3);
  \draw[dotted,thick] (9,3) -- (9,6.5);
  \draw[dotted,thick] (11,0) -- (11,4);
  \draw[dotted,thick] (13,1) -- (13,2);
  \draw[dotted,thick] (13,5) -- (13,6.5);

  \draw (1,-0.5) node {$T_{j,1}$};
  \draw (2,-0.5) node {$T_{j,2}$};
  \draw (3,-0.5) node {$T_{j,3}$};
  \draw (7,-0.5) node {$T_{j,4}$};
  \draw (11,-0.5) node {$T_{j,5}$};

  \draw (4.5,-0.5) node {$T_{j,1}+X_j$};

  \draw (9,-0.5) node {$D^{(1)}_j$};
  \draw (13,-0.5) node {$D^{(2)}_j$};

  \draw (1,-0.15) -- (1,0.15);
  \draw (2,-0.15) -- (2,0.15);
  \draw (3,-0.15) -- (3,0.15);
  \draw (4.5,-0.15) -- (4.5,0.15);
  \draw (7,-0.15) -- (7,0.15);
  \draw (11,-0.15) -- (11,0.15);
  \draw (9,-0.15) -- (9,0.15);
  \draw (13,-0.15) -- (13,0.15);

  \draw (2.75,6.25) node {Job is in};
  \draw (2.75,5.75) node {phase $0$};
  \draw (6.75,6.25) node {Job is in};
  \draw (6.75,5.75) node {phase $1$};
  \draw (11,6.25) node {Job is in};
  \draw (11,5.75) node {phase $2$};
\end{tikzpicture}
\caption{Evolution of phases for a particular sample path of a given policy.}
    \label{fig:phasesExample}
\end{center}
\end{figure}
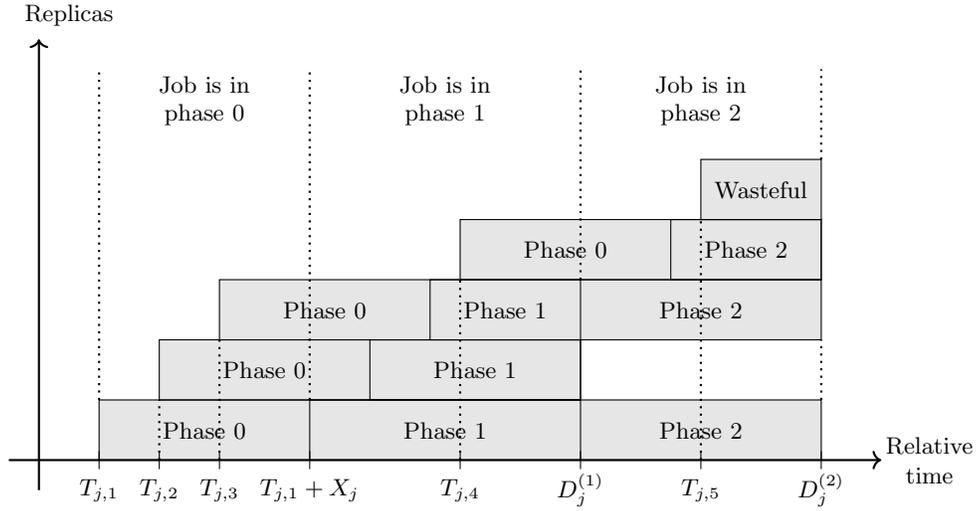
In the example, the relative phase start times of replicas, for phase $1$, are
\begin{align*}
 T_{j,1}^{(1)}&=0 \\
 T_{j,2}^{(1)}&=T_{j,2}-T_{j,1} \\
 T_{j,3}^{(1)}&=T_{j,3}-T_{j,1},
\end{align*}
and the relative phase start times of replicas, for phase $2$, are
\begin{align*}
 T_{j,1}^{(2)}&=0 \\
 T_{j,2}^{(2)}&=0 \\
 T_{j,3}^{(2)}&=T_{j,4}-D_j^{(1)}.
\end{align*}
\end{example}

\section{Proof of the necessary condition in Theorem 3.1}\label{proof:stability}
A necessary condition for the existence of an invariant probability measure for the extended queue state process $\mathcal{Q}_n(\cdot)$ is that the total workload in the queues does not diverge with time. In the parallel queueing model that we consider, the total workload can only remain bounded if the average rate at which workload comes into the queues is strictly less than the system's total processing power (see \cite{asmussen}, for example). This means that the arrival rate of jobs, times the expected workload of a typical job, must be strictly less than the total processing power of the system. Since the arrival rate is $\lambda n/k_n$, and the total processing power is $n$, the stability condition is
\begin{equation}
  \left( \frac{\lambda n}{k_n} \right)\big(\text{Expected workload of typical job}\big)<n. \label{eq:stabilityCondition}
\end{equation}
Note that the workload of a job in a replication system depends on the policy, as is defined as the sum of all the service times of the replicas associated with it. With this in mind, we now proceed to obtain a lower bound for the expected workload of a typical job under an admissible policy, using the notation introduced in Section \ref{sec:convenientNotation}.

Let $0\leq T_1\leq\cdots\leq T_{n+k_n-1}\leq\infty$ be the steady-state relative service start times of the replicas associated with a typical job with tasks of size $X$, and slowdowns $S_1,\dots,S_{n+k_n-1}$. For $i=k_n,\dots,n+k_n-1$ and for $r=1,\dots,k_n$, let $D_{[r:i]}$ be the $r$-th order statistic of the set of random variables
\[ \big\{T_1+X(1+S_1),\dots,T_i+X(1+S_i)\big\}. \]
Note that $D_{[1:i]}\leq\cdots\leq D_{[k_n:i]}$ are the relative departure times of a policy with relative service start times $0\leq\hat{T}_1\leq \cdots \leq \hat{T}_{n+k_n-1}\leq\infty$, such that $\hat{T}_r=T_r$, for all $r\leq i$, and $\hat{T}_r=\infty$, for all $r>i$.

For $i=k_n,\dots,n+k_n-1$, let
\[ M_i \triangleq \sum\limits_{r=1}^i \min\Big\{X(1+S_r),\,\,\big(D_{[k_n:i]}-T_r\big)^+\Big\}. \]
This is the workload of a typical job (i.e., the sum of the service times of all the replicas associated with it) under a policy with the relative service start times $0\leq\hat{T}_1\leq \cdots \leq \hat{T}_{n+k_n-1}\leq\infty$, such that $\hat{T}_r=T_r$, for all $r\leq i$, and $\hat{T}_r=\infty$, for all $r>i$. In particular, $M_{n+k_n-1}$ is the workload of a typical job under the original policy, so we are interested in finding a lower bound for it. With this in mind, consider the decomposition
\begin{equation}\label{eq:workloadDecomposition}
 M_{n+k_n-1}=M_{k_n}+\sum\limits_{i=k_n+1}^{n+k_n-1} \Big[ M_i-M_{i-1} \Big].
\end{equation}
The rest of the proof consists of obtaining an expression for the expectation of the first term on the right-hand side, and showing that the expectation of the rest of the terms in the right-hand side is nonnegative.\\

First, note that $T_r+X(1+S_r)\leq D_{[k_n:k_n]}$, for all $r\leq k_n$, and thus $X(1+S_r)\leq \big(D_{[k_n:k_n]}-T_r\big)^+$, for all $r\leq k_n$. It follows that
\[ M_{k_n} = \sum\limits_{r=1}^{k_n} X(1+S_r). \]
Combining this with the fact that $S_1,\dots,S_{k_n}$ are independent exponential random variables with mean $1/\mu$, independent from $X$ (Assumption \ref{ass:independentExponential}), we get that
\begin{equation}\label{eq:initialWorkload}
 \mathbb{E}[M_{k_n}] = \mathbb{E} \left[ \sum\limits_{r=1}^{k_n} X(1+S_r) \right] = k_n \left( 1+\frac{1}{\mu} \right).
\end{equation}

On the other hand, note that the difference of workloads $M_i-M_{i-1}$ is how much more the servers have to work when up to $i$ replicas are ever made, compared to when only up to $i-1$ replicas are ever made. It is not immediately clear whether this quantity is positive or not. If the $i$-th replica indeed starts its service at some point, it adds workload to its assigned server. However, if the $i$-th replica finishes its service before $k_n$ of the first $i-1$ finish theirs, the $i$-th replica contributes to the premature cancellation of other replicas, effectively reducing the workloads of their servers. We will now show that, in expectation, the difference in workload is nonnegative.\\

Let us fix $i$, and consider the following partition of the sample space. Let $E_1$ be the event that $k_n$ replicas finish their service before the $i$-th replica spends $X$ units of time in service, i.e., the event that $D_{[k_n:i-1]} \leq T_i+X$. Moreover, for $\ell=0,\dots,k_n-1$, let $E_2(\ell)$ be the event that $\ell<k_n$ replicas finished their service by the time the $i$-th replica spends $X$ units of time in service, i.e., the event that $D_{[\ell:i-1]}\leq T_i+X < D_{[\ell+1:i-1]}$ (using the convention that $D_{[0:i-1]}=0$). We show that the expectation of the difference in workloads is nonnegative, for each of the events defined above, in the following two lemmas.

\begin{lemma}\label{lem:workload1}
If $\mathbb{P}(E_1)>0$, then $\mathbb{E}\big[M_i-M_{i-1} \mid E_1 \big]\geq 0$.
\end{lemma}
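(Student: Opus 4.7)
The plan is to prove the stronger pointwise inequality $M_i - M_{i-1} \geq 0$ everywhere on $E_1$, from which the conditional expectation bound is immediate. The driving observation is that on $E_1$ the $i$-th replica cannot be among the first $k_n$ to finish, so adding it leaves the job's completion time unchanged; all it can do is add its own work to the servers.

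First I would verify the identity $D_{[k_n:i]} = D_{[k_n:i-1]}$ on $E_1$. Since $S_i \geq 0$, we have $T_i + X(1+S_i) \geq T_i + X$, and by the defining inequality of $E_1$ the right-hand side is at least $D_{[k_n:i-1]}$. Hence the putative completion time of the $i$-th replica is no smaller than the $k_n$-th order statistic of the first $i-1$ putative completion times, so appending it to that set does not alter the $k_n$-th order statistic.

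Next I would decompose $M_i - M_{i-1}$ term by term using the identity just established. For each $r \leq i-1$ the summand $\min\{X(1+S_r),(D_{[k_n:i]}-T_r)^+\}$ that appears in $M_i$ equals the corresponding summand $\min\{X(1+S_r),(D_{[k_n:i-1]}-T_r)^+\}$ in $M_{i-1}$, so these cancel. The only surviving term is the contribution of replica $i$ itself to $M_i$, namely
\[ \min\bigl\{X(1+S_i),\,(D_{[k_n:i]}-T_i)^+\bigr\} \geq 0. \]
Consequently $M_i - M_{i-1} \geq 0$ pointwise on $E_1$, and taking conditional expectations finishes the argument.

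I do not foresee any real obstacle in this lemma; it is the easy half of the workload comparison, in which the extra replica is simply irrelevant to completion and contributes a nonnegative amount of work. The substantive difficulty will instead lie in the companion lemma for the events $E_2(\ell)$, where the $i$-th replica may finish early enough to trigger premature cancellation of older replicas, and one must carefully balance the workload saved by cancellation against the workload injected by replica $i$ — presumably by exploiting the memorylessness of the exponential slowdowns together with the exchangeability of the $(T_r,S_r)$ pairs inherent in the order-statistic structure.
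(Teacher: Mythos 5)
Your proof is correct and takes essentially the same approach as the paper: on $E_1$ one shows $D_{[k_n:i]}=D_{[k_n:i-1]}$ because $T_i+X(1+S_i)\geq T_i+X\geq D_{[k_n:i-1]}$, so the first $i-1$ summands of $M_i$ and $M_{i-1}$ coincide and the difference reduces to the nonnegative contribution of replica $i$.
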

\begin{proof}
First note that in the event $E_1$, we have $D_{[k_n:i-1]} \leq T_i+X(1+S_i)$ and thus $D_{[k_n:i-1]}=D_{[k_n:i]}$. It follows that
    \begin{align*}
     \mathbb{E}\big[M_i-M_{i-1}\mid E_1\big] &= \mathbb{E} \left[ \left. \sum\limits_{r=1}^i \min\Big\{X(1+S_r),\,\,\big(D_{[k_n:i]}-T_r\big)^+\Big\} \,\right|\, E_1 \right] \\
     &\qquad - \mathbb{E} \left[ \left. \sum\limits_{r=1}^{i-1} \min\Big\{X(1+S_r),\,\,\big(D_{[k_n:i-1]}-T_r\big)^+\Big\} \,\right|\, E_1 \right]\\
     &=\mathbb{E} \left[ \left. \min\Big\{X(1+S_i),\,\,\big(D_{[k_n:i]}-T_i\big)^+\Big\} \,\right|\, E_1 \right] \\
     &\geq 0.
    \end{align*}
\end{proof}

\begin{lemma}\label{lem:workload2}
  If $\mathbb{P}\big(E_2(\ell)\big)>0$, then $\mathbb{E}\big[M_i-M_{i-1}\mid E_2(\ell)\big]\geq 0$.
\end{lemma}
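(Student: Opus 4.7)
The plan is to decompose the difference $M_i - M_{i-1}$ by splitting the workload into contributions before and after the time $T_i + X$. Up to $T_i + X$, the contributions in both scenarios can be compared term by term, while past $T_i + X$ memorylessness of the exponential slowdowns and a standard exponential-race computation will force the expected contributions to agree.

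For the pre-$(T_i+X)$ part: on $E_2(\ell)$, the $\ell$ already-finished replicas among the first $i-1$ contribute their full service times $X(1+S_r)$ to both $M_i$ and $M_{i-1}$; each of the $i-1-\ell$ still-running replicas among the first $i-1$ has elapsed service time $T_i + X - T_r \geq X$ and contributes the same amount $T_i + X - T_r$ to both; and the $i$-th replica (present only in $M_i$) has accrued exactly $X$ units of service by $T_i + X$, since $X(1+S_i) \geq X$. Hence the with-$i$ scenario accumulates exactly $X$ more workload than the without-$i$ scenario up to time $T_i + X$.

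For the post-$(T_i+X)$ part: by memorylessness of the exponential slowdowns, conditional on the extended state at $T_i + X$ and on $E_2(\ell)$, the remaining service times of the $i - 1 - \ell$ still-running replicas are i.i.d.\ $\mathrm{Exp}(\mu/X)$, and in the with-$i$ scenario $X S_i$ is an independent $\mathrm{Exp}(\mu/X)$. In both scenarios we stop when $k_n - \ell$ further completions occur (defining $D_{[k_n:i-1]}$ and $D_{[k_n:i]}$, respectively). The key fact is that for $m$ i.i.d.\ $\mathrm{Exp}(\theta)$ variables the expected workload (the sum of the sojourn times of all $m$) accumulated until $k$ completions occur equals $k/\theta$, \emph{independent of $m$}: between the $(j-1)$-th and $j$-th completion there are $m - j + 1$ variables running and the inter-completion time has mean $1/((m-j+1)\theta)$, so each completion contributes $1/\theta$ in expectation. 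Applying this with $\theta = \mu/X$ and $k = k_n - \ell$ to both scenarios gives an identical expected post-$(T_i+X)$ workload of $(k_n - \ell) X / \mu$.

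Combining the two pieces yields
\[ \mathbb{E}\big[M_i - M_{i-1} \,\big|\, E_2(\ell)\big] = \mathbb{E}\big[X \,\big|\, E_2(\ell)\big] \geq 0. \]
The main technical subtlety will be formalizing the memorylessness step in the presence of an arbitrary Markovian policy: one has to condition carefully on the extended state at the (policy-dependent) time $T_i + X$ and invoke the strong Markov property, together with the fact that each slowdown $S_{j,r}$ is realized only when the $r$-th replica starts service, so that conditional on the state the exponential tails of the unfinished replicas are i.i.d.\ $\mathrm{Exp}(\mu/X)$ and are independent of the policy's past decisions.
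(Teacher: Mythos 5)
Your proof is correct and follows essentially the same approach as the paper's: decompose the workload difference into the contribution before and after $T_i+X$, show the pre-$T_i+X$ difference equals $X$ (the elapsed service of the extra replica), and use Markovianity of the extended state plus memorylessness of the exponential slowdowns to show the expected post-$T_i+X$ workload is $(k_n-\ell)X/\mu$ in both scenarios, independent of how many replicas remain. The paper carries out the exponential-race computation you sketch by writing the post-$T_i+X$ workload as a telescoping sum over the departures $D_{[\ell+1:i]},\dots,D_{[k_n:i]}$, but the underlying idea is identical.
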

\begin{proof}
In order to show that $\mathbb{E}[M_i-M_{i-1}\mid E_2(\ell)]\geq 0$, we will rewrite $M_i$ and $M_{i-1}$ as the sum of the workload processed before and after time $T_i+X$ in the event $E_2(\ell)$, and show that the conditional expectation of the difference in workloads is nonnegative both before and after time $T_i+X$.

On the one hand, the expected workload processed before time $T_i+X$ conditioned on the event $E_2(\ell)$ can be expressed as the sum of the expected elapsed service time of each replica at time $T_i+X$ conditioned on the event $E_2(\ell)$, that is
\[ \mathbb{E}\left[\left. \sum\limits_{r=1}^{i-1} \min\big\{X(1+S_r),(T_i+X)-T_r\big\} \,\right|\, E_2(\ell) \right] \]
when $i-1$ replicas are created, and
\[ \mathbb{E}\left[\left. X+ \sum\limits_{r=1}^{i-1} \min\big\{X(1+S_r),(T_i+X)-T_r\big\} \,\right|\, E_2(\ell) \right] \]
when $i$ replicas are created. Consequently, the expected difference in workload processed before time $T_i+X$ is
\begin{equation*}\label{eq:exp_workload_initial}
  \mathbb{E}[X \mid E_2(\ell)] \geq 0.
\end{equation*}

On the other hand, the expected workload processed after time $T_i+X$ conditioned on the event $E_2(\ell)$ can be expressed as the sum of the expected times that replicas spend between consecutive departures of replicas after time $T_i+X$ conditioned on the event $E_2(\ell)$, multiplied by the number of replicas in service during that time, that is
\begin{align}
 &\mathbb{E}\left[ (i-1-\ell)\left[ D_{[\ell+1:i-1]} - (T_i+X) \right] {\color{white}\sum\limits_{r=\ell+1}^{k_n-1}} \right. \nonumber \\
     &\qquad\qquad\qquad\qquad \left.\left. + \sum\limits_{r=\ell+1}^{k_n-1} (i-1-r)\left( D_{[r+1:i-1]} -D_{[r:i-1]} \right) \right| E_2(\ell) \right] \label{eq:workload1}
\end{align}
when $i-1$ replicas are created, and
\begin{align}
 &\mathbb{E}\left[ (i-\ell)\left[ D_{[\ell+1:i]} - (T_i+X) \right] {\color{white}\sum\limits_{r=\ell+1}^{k_n-1}} \right. \nonumber \\
     &\qquad\qquad\qquad\qquad\qquad\qquad \left.\left. + \sum\limits_{r=\ell+1}^{k_n-1} (i-r)\left( D_{[r+1:i]} -D_{[r:i]} \right) \right| E_2(\ell) \right] \label{eq:workload2}
\end{align}
when $i$ replicas are created. We will now argue that these two expressions are equal.

Consider the case where $i$ replicas are made. Recall that the extended queue state process $\mathcal{Q}_n(\cdot)$, which keeps track of the size of the replicas and of their elapsed service times but not of their remaining service times, is Markov (Assumption \ref{ass:markovianity}). Moreover, recall that slowdowns are independent and exponentially distributed with mean $1/\mu$, and independent from the task size $X$ (Assumption \ref{ass:independentExponential}). Combining these two facts it follows that, conditioned on $E_2(\ell)$, the remaining service times of the $i-\ell$ replicas that are still in service at time $T_i+X$ are $X\tilde S_1,\dots,X\tilde S_{i-\ell}$, where $\tilde S_1,\dots,\tilde S_{i-\ell}$ are independent and exponentially distributed random variables with mean $1/\mu$, independent from $X$. Moreover, note that for $r=\ell+1,\dots,k_n$, $D_{[r,i]}-(T_i+X)$ is the $r$-th order statistic of the remaining service times $X\tilde S_1,\dots,X\tilde S_{i-\ell}$. It follows that, conditioned on $X=x$ and $E_2(\ell)$, the random variables $D_{[\ell+1:i]} - (T_i+X)$, and $D_{[r+1:i]} -D_{[r:i]}$ are exponential with rates $\mu(i-\ell)/x$ and $\mu(i-r)/x$, respectively. Combining this with Equation \eqref{eq:workload2} we obtain
\begin{align*}
     &\mathbb{E}\left[ \left. (i-\ell)\left[ D_{[\ell+1:i]} - (T_i+X) \right] + \sum\limits_{r=\ell+1}^{k_n-1} (i-r)\left( D_{[r+1:i]} -D_{[r:i]} \right) \right| E_2(\ell) \right] \\
     &\qquad=\mathbb{E}\left[ \mathbb{E}\left[ (i-\ell)\left[ D_{[\ell+1:i]} - (T_i+X) \right] {\color{white}\sum\limits_{r=\ell+1}^{k_n-1}} \right.\right. \\
     &\qquad\qquad\qquad\qquad\quad\,\, \left.\left.\left.\left.+ \sum\limits_{r=\ell+1}^{k_n-1} (i-r)\left( D_{[r+1:i]} -D_{[r:i]} \right) \right| X,\, E_2(\ell) \right] \right| E_2(\ell) \right] \\
     &\qquad=\mathbb{E}\left[ \left. (i-\ell)\left[ \frac{X}{\mu(i-\ell)}  \right] + \sum\limits_{r=\ell+1}^{k_n-1} (i-r)\left[ \frac{X}{\mu(i-r)} \right] \right| E_2(\ell) \right] \\
     &\qquad=\left(\frac{k_n-\ell}{\mu}\right) \mathbb{E}\big[X \,\big|\, E_2(\ell)\big].
\end{align*}
Similarly, it can be checked that the expected workload processed after time $T_i+X$ when only $i-1$ replicas are created (Equation \eqref{eq:workload1}) is the same.
\end{proof}

Since the events $E_1$ and $E_2(0),\dots,E_2(k_n-1)$ form a partition, lemmas \ref{lem:workload1} and \ref{lem:workload2} imply that
\begin{align*}
 \mathbb{E}[M_i-M_{i-1}] &= \mathbb{E}[M_i-M_{i-1}\mid E_1] \mathbb{P}(E_1) \\
 & \qquad\qquad\qquad\qquad\quad  + \sum\limits_{r=0}^{k_n-1} \mathbb{E}[M_i-M_{i-1}\mid E_2(r)] \mathbb{P}(E_2(r))
\end{align*}
is nonnegative, for all $i$. Combining this with equations \eqref{eq:workloadDecomposition} and \eqref{eq:initialWorkload}, the expected workload of a typical job is
\[ \mathbb{E}[M_{n+k_n-1}]\geq k_n \left( 1+\frac{1}{\mu} \right). \]
Finally, combining this with the stability condition in Equation \eqref{eq:stabilityCondition}), we obtain the necessary condition for stability
\[ \left(\frac{\lambda n}{k_n}\right) k_n\left( 1+\frac{1}{\mu} \right) < n, \]
which is equivalent to
\[ \lambda < \frac{1}{1+\frac{1}{\mu}}. \]

\section{Proof of Theorem 3.2}\label{proof:lowerBound}
We consider a relaxation of the problem in which there are an infinite number of servers available, but there can only be up to $n$ busy servers in expectation, in steady-state, and there can only be up to $n$ replicas associated with the same job in service, at any point in time. Since this includes policies that have only up to $n$ busy servers at all times, any lower bound for the delay in this relaxed setting is also a lower bound for the delay in the original setting. Thus, we shall prove a lower bound for this infinite-server relaxation. First, we will obtain a lower bound for the case where there are only finitely many task sizes, and then we generalize it to task sizes with general distributions via a comparison argument.

\subsection{Finitely many task sizes}

Consider the case where the task size $X$ can only take values in a finite set $\mathcal{X}$, with $\mathbb{E}[X]=\eta$ (which is a further relaxation of the original assumption that $\mathbb{E}[X]=1$). For each $x\in\mathcal{X}$, let $p(x)\triangleq\mathbb{P}(X=x)$ be the probability that a job has tasks of size $x$.

Let us fix a stable admissible policy, and consider the system in steady-state. For each $x\in\mathcal{X}$, let $\tilde{p}(x)$ be the expected number of servers (normalized by $n$) that are working on replicas of size $x$, in steady-state. In this setting, the proof is completed in three steps:
\begin{enumerate}
  \item For each $x\in\mathcal{X}$, we obtain a lower bound for the expected time that a typical \emph{replica} of size $x$ spends in each phase, as a function of $\tilde p(x)$.
  \item For each $x\in\mathcal{X}$, we show that the expected time that a typical \emph{replica} of size $x$ spends in each phase is smaller than or equal to expected time that a typical \emph{job} with replicas of size $x$ spends in the same phase. Combining this with the first step, and adding up the lower bounds for all phases and task sizes, we obtain a lower bound for the expected delay of a typical job as a function of $\{\tilde p(x):x\in\mathcal{X}\}$.
  \item We minimize the lower bound obtained in the previous step with respect to $\{\tilde p(x):x\in\mathcal{X}\}$, which yields a lower bound on the expected delay of a typical job that only depends on the parameters of the system.
\end{enumerate}

\begin{lemma}\label{lem:first}
  For every $x\in\mathcal{X}\backslash\{0\}$, and for $i=1,\dots,k_n$, we have
\[ \mathbb{E}\left[ \left. \overline{W}_n^{(i)} \right| X=x  \right] \geq \frac{x}{k_n\mu \left( \frac{\tilde{p}(x)}{\lambda p(x) x}- \frac{1}{\mu} \right) - \mu(i-1)}, \]
where $\overline{W}_n^{(i)}$ is the time that a typical replica of size $X$ spends in phase $i$.
\end{lemma}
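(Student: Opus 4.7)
The plan is to obtain the bound by combining two applications of Little's law (one for the total server-time devoted to replicas of size $x$, one for the per-phase population of such replicas) with a pathwise counting argument that tracks replicas across phase transitions. The exponential slowdown assumption enters crucially through the memorylessness of the residual service time.

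First I would establish an exact identity for $\bar N_i(x)$, the steady-state expected number of replicas of size $x$ in phase $i$ (for $i \geq 1$). By Assumption \ref{ass:independentExponential} and memorylessness of the exponential, the residual service time of any phase-$i$ replica of size $x$ is distributed as $x\cdot\mathrm{Exp}(\mu)$, so it completes at instantaneous rate $\mu/x$. Each such completion coincides exactly with the associated job's transition from phase $i$ to phase $i+1$, and since every size-$x$ job passes through phase $i$ exactly once at rate $\lambda p(x)n/k_n$, rate conservation in steady state yields $\bar N_i(x)\cdot\mu/x=\lambda p(x)n/k_n$, i.e., $\bar N_i(x)=x\lambda p(x)n/(k_n\mu)$. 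Setting $\rho(x):=\tilde p(x)/(\lambda p(x)x)$, Little's law applied to the total server-time per size-$x$ job gives $U(x)=k_n x\rho(x)$, while applied per phase it gives $\mathbb{E}[W_i\mid X=x]=x/\mu$ for each $i\geq 1$, where $W_i$ is the total replica-time spent in phase $i$ per size-$x$ job. Summing over phases yields $\mathbb{E}[W_0\mid X=x]=k_n x(\rho(x)-1/\mu)$.

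Next I would count replicas passing through phase $i$ for a typical size-$x$ job. Let $e_j$ denote the number of replicas that transition from phase $0$ into phase $j$ during (or at the start of) the job's phase $j$, with $e_1$ including the triggering replica, and let $r_i(T_i)$ denote the total number of replicas ever in phase $i$ during the job's phase $i$. The recursion $r_1(T_1)=e_1$ and $r_i(T_i)=r_{i-1}(T_{i-1})-1+e_i$ for $i\geq 2$ (since exactly one phase-$(i-1)$ replica completes at the end of phase $i-1$) yields $r_i(T_i)=\sum_{j=1}^i e_j-(i-1)$. Since $E(x):=\sum_{j=1}^{k_n}e_j$ counts all useful replicas of this job (those ever reaching $x$ units of service), we have $r_i(T_i)\leq E(x)-(i-1)$ pathwise. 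Each useful replica contributes exactly $x$ units to $W_0$, giving $W_0\geq xE(x)$ pathwise, and taking expectations yields $\mathbb{E}[E(x)\mid X=x]\leq k_n(\rho(x)-1/\mu)$. Finally, Little's law applied to replicas of size $x$ in phase $i$, with entry rate $(\lambda p(x)n/k_n)\mathbb{E}[r_i(T_i)\mid X=x]$, gives
\[
\mathbb{E}\bigl[\overline W_n^{(i)}\,\big|\,X=x\bigr] \;=\; \frac{\bar N_i(x)}{(\lambda p(x)n/k_n)\,\mathbb{E}[r_i(T_i)\mid X=x]} \;=\; \frac{x}{\mu\,\mathbb{E}[r_i(T_i)\mid X=x]},
\]
which combined with the bound on $\mathbb{E}[r_i(T_i)\mid X=x]$ produces the claim.

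The main obstacle is justifying the rate-conservation identity in the first step: rigorously establishing that the completion rate of phase-$i$ replicas of size $x$ equals the arrival rate $\lambda p(x)n/k_n$ of size-$x$ jobs requires a careful generator or martingale computation for the Markov process $\mathcal{Q}_n(\cdot)$, invoking both the memorylessness of the exponential slowdowns and a PASTA-type argument to handle the dependence of phase transitions on replica completions. Once that identity is in hand, the remaining ingredients — Little's law applied to the labeled sub-populations indexed by $x$ and the purely combinatorial recursion tracking replicas across phase boundaries — should be routine.
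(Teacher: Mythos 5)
Your proof is correct and follows essentially the same route as the paper: both establish $\mathbb{E}\big[N_n^{(i)}(x)\big]=\lambda p(x)\,x\,n/(k_n\mu)$ via memorylessness and rate balance, then apply Little's law to the per-phase populations, and both bound the phase-$i$ entry rate by the observation that at least $i-1$ useful replicas complete before phase $i$ begins. Your passage through $U(x)$ and the explicit recursion $r_i(T_i)=\sum_{j\le i}e_j-(i-1)$ is merely a more spelled-out version of the paper's bounds on $\lambda_n^{(0)}(x)$ and $\lambda_n^{(i)}(x)\le\lambda_n^{(0)}(x)-\tfrac{\lambda p(x)n}{k_n}(i-1)$; the two arguments are algebraically identical.
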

\begin{proof}
We first fix some $x\in\mathcal{X}\backslash\{0\}$, and introduce some notation. For $i=0,\dots,k_n$,
\begin{itemize}
\item [(i)] let $N_n^{(i)}(x)$ be the steady-state number of replicas of size $x$, that are in phase $i$,
\item [(ii)] let $\lambda_n^{(i)}(x)$ be the arrival rate of replicas of size $x$, to phase $i$.
\end{itemize}

Recall that the extended queue state process $\mathcal{Q}_n(\cdot)$, which keeps track of the size of the replicas and of their elapsed service times but not of their remaining service times, is Markov (Assumption \ref{ass:markovianity}). Combining this with the fact that slowdowns are independent and exponential with mean $1/\mu$ (Assumption \ref{ass:independentExponential}), and with the fact that a job either changes phases or leaves the system whenever a replica finishes its service, we see that the departure rate of jobs with tasks of size $x$ from phase $i>0$ is equal to
\[ \frac{\mu}{x}\mathbb{E}\left[N_n^{(i)}(x)\right]. \]
Furthermore, since all jobs go through all phases, then the arrival rate of jobs with tasks of size $x$ to each phase is the same as the arrival rate of jobs with tasks of size $x$ to the system, which is $\lambda n p(x)/k_n$. Since the system is in steady-state, the arrival and exit rates of jobs from each phase must be the same, and thus
\[ \frac{\mu}{x}\mathbb{E}\left[N_n^{(i)}(x)\right]=\frac{\lambda p(x) n}{k_n}, \]
for $i=1,\dots,k_n$. Equivalently, we have
\begin{equation}\label{eq:N2}
  \mathbb{E}\left[N_n^{(i)}(x)\right]=\frac{\lambda p(x) xn}{k_n\mu},
\end{equation}
for all $i=1,\dots,k_n$. In particular, this means that the expected number of replicas in phases $1$ through $k_n$ is independent from the policy and from the phase.

On the other hand, since $\tilde{p}(x)$ is the expected fraction of servers that are working on replicas of size $x$, we have
\[ \sum\limits_{i=0}^{k_n} \mathbb{E}\left[N_n^{(i)}(x)\right] \leq \tilde{p}(x)n. \]
The inequality above is an equality only if there are no wasteful replicas of size $x$, i.e., replicas that spend less than $x$ units of time in service. Combining this with Equation \eqref{eq:N2}, we obtain
\begin{equation}\label{eq:N0}
 \mathbb{E}\left[N_n^{(0)}(x)\right] \leq n\left( \tilde{p}(x)-\frac{\lambda p(x) x}{\mu} \right).
\end{equation}
Moreover, recall that all useful replicas of size $x$ spend exactly $x$ units of time in phase $0$, by definition. Thus,
\[ \mathbb{E}\left[\left. \overline{W}_n^{(0)}\right| X=x\right]=x. \]
Combining this with Equation \eqref{eq:N0}, and applying Little's law, we obtain
\begin{equation}\label{eq:lambda0}
 \lambda_n^{(0)}(x)= \frac{\mathbb{E}\left[N_n^{(0)}(x)\right]}{\mathbb{E}\left[ \left.\overline{W}_n^{(0)} \right| X=x \right]} \leq n\left( \frac{\tilde{p}(x)}{x}- \frac{\lambda p(x)}{\mu} \right).
\end{equation}

Recall that, for $i=1,\dots,k_n$, all the replicas that enter phase $i$ had to enter phase $0$ before. Furthermore, there are at least $i-1$ replicas associated with the same job that enter phase $0$ but do not enter phase $i$ (because they finished their service earlier). Combining these two facts, and using that the arrival rate of jobs with tasks of size $x$ is $\lambda p(n) n / k_n$, we obtain
\begin{equation}\label{eq:lambda_i}
 \lambda_n^{(i)}(x) \leq \lambda_n^{(0)}(x) - \frac{\lambda p(x) n}{k_n}(i-1),
\end{equation}
for $i=1,\dots,k_n$. Applying Little's law once more, and using equations \eqref{eq:lambda0} and \eqref{eq:lambda_i}, we get
\begin{align}
 \mathbb{E}\left[ \left. \overline{W}_n^{(i)} \right| X=x \right] &= \frac{\mathbb{E}\left[N_n^{(i)}(x)\right]}{\lambda_n^{(i)}(x)} \nonumber \\
 &\geq \frac{\mathbb{E}\left[N_n^{(i)}(x)\right]}{\lambda_n^{(0)}(x) - \frac{\lambda p(x) n}{k_n}(i-1)} \nonumber \\
  &\geq \frac{x}{k_n\mu \left( \frac{\tilde{p}(x)}{\lambda p(x) x}- \frac{1}{\mu} \right) - \mu(i-1)}, \label{eq:boundT}
\end{align}
for $i=1,\dots,k_n$.
\end{proof}

We have thus obtained a lower bound on the expected time that a typical replica of size $x$ spends in phase $i$. However, we need to obtain a lower bound on the expected time that a typical \emph{job} with tasks of size $x$ spends in phase $i$. We prove that the former is smaller than or equal to the latter in the following claim.

\begin{lemma}\label{lem:second}
  For every $x\in\mathcal{X}\backslash\{0\}$, and for $i=1,\dots,k_n$, we have
  \[ \mathbb{E}\left[\left. W_n^{(i)} \right| X=x\right] \geq \mathbb{E}\left[\left. \overline{W}_n^{(i)} \right| X=x \right], \]
  where $W_n^{(i)}$ is the time that a typical job with tasks of size $X$ spends in phase $i$.
\end{lemma}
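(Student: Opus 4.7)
The plan is to connect $\mathbb{E}[\overline{W}_n^{(i)}\mid X=x]$ and $\mathbb{E}[W_n^{(i)}\mid X=x]$ through Little's law and a compensator identity. By Little's law applied to the size-$x$ replicas in phase $i$, and rewriting the replica-time of a single job as the pathwise integral $\int_0^{W_n^{(i)}}R^{(i)}(s)\,ds=\sum_{r:\,T_r^{(i)}<\infty}(W_n^{(i)}-T_r^{(i)})$ of its in-phase-$i$ replica-count process, one obtains
\[
\mathbb{E}[\overline{W}_n^{(i)}\mid X=x]=\frac{\mathbb{E}\!\left[\int_0^{W_n^{(i)}}R^{(i)}(s)\,ds\mid X=x\right]}{\mathbb{E}[R^{(i)}\mid X=x]},
\]
where $R^{(i)}(s)$ is the number of replicas of a typical size-$x$ job present in phase $i$ at time $s$ relative to when the job entered that phase, and $R^{(i)}$ is the final count. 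Next, I would establish the identity $\mathbb{E}[\int_0^{W_n^{(i)}}R^{(i)}(s)\,ds\mid X=x]=x/\mu$: by Assumption~\ref{ass:independentExponential} and memorylessness each in-phase replica completes at rate $\mu/x$, by Assumption~\ref{ass:markovianity} $R^{(i)}$ is progressive, so the counting process $\mathbf{1}_{W_n^{(i)}\leq t}$ has compensator $\int_0^{t\wedge W_n^{(i)}}(\mu/x)R^{(i)}(s)\,ds$, and optional stopping at the a.s. finite time $W_n^{(i)}$ gives expected compensator $1$. This reduces the lemma to the product inequality $\mathbb{E}[W_n^{(i)}\mid X=x]\,\mathbb{E}[R^{(i)}\mid X=x]\geq x/\mu$.

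To prove the product inequality, I would condition on a $\sigma$-algebra $\mathcal{H}$ carrying the schedule of would-be replica entries into phase $i$ (determined by the phase-$0$ and queueing dynamics, outside the in-phase-$i$ slowdowns of this job). Conditional on $\mathcal{H}$, the cumulative intensity $\Lambda(s)=(\mu/x)\int_0^s R^{(i)}(u)\,du$ is a deterministic increasing function, and the time change $W_n^{(i)}=\Lambda^{-1}(U)$ with $U\sim\mathrm{Exp}(1)$ gives $R^{(i)}=(x/\mu)/g'(U)$ for $g=\Lambda^{-1}$. Cauchy-Schwarz against the probability measure $e^{-u}\,du$ on $[0,\infty)$ yields
\[
\Bigl(\int_0^\infty g'(u)e^{-u}\,du\Bigr)\Bigl(\int_0^\infty\frac{e^{-u}}{g'(u)}\,du\Bigr)\geq\Bigl(\int_0^\infty e^{-u}\,du\Bigr)^2=1,
\]
which reads exactly $\mathbb{E}[W_n^{(i)}\mid\mathcal{H}]\cdot\mathbb{E}[R^{(i)}\mid\mathcal{H}]\geq x/\mu$ almost surely. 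A second Cauchy-Schwarz, $\mathbb{E}[W_n^{(i)}\mid X=x]\,\mathbb{E}[R^{(i)}\mid X=x]\geq\bigl(\mathbb{E}\sqrt{\mathbb{E}[W_n^{(i)}\mid\mathcal{H}]\mathbb{E}[R^{(i)}\mid\mathcal{H}]}\bigr)^2\geq x/\mu$, lifts the inequality to its unconditional form.

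The main obstacle is making the conditioning in the last step rigorous when the admissible policy can react to in-phase-$i$ completions, so that the schedule depends implicitly on past completion times of this job's own replicas. This is handled by an enlargement-of-filtration argument that pre-samples iid $\mathrm{Exp}(\mu)$ slowdowns for all potential replicas (permissible by Assumption~\ref{ass:independentExponential}), rendering the schedule $\mathcal{H}$-measurable without disturbing the time-changed exponential representation used in the Cauchy-Schwarz step above.
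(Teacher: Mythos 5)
Your proposal is correct and takes a genuinely different route from the paper. The paper proves the lemma by a hazard-rate domination: conditioned on $X=x$, the hazard rate of $\overline{W}_n^{(i,r)}$ (the time the $r$-th replica to enter phase $i$ spends there) dominates the hazard rate of $W_n^{(i)}$, because the $r$-th replica's in-phase clock starts $T_r^{(i)}\geq 0$ units later and therefore always ``sees'' at least as many competing replicas; this gives $\mathbb{E}[W_n^{(i)}\mid X=x]\geq\mathbb{E}[\overline{W}_n^{(i,r)}\mid X=x]$ for every $r$, and the lemma follows by expressing $\mathbb{E}[\overline{W}_n^{(i)}\mid X=x]$ as the length-biased convex combination $\sum_r w_r\,\mathbb{E}[\overline{W}_n^{(i,r)}\mid X=x]$. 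Your route replaces the termwise hazard-rate comparison with a clean reduction to the product inequality $\mathbb{E}[W_n^{(i)}\mid X=x]\,\mathbb{E}[R^{(i)}\mid X=x]\geq x/\mu$, using the same Palm/Little identity $\mathbb{E}[\overline{W}_n^{(i)}\mid X=x]=\mathbb{E}[\int_0^{W_n^{(i)}}R^{(i)}\,ds\mid X=x]/\mathbb{E}[R^{(i)}\mid X=x]$ that the paper also uses implicitly, together with the compensator identity $\mathbb{E}[\int_0^{W_n^{(i)}}R^{(i)}\,ds\mid X=x]=x/\mu$ (which in the paper is Equation~\eqref{eq:N2} divided by the job arrival rate). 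The conditional Cauchy--Schwarz step $(\int g'e^{-u})(\int e^{-u}/g')\geq 1$ under the exponential time change is a nice, concrete replacement for the paper's somewhat terse ``it is easily checked'' hazard comparison, and the outer Cauchy--Schwarz lifting is correct since $\mathbb{E}[W_n^{(i)}\mid\mathcal{H}]\,\mathbb{E}[R^{(i)}\mid\mathcal{H}]\geq x/\mu$ holds almost surely. What both arguments share --- and the step you are right to flag --- is the need to justify that, conditioned on the policy-observable history, the schedule of phase-$i$ entry times is measurable with respect to a $\sigma$-algebra under which the in-phase-$i$ exponential clocks remain i.i.d.\ $\mathrm{Exp}(\mu/x)$; the paper handles this implicitly via Assumption~\ref{ass:markovianity} together with memorylessness (the same combination it uses to write down $h_{W_n^{(i)}}(t\mid x)$), and your pre-sampling/decoupling sketch is the standard way to make that explicit. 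One small technical point you should be careful with: the final count $R^{(i)}$ is $R^{(i)}(W_n^{(i)-})$, not $R^{(i)}(W_n^{(i)})$, but since $U$ has a density and the jump set of $g'$ is countable, the identity $R^{(i)}=(x/\mu)/g'(U)$ still holds almost surely, so the argument goes through.
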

\begin{proof}
Recall that the time that a typical job spends in phase $i$ is
\begin{align*}
 W_n^{(i)}&=\min\left\{T_1^{(i)}+XS_1^{(i)},\dots,T_n^{(i)}+XS_n^{(i)} \right\},
\end{align*}
where $T_1^{(i)},\dots,T_n^{(i)}$ are the relative phase start times of replicas in the $i$-th phase, and $XS_1^{(i)},\dots,XS_n^{(i)}$ are the remaining service times of the replicas in the $i$-th phase. {\color{black}Moreover, recall that the extended queue state process $\mathcal{Q}_n(\cdot)$, which keeps track of the size of the replicas and of their elapsed service times but not of their remaining service times, is Markov (Assumption \ref{ass:markovianity}), and that slowdowns are independent and exponentially distributed with mean $1/\mu$, and independent from $X$ (Assumption \ref{ass:independentExponential}). Combining these two facts we get that $S_1^{(i)},\dots,S_n^{(i)}$ are independent and exponentially distributed with mean $1/\mu$, and independent from $X$. It follows that, conditioned on $X=x> 0$, the hazard rate of $W_n^{(i)}$ at $t$ is equal to $\mu/x$ times the expected number of replicas in phase $i$ associated with a typical job, $t$ units of time after it has started its $i$-th phase, given that $W_n^{(i)}\geq t$. Namely, conditioned on $X=x> 0$, the hazard rate of $W_n^{(i)}$ at $t$ is
\begin{equation*}
  h_{W_n^{(i)}}(t \mid x) = \frac{\mu}{x}\sum\limits_{\ell=1}^n \mathbb{P}\left( \left. T^{(i)}_\ell < t \,\right|\,W_n^{(i)}> t,\, X=x \right).
\end{equation*}

Let $\overline{W}_n^{(i,r)}$ be the time that the $r$-th replica to enter phase $i$ is in service, conditioned on having entered phase $i$ at some point in time. In particular,
\begin{align*}
 \mathbb{P}\left( \left. \overline{W}_n^{(i,r)} > t \,\right|\, X=x \right) &= \mathbb{P}\Big(  W_n^{(i)} - T_r^{(i)} > t \,\Big|\, W_n^{(i)}> T_r^{(i)}, \, X=x \Big),
\end{align*}
for all $t\geq 0$. Using the same argument used to obtain the hazard rate of $W_n^{(i)}$ we get that, conditioned on $X=x$, the hazard rate of $\overline{W}_n^{(i,r)}$ at $t$ is
\begin{equation*}
  h_{\overline{W}_n^{(i,r)}}(t \mid x) = \frac{\mu}{x}\sum\limits_{\ell=1}^n \mathbb{P}\left( \left. T^{(i)}_\ell < t + T_r^{(i)} \,\right|\,W_n^{(i)}> T_r^{(i)}+ t,\, X=x \right),
\end{equation*}
It is easily checked that
\[ h_{\overline{W}_n^{(i,r)}}(t \mid x) \geq h_{W_n^{(i)}}(t \mid x) \]
for all $t\geq 0$, and thus
\begin{equation}
  \mathbb{E}\left[ \left. W_n^{(i)} \right| X=x \right] \geq \mathbb{E}\left[\left. \overline{W}_n^{(i,r)} \right| X=x \right], \label{eq:replicaVsJob}
\end{equation}
for all $r\geq 1$.}

  On the other hand, since the $r$-th replica of phase $i$ starts its service only if $W_n^{(i)}> T_r^{(i)}$, it follows that the probability that a typical job with tasks of size $x$ has at least $r$ replicas associated with it in service during its $i$-th phase is
  \[ \mathbb{P}\left(\left.W_n^{(i)}> T_r^{(i)} \right| X=x \right). \]
  Then, the expected number of replicas associated with a typical job with tasks of size $x$ in phase $i$ is
  \[ \sum\limits_{k=1}^n \mathbb{P}\left(\left.W_n^{(i)}> T_k^{(i)} \right| X=x \right). \]
  This means that the fraction of replicas of size $x$ in phase $i$ that are the $r$-th replica to start phase $i$ in a typical job is
  \[ \frac{ \mathbb{P}\left(\left.W_n^{(i)}> T_r^{(i)}\right| X=x\right)}{\sum\limits_{k=1}^n \mathbb{P}\left(\left.W_n^{(i)}> T_k^{(i)}\right| X=x\right) }. \]
  Therefore, the expected time that a typical replica in phase $i$ is in service is equal to the weighted average
  \begin{align*}
   \mathbb{E} \left[\left. \overline{W}_n^{(i)}\right| X=x \right] &= \sum\limits_{r=1}^n \left[ \frac{ \mathbb{P}\left(\left.W_n^{(i)}> T_r^{(i)}\right| X=x\right)}{\sum\limits_{k=1}^n \mathbb{P}\left(\left.W_n^{(i)}> T_k^{(i)}\right| X=x\right) } \right] \mathbb{E}\left[\left. \overline{W}_n^{(i,r)}\right| X=x \right].
  \end{align*}
  Combining this with Equation \eqref{eq:replicaVsJob}, we obtain
  \begin{align*}
   \mathbb{E} \left[ \left.\overline{W}_n^{(i)}\right| X=x \right] &\leq  \sum\limits_{r=1}^n \left[ \frac{ \mathbb{P}\left(\left.W_n^{(i)}> T_r^{(i)}\right| X=x\right)}{\sum\limits_{k=1}^n \mathbb{P}\left(\left.W_n^{(i)}> T_k^{(i)}\right| X=x\right) } \right] \mathbb{E}\left[ \left.W_n^{(i)}\right| X=x \right] \\
   &= \mathbb{E}\left[ \left. W_n^{(i)}\right| X=x \right],
  \end{align*}
  which concludes the proof of the lemma.
\end{proof}

Combining lemmas \ref{lem:first} and \ref{lem:second}, we obtain
\begin{align*}
 \mathbb{E}\big[W_n \big| X=x \big] &= \sum\limits_{i=0}^{k_n} \mathbb{E}\left[\left.W_n^{(i)}\right|X=x\right] \\
 &\geq x + \sum\limits_{i=1}^{k_n} \mathbb{E}\left[\left.\overline{W}_n^{(i)}\right|X=x\right]\\
 &\geq x\left( 1+\sum\limits_{i=1}^{k_n} \frac{1}{k_n\mu \left( \frac{\tilde{p}(x)}{\lambda p(x) x}- \frac{1}{\mu} \right) - \mu(i-1)} \right),
\end{align*}
for all $x\in\mathcal{X}\backslash\{0\}$. Moreover, combining this lower bound with the fact that the service time of any job with tasks of size $0$ is $0$, i.e. $\mathbb{E}[W_n(0)]=0$, and with the fact that the expected size of tasks is $\eta$, we obtain
\begin{align}
   \mathbb{E}[W_n] &= \sum_{x\in\mathcal{X}} p(x) \mathbb{E}\big[W_n\big| X=x\big] \nonumber \\
   &\geq \sum_{x\in\mathcal{X}\backslash\{0\}} x p(x) \left( 1+\sum\limits_{i=1}^{k_n} \frac{1}{k_n\mu \left( \frac{\tilde{p}(x)}{\lambda p(x) x}- \frac{1}{\mu} \right) - \mu(i-1)} \right) \nonumber \\
   &= \eta + \sum_{x\in\mathcal{X}\backslash\{0\}} \sum\limits_{i=1}^{k_n} \frac{xp(x)}{k_n\mu \left( \frac{\tilde{p}(x)}{\lambda p(x) x}- \frac{1}{\mu} \right) - \mu(i-1)}. \label{eq:unknownBound}
\end{align}

At this point, we have a lower bound for the expected service time of a job, but it depends on the unknown quantities $\{\tilde{p}(x):x\in\mathcal{X}\backslash \{0\}\}$. In the following lemma, we obtain a lower bound that only depends on the system parameters $\lambda$, $\mu$, $\eta$, and $k_n$, by minimizing the lower bound in Equation \eqref{eq:unknownBound} with respect to the unknown quantities, over an appropriate domain.

\begin{lemma}\label{lem:finiteBound}
  We have
  \[ \mathbb{E}[W_n] \geq \eta\left( 1+\frac{1}{\mu}\sum\limits_{i=1}^{k_n} \frac{1}{k_n \left( \frac{1}{\lambda \eta}- \frac{1}{\mu} \right) - i+1} \right). \]
\end{lemma}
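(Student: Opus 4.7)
The plan is to derive the claimed bound by minimizing the right-hand side of Equation~\eqref{eq:unknownBound} over all values of $\{\tilde p(x):x\in\mathcal X\setminus\{0\}\}$ that are achievable by a stable admissible policy; since Equation~\eqref{eq:unknownBound} holds with the (unknown) $\tilde p$ associated with any particular policy, any such minimum is itself a valid universal lower bound on $\mathbb E[W_n]$.

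To set up the minimization, I would reparametrize by letting $v(x) \triangleq \tilde p(x)/(\lambda x p(x))$ and introducing
$$g(v) \triangleq \sum_{i=1}^{k_n}\frac{1}{k_n\mu v - k_n - \mu(i-1)}.$$
Equation~\eqref{eq:unknownBound} then reads $\mathbb E[W_n] \geq \eta + \sum_{x\in\mathcal X\setminus\{0\}} xp(x)\,g(v(x))$. Because $\tilde p(x)$ is the (normalized) expected number of servers dedicated to replicas of size $x$ and there are only $n$ servers, the constraint $\sum_{x} \tilde p(x) \leq 1$ rewrites as $\sum_{x\neq 0} xp(x)\,v(x) \leq 1/\lambda$.

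The key observation is that, on the regime where all the denominators $k_n\mu v - k_n -\mu(i-1)$ are positive (which is guaranteed for any stable admissible policy by the chain of inequalities leading to Lemma~\ref{lem:first}), $g$ is a sum of functions of the form $v\mapsto 1/(av+b)$ with $a>0$ and $av+b>0$, and is therefore convex and strictly decreasing. Viewing $\{xp(x)/\eta\}_{x\neq 0}$ as a probability measure on $\mathcal X\setminus\{0\}$ (size-weighted), Jensen's inequality gives
$$\sum_{x\neq 0}\frac{xp(x)}{\eta}\, g\bigl(v(x)\bigr) \;\geq\; g\!\left(\sum_{x\neq 0}\frac{xp(x)}{\eta}\,v(x)\right).$$
The argument of $g$ on the right is at most $1/(\lambda\eta)$ by the server budget constraint, and since $g$ is decreasing, the right-hand side is in turn bounded below by $g(1/(\lambda\eta))$. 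A direct simplification of
$$g\!\left(\frac{1}{\lambda\eta}\right) = \frac{1}{\mu}\sum_{i=1}^{k_n}\frac{1}{k_n\bigl(\tfrac{1}{\lambda\eta}-\tfrac{1}{\mu}\bigr)-i+1},$$
followed by multiplying through by $\eta$ and adding back the initial $\eta$ from Equation~\eqref{eq:unknownBound}, yields precisely the bound stated in the lemma.

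The only real obstacle I anticipate is the bookkeeping around the feasible region of $v$: one must verify that any stable admissible policy indeed produces $v(x)$ in the convex, strictly decreasing regime of $g$ (so that Jensen is applied where the denominators are positive and points the right way), and that the $n$-server constraint translates cleanly into the integral constraint $\sum_{x\neq 0} xp(x)v(x) \leq 1/\lambda$ under the reparametrization. Neither step is deep, but each requires a careful sanity check before invoking Jensen.
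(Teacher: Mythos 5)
Your proof is correct, and it takes a genuinely different (and more explicit) route to the same conclusion. The paper sets up the minimization of the right-hand side of Equation~\eqref{eq:unknownBound} as a finite-dimensional convex program over $\{\tilde p(x)\}$ with the two constraints~\eqref{eq:constraints1}--\eqref{eq:constraints2}, then invokes convex duality and states without detail that $\tilde p(x)=xp(x)/\eta$ is the minimizer. Your argument replaces the duality step with an elementary Jensen-plus-monotonicity argument: after the reparametrization $v(x)=\tilde p(x)/(\lambda x p(x))$, the size-weighted masses $xp(x)/\eta$ form a probability measure, $g$ is convex and strictly decreasing on the half-line where its denominators are positive, Jensen collapses the weighted sum to $g$ of the barycenter, and the server-budget constraint together with monotonicity pushes the barycenter up to the worst case $v=1/(\lambda\eta)$. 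This is essentially the same optimization being solved --- the Jensen step is exactly what the paper's ``it is easily checked'' duality step is hiding, and your closed-form evaluation at $v=1/(\lambda\eta)$ is its $\tilde p(x)=xp(x)/\eta$ --- but your version is self-contained and spells out what the paper leaves implicit. The two sanity checks you flag do go through cleanly: positivity of the denominators on the feasible $v(x)$ follows from the positivity of the phase-$i$ arrival rates $\lambda_n^{(i)}(x)$ used in Lemma~\ref{lem:first} (equivalently, it is the content of Equation~\eqref{eq:constraints2} together with $i\le k_n$); and since the positive-denominator region is an upward-unbounded interval and $1/(\lambda\eta)$ dominates the barycenter, $1/(\lambda\eta)$ automatically lies in the region where $g$ is convex and decreasing (one can also verify directly that summing~\eqref{eq:constraints1} and~\eqref{eq:constraints2} yields $1/(\lambda\eta)\ge 1+1/\mu$, making every denominator at $v=1/(\lambda\eta)$ at least $\mu>0$). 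Finally, the bound is attained at $v(x)\equiv 1/(\lambda\eta)$, which confirms your chain of inequalities is tight and recovers the paper's claimed minimizer.
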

\begin{proof}
First, note that Equation \eqref{eq:unknownBound} implies that
\[  \mathbb{E}[W_n] \geq \min\limits_{\tilde{p}} \left\{ \eta + \sum_{x\in\mathcal{X}\backslash\{0\}} \sum\limits_{i=1}^{k_n} \frac{xp(x)}{k_n\mu \left( \frac{\tilde{p}(x)}{\lambda p(x) x}- \frac{1}{\mu} \right) - \mu(i-1)} \right\}, \]
where the minimization is over all ``possible'' quantities $\tilde{p}=\{\tilde{p}(x):x\in\mathcal{X}\backslash\{0\}\}$. These largely depend on the dispatching/replication policy, but they are inherently constrained, as follows.

Recall that $\tilde{p}(x)$ is the expected number of servers (normalized by $n$) working on replicas of size $x$. Since there can only be up to $n$ servers busy in expectation, we have
\begin{equation}\label{eq:constraints1}
 \sum_{x\in\mathcal{X}\backslash\{0\}} \tilde{p}(x) \leq 1.
\end{equation}
Furthermore, since all jobs eventually leave the system, all jobs must have $k_n$ associated replicas that actually finish their service. Thus, jobs with tasks of size $x$ have at least $k_n$ replicas associated with them that are in service at least $x$ units of time. This implies that the arrival rate of replicas of size $x$ that will be in service for at least $x$ units of time has to be greater than or equal to $k_n$ times the arrival rate of jobs with tasks of size $x$, i.e., we must have $\lambda_n^{(0)}(x)\geq \lambda p(x) n$. Combining this with Equation~\eqref{eq:lambda0}, we obtain
\begin{equation}\label{eq:constraints2}
    \tilde{p}(x) \geq \lambda p(x) x\left( 1+\frac{1}{\mu} \right),
\end{equation}
for all $x\in\mathcal{X}\backslash\{0\}$. Using equations \eqref{eq:constraints1} and \eqref{eq:constraints2} to define the domain of our optimization problem in the variables $\{\tilde{p}(x):x\in\mathcal{X}\backslash \{0\}\}$, we get that $\mathbb{E}[W_n]$ is lower bounded by
\begin{align*}
   \min\limits_{\tilde{p}\in[0,1]^{|\mathcal{X}|-1}} \qquad &\eta + \sum_{x\in\mathcal{X}\backslash\{0\}} \sum\limits_{i=1}^{k_n} \frac{xp(x)}{k_n\mu \left( \frac{\tilde{p}(x)}{\lambda p(x) x}- \frac{1}{\mu} \right) - \mu(i-1)} \\
    s.t. \qquad & \tilde{p}(x) \geq \lambda p(x) x\left( 1+\frac{1}{\mu} \right),\qquad \forall x\in\mathcal{X}\backslash\{0\}, \\
    &\sum_{x\in\mathcal{X}\backslash\{0\}} \tilde{p}(x) \leq 1.
\end{align*}
Note that this is a finite-dimensional convex optimization problem. %Thus, consider the Lagrangian
Taking the dual problem, and using the necessary condition of stability given in Theorem \ref{thm:stability}, it is easily checked that $\tilde{p}(x)= x p(x)/\eta$ minimizes the objective function. This results in the lower bound
  \begin{align*}
   \mathbb{E}[W_n] &\geq \eta + \sum_{x\in\mathcal{X}} \sum\limits_{i=1}^{k_n} \frac{xp(x)}{k_n\mu \left( \frac{1}{\lambda \eta}- \frac{1}{\mu} \right) - \mu(i-1)} \nonumber \\
   &= \eta\left( 1+\frac{1}{\mu}\sum\limits_{i=1}^{k_n} \frac{1}{k_n \left( \frac{1}{\lambda \eta}- \frac{1}{\mu} \right) - i+1} \right).
  \end{align*}
\end{proof}

\subsection{General task sizes}
In this subsection we consider the case where the task size $X$ has a general distribution with unit mean. Let us fix an admissible policy for the infinite-server relaxation. For each positive integer $m$, we consider a coupled system, as follows.
\begin{itemize}
  \item [a)] The arrival process is the same.
  \item [b)] If a job with tasks of size $X$ arrives to the original system, then a job with tasks of size
\[ X^{(m)} \triangleq \min \left\{2^m,\,\, \frac{\left\lceil X 2^m \right\rceil -1}{2^m} \right\} \leq X \]
arrives to the $m$-th coupled system.
  \item [c)] For each job, replicas are always dispatched to idle servers, with the same relative phase start times as in the original system (unless the phase ends earlier due to the shrunk task sizes).
  \item [d)] The slowdowns are the same as in the original system.
\end{itemize}
Since the replicas have the same relative phase start times (unless the phase ends earlier) and are subject to the same slowdowns, and since $X^{(m)}\leq X$, we have
\[ \min\limits_{r=1,\dots,n} \Big\{ T_r^{(i)}(X)+X^{(m)}S_r^{(i)} \Big\} \leq \min\limits_{r=1,\dots,n} \Big\{ T_r^{(i)}(X)+XS_r^{(i)} \Big\}, \]
for all $i=1,\dots,k_n$, and for all $m$. In particular, this implies that the policy in the $m$-th coupled system has at most $n$ servers busy in expectation, and that the expected delay of a job in the $m$-th coupled system is smaller than or equal to the one in the original system, i.e., that we have
\begin{equation}\label{eq:delayComparisson}
  \mathbb{E}\big[W_n\big] \geq \mathbb{E}\left[W_n^{(m)}\right],
\end{equation}
for all $m$.

Finally, note that the replicas in the $m$-th coupled system can only take values in the finite set $\big\{ l/2^m : 0\leq l\leq 2^{2m} \big\}$. Then, we can use Lemma \ref{lem:finiteBound} to obtain the lower bound
\[ \mathbb{E}\left[W_n^{(m)}\right]\geq \mathbb{E}[X_m]\left( 1+\frac{1}{\mu}\sum\limits_{i=1}^{k_n} \frac{1}{k_n \left( \frac{1}{\lambda \mathbb{E}[X_m]}- \frac{1}{\mu} \right) - i+1} \right), \]
which holds for all $m\geq 1$. Combining this with Equation \eqref{eq:delayComparisson}, we obtain
\begin{equation}
  \mathbb{E}\big[W_n\big]\geq \sup_{m\geq 1} \left\{\mathbb{E}[X_m]\left( 1+\frac{1}{\mu}\sum\limits_{i=1}^{k_n} \frac{1}{k_n \left( \frac{1}{\lambda \mathbb{E}[X_m]}- \frac{1}{\mu} \right) - i+1} \right) \right\}. \label{eq:supBound}
\end{equation}
We now proceed to compute this supremum. Note that
\[ \mathbb{P}(X_m > x)=\mathbb{P}\left(X> \frac{\left\lfloor x 2^m \right\rfloor + 1}{2^m}\right), \]
for all $x< 2^m$, and $\mathbb{P}(X_m> x)=0$, for all $x\geq 2^m$. Using the fact that the sequence of events $\big\{X>\big(\left\lfloor x 2^m \right\rfloor +1\big) 2^{-m} \big\}_{m\geq 1}$ is nondecreasing for any given $x\geq 0$, we have
\begin{align*}
    \lim_{m\to\infty} \mathbb{P}(X_m\geq x) &= \lim_{m\to\infty} \mathbb{P}\left(X> \frac{\left\lfloor x 2^m \right\rfloor + 1}{2^m}\right) \\
    &= \mathbb{P}\left(\bigcup_{m=1}^\infty \left\{X > \frac{\left\lfloor x 2^m \right\rfloor + 1}{2^m}\right\}\right) \\
    &= \mathbb{P}\left(X > x\right),
\end{align*}
for all $x\geq 0$. Thus, the monotone convergence theorem implies
\begin{equation*}
    \lim_{m\to\infty} \mathbb{E}[X_m] = \lim_{m\to\infty} \int\limits_{0}^{\infty} \mathbb{P}(X_m> x) dx = \int\limits_{0}^{\infty} \mathbb{P}(X> x) dx = \mathbb{E}[X] = 1.
\end{equation*}
Combining this with Equation \eqref{eq:supBound}, we obtain
\[  \mathbb{E}[W_n]\geq  1+\frac{1}{\mu}\sum\limits_{i=1}^{k_n} \frac{1}{k_n \left( \frac{1}{\lambda}- \frac{1}{\mu} \right) - i+1}, \]
which concludes the proof of Theorem \ref{thm:boundGeneral}.

\section{Proof of Theorem 3.4}\label{proof:upperBound}
We will be working under the assumption that $k_n=1$. In this case, the FREC policy has the following convenient property.

\begin{lemma}\label{lem:startFinish}
 Under the FREC policy for $k_n=1$, all replicas associated with the same job start their service at the same time and leave the system at the same time.
\end{lemma}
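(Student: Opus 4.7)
I will prove the lemma by induction on the sequence of arrivals and departures in the first subsystem (the second subsystem is identical up to substituting $\lceil r^* \rceil - 1$ for $\lceil r^* \rceil$). The engine of the proof is the following structural invariant maintained at every instant under FREC with $k_n=1$: every job in the subsystem is either \emph{in service} (exactly $\lceil r^* \rceil$ surviving replicas, all simultaneously in service on a group of $\lceil r^* \rceil$ servers, with these groups disjoint across distinct in-service jobs) or \emph{waiting} (exactly $n^{(1)}$ surviving replicas, one per queue, all in waiting positions); waiting jobs can exist only when no server is idle; all queues carry the same FIFO ordering of the waiting jobs behind their currently in-service replica; and the number of idle servers is a nonnegative multiple of $\lceil r^* \rceil$.

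The base case (empty system) is trivial. The last clause of the invariant has the crucial consequence that FREC's case 1 triggers whenever any server is idle, and case 2 triggers only when every server is busy. For an arrival in case 1, $\lceil r^* \rceil$ replicas are dispatched to $\lceil r^* \rceil$ idle servers and start simultaneously, creating a new in-service job and dropping idle by $\lceil r^* \rceil$. For an arrival in case 2, all $n^{(1)}$ replicas enter the back of the (symmetric) queues, forming a new waiting job. On a completion event, the finishing in-service job loses all $\lceil r^* \rceil$ replicas at once (one by completion, the other $\lceil r^* \rceil - 1$ by the $k_n=1$ cancellation rule), freeing $\lceil r^* \rceil$ servers simultaneously; if the waiting list is nonempty, the front waiting job's $\lceil r^* \rceil$ replicas on the freed servers start service simultaneously by FIFO, at which moment FREC's early-cancellation rule instantly removes the remaining $n^{(1)} - \lceil r^* \rceil$ replicas of that job from the other queues. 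A direct check shows every clause of the invariant is preserved in each of these cases; both conclusions of the lemma then follow immediately, since under the invariant every job that ever enters service does so with all $\lceil r^* \rceil$ surviving replicas starting simultaneously, and all $\lceil r^* \rceil$ leave the system simultaneously when the job completes.

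The main obstacle I expect is the ``multiple of $\lceil r^* \rceil$'' clause of the invariant: if case 2 could ever fire with a strictly positive but insufficient number of idle servers, then some of the arriving job's replicas would start service immediately while the rest would only start later as other jobs completed, contradicting the lemma. This possibility is ruled out by combining the assumption that $n^{(1)}$ is a multiple of $\lceil r^* \rceil$ (as stipulated in the construction of the subsystem) with the observation that every admissible event --- an arrival in either case, or a completion with or without a waiting job to promote --- changes the idle count by $0$ or by $\pm\lceil r^* \rceil$, so the multiple-of-$\lceil r^* \rceil$ property propagates forward indefinitely.
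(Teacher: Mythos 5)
Your proof is correct and follows essentially the same route as the paper's: both rest on the invariant that every in-service job occupies exactly $\lceil r^* \rceil$ servers, so the busy/idle counts remain multiples of $\lceil r^* \rceil$, which (together with $n^{(1)}$ being a multiple of $\lceil r^* \rceil$) forces every waiting job's surviving replicas to start simultaneously. You make the inductive structure fully explicit --- in particular the clauses that waiting jobs have one replica per queue in a common FIFO order and that waiting jobs exist only when no server is idle --- which the paper leaves implicit in the terse claim ``there can be at most one job with less than $\lceil r^* \rceil$ associated replicas in service''; your write-up closes that small gap cleanly.
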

\begin{proof}
Since $k_n=1$, it is immediate that all replicas associated with the same job leave the system as the same time, that is, as soon as some replica completes service. Furthermore, it is also immediate that all replicas associated with jobs that find enough idle servers start their service at the same time.

We now focus on the replicas associated with jobs that are routed to the first subsystem, but do not find enough idle servers. Recall that when there are less than $\lceil r^* \rceil$ idle servers, replicas are created and dispatched to \emph{all} servers in the subsystem, and they are not cancelled until $\lceil r^* \rceil$ replicas have started their service (or until one that did start its service, finishes). As a result, there can be at most one job with less than $\lceil r^* \rceil$ associated replicas in service. Combining this with the fact that $n^{(1)}$ is multiple of $\lceil r^* \rceil$, we conclude that all jobs have either $\lceil r^* \rceil$ or zero associated replicas in service, and that the number of busy servers is always a multiple of $\lceil r^* \rceil$. Thus, replicas that were dispatched to all servers start their service when the $\lceil r^* \rceil$ replicas associated with another job leave the system. Thus, all $\lceil r^* \rceil$ replicas associated with the same job start their service at the same time. The same argument applies for the second subsystem.
\end{proof}

Lemma \ref{lem:startFinish} implies that under the FREC policy for $k_n=1$, we can think of each set of $\lceil r^* \rceil$ or $\lceil r^* \rceil-1$ replicas associated with the same job as a single job being processed by a single server. Combined with the fact that queued replicas under the FREC policy wait for the first set of $\lceil r^* \rceil$ or $\lceil r^* \rceil-1$ servers to become idle in a first-come first-serve fashion, it follows that the first and second subsystems behave as M/G/$m_1$ and M/G/$m_2$ queues, respectively, with
\[ m_1\triangleq \frac{n^{(1)}}{\lceil r^* \rceil} = \left\lfloor \lambda n\left( p_1-2n^{\alpha-1} \right)^+\left(1+\frac{1}{\mu\lceil r^* \rceil} \right) + \frac{\lambda n^{\alpha}}{\lceil r^* \rceil} \right\rfloor, \]
and
\[ m_2\triangleq \frac{n^{(2)}}{\lceil r^* \rceil-1} =\left\lfloor \lambda n\left[1-\left( p_1-2n^{\alpha-1} \right)^+\right]\left(1+\frac{1}{\mu(\lceil r^* \rceil-1)} \right) + \frac{\lambda n^{\alpha}}{\lceil r^* \rceil -1} \right\rfloor. \]
In particular, since the arrival processes to both subsystems are obtained as independent thinnings of the original Poisson process of arrivals, they are Poisson processes as well, with rates
\[ \lambda_n^{(1)} \triangleq \lambda n (p_1-2n^{\alpha-1})^+, \]
and
\[ \lambda_n^{(2)} \triangleq \lambda n \left[1-\left(p_1-2n^{\alpha-1}\right)^+\right]. \]
Moreover, since all the replicas associated with the same job start and finish their service at the same time (cf. Lemma \ref{lem:startFinish}), the service times of the jobs in the first and second subsystems are i.i.d. and distributed as $X(1+\min\{S_1,\dots,S_{\lceil r^* \rceil}\})$ and $X(1+\min\{S_1,\dots,S_{\lceil r^* \rceil-1}\})$, respectively. Thus, the expected service times are
\[ s^{(1)}\triangleq 1+\frac{1}{\mu\lceil r^* \rceil}, \qquad \text{and} \qquad s^{(2)}\triangleq 1+\frac{1}{\mu(\lceil r^* \rceil-1)}. \]
Hence, the loads in the first and second subsystems are
\begin{align*}
   \rho_n^{(1)} &\triangleq \frac{\lambda_n^{(1)}s^{(1)}}{m_1} =\frac{\lambda n (p_1-2n^{\alpha-1})^+\left( 1+\frac{1}{\mu\lceil r^* \rceil} \right)}{\left\lfloor \lambda n\left( p_1-2n^{\alpha-1} \right)^+\left(1+\frac{1}{\mu\lceil r^* \rceil} \right) + \frac{\lambda n^{\alpha}}{\lceil r^* \rceil} \right\rfloor}<1,
\end{align*}
and
\begin{align*}
   \rho_n^{(2)} &\triangleq \frac{\lambda_n^{(2)}s^{(2)}}{m_2 } = \frac{\lambda n \left[1-\left(p_1-2n^{\alpha-1}\right)^+\right]\left( 1+\frac{1}{\mu(\lceil r^* \rceil-1)} \right)}{\left\lfloor \lambda n\left[1-\left( p_1-2n^{\alpha-1} \right)^+\right]\left(1+\frac{1}{\mu(\lceil r^* \rceil-1)} \right) + \frac{\lambda n^{\alpha}}{\lceil r^* \rceil -1} \right\rfloor}<1,
\end{align*}
respectively. As a result, we have
\begin{align*}
   1-\rho_n^{(1)} &\approx \frac{\frac{\lambda n^{\alpha}}{\lceil r^* \rceil}}{\left\lfloor \lambda n\left( p_1-2n^{\alpha-1} \right)^+\left(1+\frac{1}{\mu\lceil r^* \rceil} \right) + \frac{\lambda n^{\alpha}}{\lceil r^* \rceil} \right\rfloor} \in \Theta\left( n^{\alpha-1} \right),
\end{align*}
and
\begin{align*}
   1-\rho_n^{(2)} &\approx \frac{\frac{\lambda n^{\alpha}}{\lceil r^* \rceil -1}}{\left\lfloor \lambda n\left[1-\left( p_1-2n^{\alpha-1} \right)^+\right]\left(1+\frac{1}{\mu(\lceil r^* \rceil-1)} \right) + \frac{\lambda n^{\alpha}}{\lceil r^* \rceil -1} \right\rfloor} \in \Omega\left( n^{\alpha-1} \right),
\end{align*}

Since the two subsystems behave exactly as M/G/$m_1$ and M/G/$m_2$ queues, their positive Harris recurrence is given by Theorem 2.2 and Corollary 2.8 in \cite{asmussen}. This guarantees the stability of the FREC policy.\\

For the service time, we have
\begin{align*}
   \lim_{n\to\infty} \mathbb{E}[W_n^s] &= \lim_{n\to\infty} \left[ 1+\frac{\left(p_1-2n^{\alpha-1}\right)^+}{\mu\lceil r^* \rceil} + \frac{1-\left(p_1-2n^{\alpha-1}\right)^+}{\mu(\lceil r^* \rceil-1)} \right] \\
   &= 1+\frac{p_1}{\mu\lceil r^* \rceil} + \frac{1-p_1}{\mu(\lceil r^* \rceil-1)}.
\end{align*}
It only remains to show that
\[ \lim_{n\to\infty} \mathbb{E}[W_n^q] = 0. \]
Since both subsystems behave as M/G/m queues, and since there exists $\epsilon>0$ such that $\mathbb{E}[X^{2+\epsilon}]<\infty$, Corollary 2 in \cite{goldberg17} states that there exists a constant $C_\epsilon$, independent from $n$, such that
\begin{align*}
 \mathbb{E}[W^q_n] &\leq C_\epsilon \left[\frac{\left(p_1-2n^{\alpha-1}\right)^+}{\lambda_n^{(1)}\left(1-\rho^{(1)}_n\right)} + \frac{1-\left(p_1-2n^{\alpha-1}\right)^+}{\lambda_n^{(2)}\left(1-\rho^{(2)}_n\right)} \right] \\
 &= C_\epsilon \left[\frac{1}{\lambda n\left(1-\rho^{(1)}_n\right)} + \frac{1}{\lambda n\left(1-\rho^{(2)}_n\right)} \right].
\end{align*}
Combining this with the fact that $1-\rho_n^{(1)} \in \Theta\left( n^{\alpha-1} \right)$ and $1-\rho_n^{(2)} \in \Omega\left( n^{\alpha-1} \right)$, we obtain
\begin{align*}
  \lim_{n\to\infty} \mathbb{E}[W^q_n] &\leq \lim_{n\to\infty} C_\epsilon \left[\frac{1}{\lambda n\left(1-\rho^{(1)}_n\right)} + \frac{1}{\lambda n\left(1-\rho^{(2)}_n\right)} \right] = 0,
\end{align*}
which concludes the proof.

\section{Proof of Theorem 3.5}\label{proof:upperBound2}

The proof is done in two steps. First, we show that the queueing delay converges to zero in probability (Subsection \ref{sec:vanishingDelay}), and then we show the convergence of the expected service time of a typical job to the desired constant (Subsection \ref{sec:convergenceServiceTime}).

\subsection{Vanishing queueing delay}\label{sec:vanishingDelay}
In this subsection we show that the large server pools defined by the DQ policy (i.e., the first $n^{(1)}-\lambda n^\alpha/2k_n$ and $n^{(2)}-\lambda n^\alpha/2k_n$ servers of the first and second subsystems, respectively) are stable, and that the queueing delay of a typical job converges to zero. We provide the complete proof of these facts for the first subsystem, with the second one being analogous.\\

For $i=1,\dots,k_n$, let $Q^{(1)}_i(\cdot)$ be the process describing the number of jobs in the large server pool of the first subsystem, for which less than $i$ replicas associated with it have finished their service. Since all replicas associated with jobs sent to the large server pool start their service at the same time, $Q^{(1)}_i(\cdot)$ also describes the number of replicas in the large server pool of the first subsystem whose service time distributed as $X (1+S_{[i:\lceil r^* k_n \rceil]})$, where $S_{[i:\lceil r^* k_n \rceil]}$ is the $i$-th order statistic of $\lceil r^* k_n \rceil$ slowdowns.

%Recall that, under the DQ policy, not all arrivals to the first subsystem go to its large server pool. Then, for any positive integer $n_i^{(1)}$, it can be checked (see \cite{stochasticDominance}) that $Q^{(1)}_i(\cdot)$ is stochastically dominated by the queue length process of a coupled M/G/$n_i^{(1)}$ queue, $\overline{Q}^{(1)}_i(\cdot)$, where all arrivals to the first subsystem are arrivals to this queue, and where the service times are the same.

Recall that replicas arrive to the first subsystem in batches, as a Poisson process of rate
\[ \lambda_n^{(1)} \triangleq \frac{\lambda n \big(p_{k_n}-2n^{\alpha-1}\big)^+}{k_n} \in O\left( \frac{n}{k_n} \right), \]
and they are either dispatched to idle servers in the large server pool, or diverted to the small server pool (i.e., the last $\lambda n^\alpha/2k_n$ servers). As a result, for $i=1,\dots,k_n$, Proposition 2 in \cite{stochasticDominance} implies that $Q^{(1)}_i(\cdot)$ is dominated by the queue length process, $\tilde{Q}^{(i)}_i(\cdot)$, of a queueing system where all the arrivals to the first subsystem are sent to idle servers. This corresponds to an M/G/$\infty$ queue with Poison arrivals of rate $\lambda_n^{(1)}$, and i.i.d. jobs distributed as $X (1+S_{[i:\lceil r^* k_n \rceil]})$, with expected service time
\[ s^{(1)}_i \triangleq 1+\mathbb{E}\left[ S_{[i:\lceil r^* k_n \rceil]} \right] \in \Theta(1). \]
Moreover, for $i=1,\dots,k_n$, Proposition 2 in \cite{stochasticDominance} implies that $\tilde{Q}^{(1)}_i(\cdot)$ is dominated by the queue length process, $\overline{Q}^{(1)}_i(\cdot)$, of an M/G/$n^{(1)}_i$ queue with
\begin{equation}
 n^{(1)}_i \triangleq \left\lfloor \frac{\lambda n \big(p_{k_n}-2n^{\alpha-1}\big)^+}{k_n}\Big( 1+\mathbb{E}\left[ S_{[i:\lceil r^* k_n \rceil]} \right] \Big) + \frac{\lambda n^\alpha}{2k_n \lceil r^* k_n \rceil} -1 \right\rfloor, \label{eq:MGm_sizes}
\end{equation}
and the same arrivals and job sizes. It follows that, for $i=1,\dots,k_n$, we have
\begin{equation}
 Q^{(1)}_i(t) \leq \overline{Q}^{(1)}_i(t), \label{eq:processesDominance}
\end{equation}
for all $t\geq 0$. Since the load of the $i$-th M/G/$n^{(1)}_i$ queue is
\begin{align}
 \rho^{(1)}_{n,i} &\triangleq \frac{\lambda_n^{(1)} s^{(1)}_i}{n^{(1)}_i} = \frac{\frac{\lambda n\big(p_{k_n}-2n^\alpha\big)^+}{k_n} \left( 1+\mathbb{E}\left[ S_{[i:\lceil r^* k_n \rceil]} \right] \right)}{ \left\lfloor \frac{\lambda n\big(p_{k_n}-2n^\alpha\big)^+}{k_n}\left( 1+\mathbb{E}\left[ S_{\left[i:\lceil r^* k_n \rceil\right]} \right] \right) + \frac{\lambda n^\alpha}{2k_n \lceil r^* k_n \rceil} -1 \right\rfloor}<1, \label{eq:queueLoad}
\end{align}
then it is Harris recurrent (cf. \cite{asmussen}), for $i=1,\dots,k_n$. Combining this with Equation \eqref{eq:processesDominance}, we conclude that the original queues are also Harris recurrent. We denote the steady-state queue length processes of the original and of the larger queues by $Q^{(1)}_1,\dots,Q^{(1)}_{k_n}$ and $\overline{Q}^{(1)}_1,\dots,\overline{Q}^{(1)}_{k_n}$, respectively. Given the stochastic ordering of the original queue length processes, we have
\begin{equation}\label{eq:queue_dominance}
  Q^{(1)}_i \leq_{st} \overline{Q}^{(1)}_i,
\end{equation}
for all $i=1,\dots,k_n$.\\

Let $W_n^{q,1}$ be the queueing delay of a typical job in the first subsystem. The vanishing of this delay is established in the following lemma.

\begin{lemma}
We have
\[ \lim\limits_{n\to\infty} \mathbb{P}\left(W_n^{q,1}>0\right) = 0. \]
\end{lemma}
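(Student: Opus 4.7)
The plan is to reduce the probability of positive queueing delay to a deviation estimate for the steady-state number of busy servers in the large pool, and then to close that estimate with Chebyshev's inequality, leveraging the M/G/$\infty$ stochastic dominance already in hand from \eqref{eq:queue_dominance}.

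First I would use PASTA to rephrase the target probability. Arrivals to the first subsystem form a Poisson process of rate $\lambda_n^{(1)}$, and the decision to route an arrival to the large pool or the small pool depends only on the current state. Since $W_n^{q,1}=0$ whenever an arrival is dispatched to idle servers of the large pool, we have
\[
\mathbb{P}\big(W_n^{q,1}>0\big) \;\leq\; \mathbb{P}\big(\text{arrival routed to small pool}\big) \;=\; \pi_n\!\left(B \geq \tau_n\right),
\]
where $B$ is the steady-state number of busy servers in the large pool and $\tau_n := n^{(1)}-\lambda n^{\alpha}/(2k_n)-\lceil r^{*}k_n\rceil +1$ is the threshold above which fewer than $\lceil r^{*}k_n\rceil$ servers are idle.

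Second, I would express $B$ through the already-introduced processes $Q_i^{(1)}$. A job with exactly $j-1$ completed replicas contributes $\lceil r^{*}k_n\rceil-(j-1)$ busy servers and is counted once in each of $Q_j^{(1)},\dots,Q_{k_n}^{(1)}$; rearranging gives the identity
\[
B \;=\; \sum_{i=1}^{k_n} Q_i^{(1)} \;+\; (\lceil r^{*}k_n\rceil-k_n)\,Q_{k_n}^{(1)}.
\]
Coupling the processes $Q_i^{(1)}$ pathwise to the dominating M/G/$\infty$ queues $\tilde Q_i^{(1)}$ (whose stationary distribution is Poisson with mean $\lambda_n^{(1)} s_i^{(1)}$), we obtain $B \leq B^{\infty}$ pathwise, with $B^{\infty} \triangleq \sum_{i=1}^{k_n}\tilde{Q}_i^{(1)} + (\lceil r^{*}k_n\rceil-k_n)\,\tilde{Q}_{k_n}^{(1)}$. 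Using the exponential order statistics $\mathbb{E}[S_{[i:m]}]=\mu^{-1}\sum_{j=m-i+1}^{m}j^{-1}$, a telescoping computation yields the clean formula $\mathbb{E}[B^{\infty}] = \lambda_n^{(1)}\big(\lceil r^{*}k_n\rceil + k_n/\mu\big)$, and together with the very definition of $n^{(1)}$ this gives the key gap
\[
\tau_n - \mathbb{E}[B^{\infty}] \;\geq\; \frac{\lambda n^{\alpha}}{2k_n} - \lceil r^{*}k_n\rceil - 1,
\]
which is positive and grows with $n$ in the relevant scaling regime (roughly $\beta<\alpha/2$).

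Third, I would bound $\mathrm{Var}(B^{\infty})$. A simple bound is $B^{\infty}\leq \lceil r^{*}k_n\rceil\,\tilde Q_{k_n}^{(1)}$ combined with the Poisson variance $\mathrm{Var}(\tilde Q_{k_n}^{(1)})=\lambda_n^{(1)} s_{k_n}^{(1)}$, giving $\mathrm{Var}(B^{\infty}) \in O(k_n n)$. If a sharper bound is required, I would use the shot-noise variance identity $\mathrm{Var}(B^{\infty})=\lambda_n^{(1)}\int_{0}^{\infty}\mathbb{E}[R(u)^{2}]\,du$, where $R(u)$ counts the replicas of a typical job still active $u$ units after its arrival in the M/G/$\infty$ copy, and evaluate it explicitly via the joint law of the exponential order statistics. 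Chebyshev's inequality then gives
\[
\mathbb{P}\!\left(W_n^{q,1}>0\right) \;\leq\; \mathbb{P}(B^{\infty}\geq \tau_n) \;\leq\; \frac{\mathrm{Var}(B^{\infty})}{(\tau_n-\mathbb{E}[B^{\infty}])^{2}},
\]
and under the standing assumptions $k_n\in O(n^{\beta})$ with $\beta<1/5$ and $\alpha>\max\{(2+4\beta)/3,(1+5\beta)/2\}$ the right-hand side tends to $0$.

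The main obstacle is the bookkeeping of the scalings: the deviation gap $\tau_n-\mathbb{E}[B^{\infty}]$ scales like $n^{\alpha}/k_n$ while the standard deviation of $B^{\infty}$ scales like $(k_n n)^{1/2}$, and matching these against the stated exponents requires care. The seemingly peculiar restriction $\beta<1/5$ is precisely what makes the second-moment Chebyshev bound vanish once the threshold $\tau_n$, the mean $\mathbb{E}[B^\infty]$, and the variance $\mathrm{Var}(B^\infty)$ are compared at the right orders.
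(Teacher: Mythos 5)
Your approach is correct in outline but differs substantially from the paper's: the paper applies a union bound over the indices $i=1,\dots,k_n$ and then invokes an external tail estimate for M/G/$m$ queues (Corollary 1 of \cite{goldberg17}) to bound each $\mathbb{P}\big(\overline{Q}_i^{(1)}>n_i^{(1)}\big)$, arriving at a rate $O\big(k_n^4/n^{2\alpha-1}\big)$; you instead aggregate first into the single random variable $B^\infty$ and apply a plain Chebyshev bound, arriving at $O\big(k_n^3/n^{2\alpha-1}\big)$. Your reduction to $\mathbb{P}\big(B\geq\tau_n\big)$ via PASTA, the identity for $B$ in terms of the $Q_i^{(1)}$, the pathwise coupling $B\leq B^\infty$, and the mean computation $\mathbb{E}[B^\infty]=\lambda_n^{(1)}\big(\lceil r^*k_n\rceil+k_n/\mu\big)$ (giving $\tau_n-\mathbb{E}[B^\infty]\gtrsim n^\alpha/k_n$) all check out.

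However, there is a genuine flaw in your ``simple'' variance bound. From $B^\infty\leq\lceil r^*k_n\rceil\,\tilde Q_{k_n}^{(1)}$ almost surely you cannot infer $\mathrm{Var}(B^\infty)\leq\lceil r^*k_n\rceil^2\,\mathrm{Var}\big(\tilde Q_{k_n}^{(1)}\big)$: a pathwise inequality between random variables says nothing about the ordering of their variances. (And switching tactics to instead apply Chebyshev to the surrogate $C^\infty:=\lceil r^*k_n\rceil\,\tilde Q_{k_n}^{(1)}$ fails too, because $\mathbb{E}[C^\infty]=\lambda_n^{(1)}\lceil r^*k_n\rceil\big(1+\mathbb{E}[S_{[k_n:r]}]\big)$ exceeds $\tau_n$ by a term of order $n$, since $\lceil r^*k_n\rceil\,\mathbb{E}[S_{[k_n:r]}]>k_n/\mu$.) Fortunately, the fallback you mention is sound and carries the argument: for the M/G/$\infty$ shot-noise process $B^\infty$ driven by Poisson arrivals with i.i.d.\ marks, Campbell's theorem gives $\mathrm{Var}(B^\infty)=\lambda_n^{(1)}\int_0^\infty\mathbb{E}[R(u)^2]\,du$, and bounding $R(u)\leq\lceil r^*k_n\rceil\,\mathds{1}\{u<X(1+S_{[k_n:r]})\}$ yields $\mathrm{Var}(B^\infty)\leq\lambda_n^{(1)}\lceil r^*k_n\rceil^2\,s_{k_n}^{(1)}\in O(nk_n)$. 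With that bound, Chebyshev gives $\mathbb{P}(B^\infty\geq\tau_n)\in O(k_n^3 n^{1-2\alpha})=O(n^{1+3\beta-2\alpha})\to 0$ whenever $\alpha>(1+3\beta)/2$, which is implied by the paper's hypothesis $\alpha>(2+4\beta)/3$. You should therefore drop the ``simple bound'' sentence entirely and lead with the shot-noise identity. A side benefit of your route is that it uses only $\mathbb{E}[X]<\infty$ rather than $\mathbb{E}[X^{2+\epsilon}]<\infty$, and it avoids the outside dependence on \cite{goldberg17}.
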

\begin{proof}
Let $Q^{(1)}$ be the steady-state number of replicas present in the large server pool of the first subsystem. Since there is no queueing in the large server pool, $Q^{(1)}$ is also the steady-state number of busy servers in the large server pool of the first subsystem. Recall that under the DQ policy, a job can have positive queueing delay only if its diverted to the small server pool, which happens when there are less than $\lceil r^* k_n \rceil$ idle servers (or equivalently, more than $n^{(1)}- \lambda n^\alpha/(2k_n) -\lceil r^* k_n \rceil$ busy servers) in the large server pool at the time of the job's arrival. Since the arrivals are Poisson, the PASTA property implies that the steady-state probability of waiting is
\begin{align}
 \mathbb{P}\left(W_n^{q,1}>0\right) &\leq \mathbb{P}\left(Q^{(1)} > n^{(1)}- \frac{\lambda n^\alpha}{2k_n} -\lceil r^* k_n \rceil \right). \label{eq:queueingDelay}
\end{align}
Here there is an inequality because jobs sent to the small server pool might still experience zero queueing delay. Moreover, using the definition of the integers $n^{(1)}_1,\dots,n^{(1)}_{k_k}$ given in Equation \eqref{eq:MGm_sizes}, and the fact that, for $i=1,\dots,k_n$, $Q^{(1)}_i$ is the steady-state number of jobs in the large server pool of the first subsystem, for which less than $i$ replicas associated with it have finished their service, it can be checked that
\[ n^{(1)}_1+\cdots+n^{(1)}_{k_n-1}+\Big(\lceil r^* k_n \rceil-k_n+1\Big)n^{(1)}_{k_n}\leq n^{(1)}- \frac{\lambda n^\alpha}{2k_n}-\lceil r^* k_n \rceil, \]
and that
\[ Q^{(1)} = \left(\sum\limits_{i=1}^{k_n-1} Q^{(1)}_i \right) + \Big(\lceil r^* k_n \rceil-k_n+1 \Big)Q^{(1)}_{k_n}. \]
Combining these two facts with Equation \eqref{eq:queue_dominance}, it follows that
\begin{align}
 &\mathbb{P}\left(Q^{(1)} > n^{(1)}- \frac{\lambda n^\alpha}{2k_n} -\lceil r^* k_n \rceil \right) \nonumber \\
 &\qquad = \mathbb{P}\left(\left(\sum\limits_{i=1}^{k_n-1} Q^{(1)}_i \right) + \big(\lceil r^* k_n \rceil-k_n+1 \big)Q^{(1)}_{k_n} > n^{(1)}- \frac{\lambda n^\alpha}{2k_n} -\lceil r^* k_n \rceil \right) \nonumber \\
 &\qquad\leq \mathbb{P}\left(\left(\sum\limits_{i=1}^{k_n-1} Q^{(1)}_i \right) + \big(\lceil r^* k_n \rceil-k_n+1 \big)Q^{(1)}_{k_n} > n^{(1)}_1+\cdots+n^{(1)}_{k_n-1} \right. \nonumber \\
 & \qquad\qquad\qquad\qquad\qquad\qquad\qquad\qquad\qquad\qquad\quad \left. +\Big(\lceil r^* k_n \rceil-k_n+1\Big)n^{(1)}_{k_n} \right) \nonumber \\
 &\qquad\leq \mathbb{P}\left(\bigcup_{i=1}^{k_n} \left\{Q^{(1)}_i > n^{(1)}_i\right\} \right) \nonumber \\
 &\qquad\leq \sum\limits_{i=1}^{k_n} \mathbb{P}\left( Q^{(1)}_i > n^{(1)}_i \right) \nonumber \\
 &\qquad\leq \sum\limits_{i=1}^{k_n} \mathbb{P}\left( \overline{Q}^{(1)}_i > n^{(1)}_i \right). \label{eq:delayBound}
\end{align}
Recall that, for $i=1,\dots,k_n$, $\overline{Q}^{(1)}_i$ is the steady-state queue length of an M/G/$n_i^{(1)}$ queue with load $\rho^{(1)}_{n,i}$ (cf. Equation \eqref{eq:queueLoad}), such that
\[ 1- \rho^{(1)}_{n,i} \approx \frac{\frac{\lambda n^\alpha}{2k_n \lceil r^* k_n \rceil}}{ \left\lfloor \frac{\lambda n\big(p_{k_n}-2n^\alpha\big)^+}{k_n}\left( 1+\mathbb{E}\left[ S_{\left[i:\lceil r^* k_n \rceil\right]} \right] \right) + \frac{\lambda n^\alpha}{2k_n \lceil r^* k_n \rceil} -1 \right\rfloor} \in \Theta\left( \frac{n^{\alpha-1}}{k_n} \right). \]
Since there exists $\epsilon>0$ such that $\mathbb{E}[X^{2+\epsilon}]<\infty$, Corollary 1 in \cite{goldberg17} states that there exists a constant $C_\epsilon$, independent from $n$ and $i$, such that
\[ \mathbb{P}\left( \overline{Q}^{(1)}_i > n^{(1)}_i \right) \leq \frac{C_\epsilon}{n^{(1)}_i\left(1-\rho^{(1)}_{n,i}\right)^2}. \]
Combining this with the fact that $1- \rho^{(1)}_{n,i} \in \Theta\left( n^{\alpha-1}/k_n \right)$, we obtain
\begin{equation*}
 \sum\limits_{i=1}^{k_n} \mathbb{P}\left( \overline{Q}^{(1)}_i > n^{(1)}_i \right) \leq \sum\limits_{i=1}^{k_n} \frac{C_\epsilon}{n^{(1)}_i\left(1-\rho^{(1)}_{n,i}\right)^2} \in O\left( \frac{k_n^4}{n^{2\alpha-1}} \right).
\end{equation*}
Combining this with equations \eqref{eq:queueingDelay} and \eqref{eq:delayBound}, we get that
\begin{equation}
 \mathbb{P}\left(W_n^{q,1}>0\right) \leq \mathbb{P}\left(Q^{(1)} > n^{(1)}- \frac{\lambda n^\alpha}{2k_n} -\lceil r^* k_n \rceil \right) \in O\left( \frac{k_n^4}{n^{2\alpha-1}} \right). \label{eq:queueingOrder}
\end{equation}
Finally, since we assumed that $k_n\in O(n^\beta)$ and that $\alpha>(4\beta+2)/3$, then
\[ \lim\limits_{n\to\infty} \mathbb{P}\left(W_n^{q,1}>0\right) = 0. \]
\end{proof}

Similarly, if $W_n^{q,2}$ is the queueing delay of a typical job in the second subsystem, it can be shown that
\[ \lim\limits_{n\to\infty} \mathbb{P}\left(W_n^{q,2}>0\right) = 0. \]
Finally, using the fact that a job is sent to the first subsystem with probability $\big( p_{k_n} - 2n^{\alpha-1} \big)^+$ and to the second subsystem with probability $1-\big( p_{k_n} - 2n^{\alpha-1} \big)^+$, we get that the probability of a typical job having a positive queueing delay is
\begin{align*}
 \lim\limits_{n\to\infty} \mathbb{P}\left(W_n^{q}>0\right) &= \lim\limits_{n\to\infty} \left( p_{k_n}-2n^{\alpha-1} \right)^+\mathbb{P}\left(W_n^{q,1}>0\right) \\
 &\qquad\qquad\qquad\qquad + \left[ 1-\left( p_{k_n}-2n^{\alpha-1} \right)^+  \right] \mathbb{P}\left(W_n^{q,2}>0\right) \\
 &= 0.
\end{align*}

\subsection{Convergence of the expected service time}\label{sec:convergenceServiceTime}
In this subsection we show that the small server pools defined by the DQ policy are stable, and that the expected service time of a typical job converges to the desired constant. We provide the complete proof of these facts for the first subsystem, with the second one being analogous.\\

Recall that the arrival rate to the first subsystem is $\lambda n \big( p_{k_n} - 2n^{\alpha-1} \big)^+/k_n$, and that jobs are only sent to the small server pool if there are less than $\lceil r^* k_n \rceil$ idle servers in the large server pool, which happens with probability
\begin{equation*}
 p_{small}^{(1)}\triangleq \mathbb{P}\left(Q^{(1)} > n^{(1)}- \frac{\lambda n^\alpha}{2k_n} -\lceil r^* k_n \rceil \right)
\end{equation*}
in steady-state. Then, the arrival rate of replicas to the small server pool is
\begin{align*}
 \lambda^{(1)}_{small} \triangleq \frac{\lambda n \big( p_{k_n} - 2n^{\alpha-1} \big)^+ p_{small}^{(1)}}{k_n}.
\end{align*}
Combining this with Equation \eqref{eq:queueingOrder}, we have
\begin{align}
 \lambda^{(1)}_{small} \in O\left( \frac{k_n^3}{n^{2\alpha-2}} \right). \label{eq:arrivalRate}
\end{align}
Since only $k_n$ replicas are created for each job sent to the small server pool, no replicas are prematurely cancelled, and thus the expected service time of each replica is
\[ s^{(1)}\triangleq 1+\frac{1}{\mu}. \]
Combining this with Equation \eqref{eq:arrivalRate} and the fact that the small server pool has $\lambda n^\alpha/2k_n$ servers, we get that the load of the small server pool of the first subsystem is
\[ \rho^{(1)}_{small} \triangleq \frac{\lambda^{(1)}_{small} s^{(1)}}{\frac{\lambda n^\alpha}{2k_n}} \in O\left( \frac{k_n^4}{n^{3\alpha-2}} \right). \]
Since we assumed that $k_n\in O(n^\beta)$ and $\alpha>(4\beta+2)/3$, then the load of the small server pool converges to zero, and it is thus stable for all $n$ large enough.\\

Let $\mathbb{E}\big[W_n^{s,1}\big]$ be the expected service time of a typical job in the first subsystem. We have the following convergence results.

\begin{lemma}
If $k_n=k$ for all $n$, we have
\begin{align*}
 \lim\limits_{n\to\infty} \mathbb{E}\left[W_n^{s,1}\right] &= 1+ \frac{1}{\mu}\sum\limits_{i=1}^{k} \frac{1}{\lceil r^* k \rceil -i+1}.
\end{align*}
If $k_n\to\infty$ as $n\to\infty$, we have
\begin{align*}
 \lim\limits_{n\to\infty} \mathbb{E}\left[W_n^{s,1}\right] &= 1+ \frac{1}{\mu}\log\left( \frac{r^*}{r^*-1} \right)
\end{align*}
\end{lemma}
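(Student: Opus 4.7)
The plan is to decompose $\mathbb{E}[W_n^{s,1}]$ according to whether a typical arrival to the first subsystem is routed to the large server pool or to the small one, and then take limits using the bound on $p_{\mathrm{small}}^{(1)}$ established in Subsection~\ref{sec:vanishingDelay} (namely $p_{\mathrm{small}}^{(1)} \in O(k_n^4/n^{2\alpha-1})$). A job routed to the large pool has its $\lceil r^* k_n \rceil$ replicas dispatched simultaneously to idle servers, so its service time is distributed as $X\bigl(1 + S_{[k_n:\lceil r^* k_n \rceil]}\bigr)$; by Assumptions~\ref{ass:basicSlowdown} and~\ref{ass:independentExponential} the slowdowns are i.i.d.\ exponential with rate $\mu$ and independent of $X$, and since $\mathbb{E}[X]=1$, the conditional expected service time equals
\[ L_n \triangleq 1 + \frac{1}{\mu}\sum_{i=1}^{k_n}\frac{1}{\lceil r^* k_n \rceil - i + 1} \]
by the standard order-statistics formula for i.i.d.\ exponentials. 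For jobs routed to the small pool only $k_n$ replicas are created and dispatched uniformly among the $\lambda n^\alpha/(2k_n)$ servers of vanishing load (already established earlier in this subsection), so up to a vanishing queueing correction from standard M/G/m bounds, the conditional expected service time is at most $1 + \frac{1}{\mu}\sum_{i=1}^{k_n}\frac{1}{i} + o(1)$, which is $O(\log k_n)$.

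Writing
\[ \mathbb{E}\bigl[W_n^{s,1}\bigr] = (1-p_{\mathrm{small}}^{(1)})\, L_n + p_{\mathrm{small}}^{(1)}\cdot O(\log k_n), \]
Case (i) with $k_n = k$ constant follows at once: $L_n$ is already equal to the claimed limit and is constant in $n$, the small-pool term is uniformly bounded, and $p_{\mathrm{small}}^{(1)} \to 0$.

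For Case (ii) with $k_n \to \infty$, two further steps are needed. First, I would identify the limit of $L_n$ by recognising $L_n - 1$ as a Riemann sum over the partition of $[r^*-1,\, r^*]$ into $k_n$ subintervals of width $1/k_n$:
\[ L_n - 1 = \frac{1}{\mu k_n}\sum_{i=1}^{k_n}\frac{1}{(\lceil r^* k_n \rceil - i + 1)/k_n} \longrightarrow \frac{1}{\mu}\int_{r^*-1}^{r^*}\frac{du}{u} = \frac{1}{\mu}\log\frac{r^*}{r^*-1}. \]
Second, the small-pool contribution $p_{\mathrm{small}}^{(1)}\cdot O(\log k_n)$ must vanish; given $p_{\mathrm{small}}^{(1)} \in O(k_n^4/n^{2\alpha-1})$ this reduces to verifying $k_n^4 \log k_n / n^{2\alpha-1} \to 0$, which holds under the hypotheses $k_n \in O(n^\beta)$ with $\beta < 1/5$ and $\alpha > (2+4\beta)/3$. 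The main obstacle is precisely this expectation-level control: knowing $p_{\mathrm{small}}^{(1)} \to 0$ is insufficient on its own because the expected service time in the small pool grows like $\log k_n$, and the strengthened condition on $\alpha$ in Theorem~\ref{thm:UpperBoundK}(ii) is exactly what guarantees that the product of these two quantities still vanishes.
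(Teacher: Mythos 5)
Your overall decomposition (split $\mathbb{E}[W_n^{s,1}]$ by whether the job is routed to the large or the small server pool, compute the large-pool term exactly via order statistics, and show the small-pool contribution is killed by $p_{\mathrm{small}}^{(1)}\to 0$) is the same as the paper's, and your large-pool computation and the Riemann-sum identification of the limit when $k_n\to\infty$ are both correct. However, your bound on the small-pool service time has a genuine gap. You claim the conditional expected service time in the small pool is $1 + \frac{1}{\mu}\sum_{i=1}^{k_n}\frac{1}{i} + o(1)$, i.e., essentially $\mathbb{E}\big[X(1+\max_{i\le k_n} S_i)\big]$ plus a ``vanishing queueing correction from standard M/G/m bounds.'' But the service time of a job is the Lebesgue measure of the union of its replicas' in-service intervals, which equals the max of the individual service times only if all replicas start simultaneously. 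In the small pool replicas are dispatched uniformly at random and may start at staggered times, so the service time can exceed $X(1+\max S_i)$ by as much as the spread of the start times. Standard M/G/m delay bounds control the \emph{individual} queueing delay of a replica, not the spread (or the max) of the $k_n$ correlated start times of replicas belonging to the same job, and $\mathbb{E}[\max$ of $k_n$ quantities$]$ can grow with $k_n$ even when each has $o(1)$ mean. So the $o(1)$ correction is not justified as stated.

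The paper sidesteps this entirely by using the crude but rigorous bound $s_{\mathrm{small}}^{(1)} \le k_n\bigl(1+\tfrac{1}{\mu}\bigr)$, coming from the elementary fact that the job's service time is at most the \emph{sum} of its replicas' service times (which holds for arbitrary start times). This $O(k_n)$ bound is weaker than your $O(\log k_n)$, but it avoids any appeal to queueing estimates inside the small pool, and it is exactly what the hypothesis $\alpha > (5\beta+1)/2$ in Theorem~\ref{thm:UpperBoundK}(ii) is calibrated for ($p_{\mathrm{small}}^{(1)}\cdot k_n \in O(k_n^5/n^{2\alpha-1})\to 0$). If you replace your small-pool bound with the paper's sum bound, the rest of your argument, including the Riemann-sum step which the paper leaves implicit, goes through cleanly.
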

\begin{proof}
Recall that under the DQ policy, all replicas associated with the same job sent to the large server pools start their service at the same time by construction. As a result, the service time of a job sent to the large server pool in first subsystem is distributed as the $k_n$-th order statistic of $X(1+S_1),\dots,X(1+S_{\lceil r^* k_n \rceil})$. Thus, its expected service time is
\[ 1+ \frac{1}{\mu}\sum\limits_{i=1}^{k_n} \frac{1}{\lceil r^* k_n \rceil -i+1}. \]

On the other hand, note that the service time of any job is upper bounded by the sum of the service times of its replicas, by the definition of the service time of a job. Since jobs sent to the small server pool have only $k_n$ replicas associated with them, then the expected service time of a job sent to the small server pool in the first subsystem, to be denoted by $s_{small}^{(1)}$, is upper bounded as follows:
\begin{equation}
  s_{small}^{(1)} \leq k_n\left( 1+\frac{1}{\mu} \right). \label{eq:serviceBound}
\end{equation}
Since a job is sent to the small server pool within the first subsystem with probability $p_{small}^{(1)}$, we see that the expected service time of a typical job in the first subsystem is
\[ \mathbb{E}\left[W_n^{s,1}\right] \triangleq 1+ p_{small}^{(1)}s_{small}^{(1)} + \left( 1-p_{small}^{(1)} \right) \frac{1}{\mu}\sum\limits_{i=1}^{k_n}  \frac{1}{\lceil r^* k_n \rceil -i+1}. \]
Combining this with equations \eqref{eq:queueingOrder} and \eqref{eq:serviceBound}, and the facts that $k_n\in O(n^\beta)$ and $\alpha>(5\beta+1)/2$, we obtain
\begin{align*}
 \lim\limits_{n\to\infty} \mathbb{E}\left[W_n^{s,1}\right] &= 1+ \frac{1}{\mu}\sum\limits_{i=1}^{k} \frac{1}{\lceil r^* k \rceil -i+1}
\end{align*}
for the case where $k_n=k$ for all $n$, and
\begin{align*}
 \lim\limits_{n\to\infty} \mathbb{E}\left[W_n^{s,1}\right] &= 1+ \frac{1}{\mu}\log\left( \frac{r^*}{r^*-1} \right)
\end{align*}
for the case where $k_n\to\infty$ as $n\to\infty$.
\end{proof}

Similarly, if $\mathbb{E}\big[W_n^{s,2}\big]$ is the expected service time of a typical job in the second subsystem, it can be shown that
\begin{align*}
 \lim\limits_{n\to\infty} \mathbb{E}\left[W_n^{s,2}\right] &= 1+ \frac{1}{\mu}\sum\limits_{i=1}^{k} \frac{1}{\lceil r^* k \rceil -i}
\end{align*}
for the case where $k_n=k$ for all $n$, and
\begin{align*}
 \lim\limits_{n\to\infty} \mathbb{E}\left[W_n^{s.,2}\right] &= 1+ \frac{1}{\mu}\log\left( \frac{r^*}{r^*-1} \right)
\end{align*}
for the case where $k_n\to\infty$ as $n\to\infty$.

Finally, using the fact that a job is sent to the first subsystem with probability $\big( p_{k_n} - 2n^{\alpha-1} \big)^+$ and to the second subsystem with probability $1-\big( p_{k_n} - 2n^{\alpha-1} \big)^+$, we get that the expected service time of a typical job in the system is
\[ \mathbb{E}\left[ W^s_n \right] = \big( p_{k_n} - 2n^{\alpha-1} \big)^+ \mathbb{E}\left[W_n^{s,1}\right] + \left[1-\big( p_{k_n} - 2n^{\alpha-1} \big)^+\right] \mathbb{E}\left[W_n^{s,2}\right],  \]
and thus
\begin{align*}
 \lim\limits_{n\to\infty} \mathbb{E}\left[W_n^s\right] &= 1+ \frac{1}{\mu}\sum\limits_{i=1}^{k} \left( \frac{p_k}{\lceil r^* k \rceil -i+1} + \frac{1-p_k}{\lceil r^* k \rceil -i} \right)
\end{align*}
for the case where $k_n=k$ for all $n$, and
\begin{align*}
 \lim\limits_{n\to\infty} \mathbb{E}\left[W_n^s\right] &= 1+ \frac{1}{\mu}\log\left( \frac{r^*}{r^*-1} \right)
\end{align*}
for the case where $k_n\to\infty$ as $n\to\infty$.

\section{Proof of Lemma 4.1}\label{app:problemFormulation}

For Block policies, all replicas associated with the same job start their service at the same time. As a result, the expected service time of job with tasks of size $x$ for which exactly $r$ replicas start their service is
\[ 1+\mathbb{E}\big[ S_{[k_n:r]}\,\big|\, X=x \big]. \]
Moreover, the expected total server time that each job requires in this case is
\[ (r-k_n)x\Big(1+\mathbb{E}\big[S_{[k_n:r]}\,\big|\, X=x \big]\Big) + \sum\limits_{i=1}^{k_n} x\Big(1+\mathbb{E}\big[ S_{[i:r]}\,\big|\, X=x \big] \Big). \]
Indeed, the second term is the sum of the smallest $k_n$ service times, which correspond to the service times of replicas that do finish their service. The first term is the total server time of the $(r-k_n)$ replicas that do not finish their service, but are nevertheless in service for $x\big(1+\mathbb{E}\big[S_{[k_n:r]}\,\big|\, X=x \big]\big)$ units of time.

Averaging these expressions over the number of replicas created (according to the distribution $p(x)$) and over the possible task sizes (according to the task size distribution $\mathbb{P}_X$), the expected service time of a typical job is
\begin{equation*}
 \int\limits_0^\infty x \sum\limits_{r=k_n}^\infty p_r(x)\Big(1+\mathbb{E}\big[ S_{[k_n:r]}\,\big|\, X=x \big] \Big) d\mathbb{P}_X(x),
\end{equation*}
and the expected server time that a typical job requires is
\begin{align}
 &\int\limits_0^\infty x \sum\limits_{r=k_n}^\infty p_r(x) \left[ r + (r-k_n)\mathbb{E}\big[S_{[k_n:r]}\,\big|\, X=x \big] {\color{white}\sum\limits_{i=1}^{k_n}} \right. \nonumber \\
  &\qquad\qquad\qquad\qquad\qquad\qquad\qquad + \left. \sum\limits_{i=1}^{k_n} \mathbb{E}\big[ S_{[i:r]}\,\big|\, X=x \big] \right] d\mathbb{P}_X(x). \label{eq:above}
\end{align}
On the other hand, by Little's law, the expected number of busy servers in steady-state is equal to the arrival rate of jobs ($\lambda n/k_n$) times the expected server time (in Equation \eqref{eq:above}). Since this must be less than or equal to the total number of servers ($n$), we have
\begin{align*}
 &\frac{\lambda}{k_n} \int\limits_0^\infty x \sum\limits_{r=k_n}^\infty p_r(x) \left[ r + (r-k_n)\mathbb{E}\big[S_{[k_n:r]}\,\big|\, X=x \big] {\color{white}\sum\limits_{i=1}^{k_n}} \right. \\
  &\qquad\qquad\qquad\qquad\qquad\qquad\qquad\qquad + \left. \sum\limits_{i=1}^{k_n} \mathbb{E}\big[ S_{[i:r]}\,\big|\, X=x \big] \right] d\mathbb{P}_X(x) \leq 1.
\end{align*}

\section{Proof of Lemma 4.2}\label{app:feasibility}
Note that the condition
  \[ \lambda\leq\frac{1}{1+\frac{1}{\mu}}, \]
  implies that the problem is feasible ($p_{k_n}(x)=1$ for all $x$ is a feasible solution). Moreover, the objective function is lower bounded by $1$. These two facts imply that the infimum is finite.

  Let us now define the function $I:\mathbb{R}\to\mathbb{R}$ as the optimal objective value of the perturbed problem:
  \begin{align*}
    I(u)\triangleq \inf_{p\in \mathcal{P}_{k_n}} \quad &\int\limits_0^\infty x \sum\limits_{r=k_n}^\infty p_r(x)\Big(1+\mathbb{E}\big[ S_{[k_n:r]}\,\big|\, X=x \big] \Big) d\mathbb{P}_X(x) \\
    s.t.\quad & \frac{\lambda}{k_n} \int\limits_0^\infty x \sum\limits_{r=k_n}^\infty p_r(x) \left[ r + (r-k_n)\mathbb{E}\big[S_{[k_n:r]}\,\big|\, X=x \big] \right. \\
    &\qquad\qquad\qquad\qquad \left. + \sum\limits_{i=1}^{k_n} \mathbb{E}\big[ S_{[i:r]}\,\big|\, X=x \big] \right] d\mathbb{P}_X(x) \leq 1+u.
  \end{align*}
  Moreover, consider the function $g:\mathcal{P}_{k_n}\to\mathbb{R}$, defined by
  \begin{align*}
    g(p) &\triangleq \frac{\lambda}{k_n} \int\limits_0^\infty x \sum\limits_{r=k_n}^\infty p_r(x) \left[ r + (r-k_n)\mathbb{E}\big[S_{[k_n:r]}\,\big|\, X=x \big] {\color{white}\sum\limits_{i=1}^{k_n}} \right.  \\
     &\qquad\qquad\qquad\qquad\qquad\qquad\qquad \left. + \sum\limits_{i=1}^{k_n} \mathbb{E}\big[ S_{[i:r]}\,\big|\, X=x \big] \right] d\mathbb{P}_X(x) - 1.
  \end{align*}
  It is easy to check that
  \[ 0 \in \text{core}\big[g(\mathcal{P}_{k_n})+\mathbb{R}_+\big], \]
  which implies that $I(\cdot)$ is relatively continuous at $0$, and thus its subdifferential at zero is non-empty. Combining this with the fact that the infimum is finite, we use Theorem 4 in \cite{infLinProg} to conclude that the infimum is attained.

\section{Proof of Theorem 4.3}\label{proof:sizeBasedK}

We start with a technical result about the optimization problem.

\begin{lemma}\label{lem:strongDuality}
  Consider the function $I:\mathbb{R}\to\mathbb{R}$ defined as the optimal objective value of the perturbed problem:
  \begin{align*}
    I(u)\triangleq \inf_{p\in \mathcal{P}_{k_n}} \quad &\int\limits_0^\infty x \sum\limits_{r=k_n}^\infty p_r(x)\Big(1+\mathbb{E}\big[ S_{[k_n:r]}\,\big|\, X=x \big] \Big) d\mathbb{P}_X(x) \\
    s.t.\quad & \frac{\lambda}{k_n} \int\limits_0^\infty x \sum\limits_{r=k_n}^\infty p_r(x) \left[ r + (r-k_n)\mathbb{E}\big[S_{[k_n:r]}\,\big|\, X=x \big] \right. \\
    &\qquad\qquad\qquad\qquad \left. + \sum\limits_{i=1}^{k_n} \mathbb{E}\big[ S_{[i:r]}\,\big|\, X=x \big] \right] d\mathbb{P}_X(x) \leq 1+u.
  \end{align*}
  Then we have that $I(\cdot)$ is relative continuous at $0$, and we have the strong duality
  \begin{align}
    I(0)= \sup\limits_{y\geq 0} \inf\limits_{p\in \mathcal{P}_{k_n}} \quad &\int\limits_0^\infty x \sum\limits_{r=1}^{\infty} p_r(x)\left[ 1-y+ \frac{y\lambda r}{k_n} + \mathbb{E}\big[S_{[k_n:r]}\,\big|\, X=x\big] \right. \nonumber \\
    &\quad \left( 1+\frac{y\lambda}{k_n}(r-k_n) \right) \left. + \frac{y\lambda}{k_n}  \sum\limits_{i=1}^{k_n} \mathbb{E}\big[ S_{[i:r]}\,\big|\, X=x \big] \right] d\mathbb{P}_X(x). \label{eq:dual_problem}
  \end{align}
  Moreover, $I(0)$ is attained by the dual problem at points $(p^*,y^*)$ with $y^*>0$.
\end{lemma}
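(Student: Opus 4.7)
The plan is to deduce all three claims from convex analysis of the value function $I$, combined with the Slater-type condition already established in the proof of Lemma \ref{lem:feasibility}. First I would observe that $I$ is convex on $\mathbb{R}$: the feasible set $\mathcal{P}_{k_n}$ is convex, and both the objective and the constraint mapping $g$ (as defined in that proof) are linear functionals of $p$, so the value function of this perturbed linear program is convex (cf. Rockafellar, \emph{Convex Analysis}, Theorem 5.7).

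Next, I would use the core condition $0 \in \mathrm{core}[g(\mathcal{P}_{k_n})+\mathbb{R}_+]$ that was verified in the proof of Lemma \ref{lem:feasibility}. Combined with the convexity of $I$ and the finiteness $I(0)<\infty$ (also from Lemma \ref{lem:feasibility}), Theorem 4 in [infLinProg] yields both that $I$ is relatively continuous at $0$ and that its subdifferential $\partial I(0)$ is non-empty. The non-emptiness of $\partial I(0)$ is precisely the hypothesis required for Fenchel--Rockafellar strong duality on a perturbed convex program, which gives
\[
I(0) \;=\; \max_{y \geq 0}\Big\{\inf_{p \in \mathcal{P}_{k_n}} \big[f(p) + y\, g(p)\big]\Big\},
\]
where $f$ denotes the primal objective and the maximum is attained at any $y^* \in (-\partial I(0))\cap \mathbb{R}_+$. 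A routine regrouping of the coefficients of $p_r(x)$ inside $f(p)+y\,g(p)$ recovers exactly the Lagrangian form displayed in \eqref{eq:dual_problem}.

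Finally, to obtain $y^*>0$ I would argue by contradiction. Suppose $y^*=0$. Then $I(0)=\inf_{p\in\mathcal{P}_{k_n}} f(p)$, the unconstrained infimum of the primal objective. By Assumption \ref{ass:sizeBasedSlowdowns}(2), the coefficient $1+\mathbb{E}[S_{[k_n:r]}\mid X=x]$ is strictly decreasing in $r$ on a set of $x$ of positive $\mathbb{P}_X$-measure, so this unconstrained infimum is approached only along sequences $\{p^{(m)}\}$ for which the mean replication $\sum_r r\, p^{(m)}_r(x)$ diverges on such a set. But the constraint functional $g$ contains the term $\tfrac{\lambda}{k_n}\int x \sum_r r\,p_r(x)\,d\mathbb{P}_X(x)$, which then also diverges; hence no feasible $p$ attains the unconstrained infimum. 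Since the primal infimum is attained by Lemma \ref{lem:feasibility}, we conclude $I(0)$ is strictly above the dual value at $y=0$, forcing $y^*>0$.

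The main technical obstacle is the rigorous invocation of Fenchel--Rockafellar strong duality in this infinite-dimensional setting, which hinges on the core condition; but that step is already settled in the proof of Lemma \ref{lem:feasibility}, so the duality machinery can be applied here without further work. A secondary subtlety is the Fubini/measurability justification for interchanging the pointwise (in $x$) minimisation with the integration when interpreting the inner infimum of the Lagrangian, which I would handle in passing.
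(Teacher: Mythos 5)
Your proposal follows the same overall strategy as the paper: convexity of the value function $I$, the core condition $0 \in \mathrm{core}[g(\mathcal{P}_{k_n})+\mathbb{R}_+]$ established in Lemma \ref{lem:feasibility}, Theorem 4 of \cite{infLinProg} for relative continuity and non-emptiness of $\partial I(0)$, and Fenchel--Rockafellar duality for the strong duality equation. The paper's proof simply defers all of this to ``the same arguments as in the proof of Lemma \ref{lem:feasibility}'', so you are in full agreement there.

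The only place you genuinely diverge is in the argument that $y^*>0$. The paper asserts that ``the inequality constraint is clearly active, i.e., $0\notin\partial I(0)$'' and stops; you instead argue by contradiction that the dual value at $y=0$ (the unconstrained infimum of the primal objective) is strictly below $I(0)$, because the unconstrained objective is only approached along sequences with $\sum_r r\,p_r(x)\to\infty$, which blow up the constraint functional, whereas $I(0)$ is attained at a feasible point by Lemma \ref{lem:feasibility}. This is a legitimate and actually more substantive justification than the paper's bare assertion. One small caveat: you cite Assumption \ref{ass:sizeBasedSlowdowns}(2) as the reason $1+\mathbb{E}[S_{[k_n:r]}\mid X=x]$ is strictly decreasing in $r$, but that assumption controls monotonicity in $x$, not in $r$. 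The strict decrease in $r$ actually comes from the non-degeneracy of the conditional slowdown distribution (the $k_n$-th order statistic of $r$ i.i.d.\ non-constant random variables strictly decreases in mean as $r$ grows; cf.\ the argument behind Lemma \ref{lem:convexity}). With that citation corrected, the step is sound, and your proof is if anything more complete than the paper's.
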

\begin{proof}
  The relative continuity of $I(\cdot)$, the strong duality, and the attainability of the optimal value by the dual are obtained using the same arguments as in the proof of Lemma \ref{lem:feasibility}, given in Appendix \ref{app:feasibility}.

  Moreover, since the inequality constraint in the problem is clearly active, i.e., since we have $0\notin\partial I(0)$, then $y^*>0$ for every optimal dual solution $(p^*,y^*)$.
\end{proof}

Since we have strong duality (Lemma \ref{lem:strongDuality}), we focus on the solution of the dual problem:
  \begin{align*}
    \sup\limits_{y\geq 0} \inf\limits_{p\in \mathcal{P}_{k_n}} \quad &\int\limits_0^\infty x \sum\limits_{r=1}^{\infty} p_r(x)\left[ 1-y+ \frac{y\lambda r}{k_n} + \mathbb{E}\big[S_{[k_n:r]}(x)\big] \left( 1+\frac{y\lambda}{k_n}(r-k_n) \right) \right. \\
    &\qquad\qquad\qquad\qquad\qquad\qquad\qquad \left. + \frac{y\lambda}{k_n}  \sum\limits_{i=1}^{k_n} \mathbb{E}\big[ S_{[i:r]}(x) \big] \right] d\mathbb{P}_X(x).
  \end{align*}
  Furthermore, since Lemma \ref{lem:strongDuality} states that the supremum with respect to $y$ is attained at points with $y>0$, we can take the supremum over all $y>0$ and obtain the same solutions. Moreover, since there are no constraints involving multiple values of $x$ at the same time, we can take the infimum for each $x$ separately, inside the integral. Using these two facts, we obtain the equivalent problem:
  \begin{align*}
    \sup\limits_{y> 0}  \quad &\int\limits_0^\infty  x \inf\limits_{p(x)\in P_{k_n}} \left\{ \sum\limits_{r=1}^{\infty} p_r(x)\left[ 1-y+ \frac{y\lambda r}{k_n} + \mathbb{E}\big[S_{[k_n:r]}\,\big|\, X=x\big] \right.\right. \\
    &\qquad\qquad\quad\left( 1+\frac{y\lambda}{k_n}(r-k_n) \right) \left.\left.  + \frac{y\lambda}{k_n}  \sum\limits_{i=1}^{k_n} \mathbb{E}\big[ S_{[i:r]}\,\big|\, X=x \big] \right] \right\} d\mathbb{P}_X(x).
  \end{align*}

  We now explore the properties of the solutions. Fix $x\geq 0$ and $y>0$, and consider the problem
  \begin{align*}
   \inf\limits_{p(x)\in P_{k_n}} & \left\{ \sum\limits_{r=1}^{\infty} p_r(x)\left[ 1-y+ \frac{y\lambda r}{k_n} + \mathbb{E}\big[S_{[k_n:r]}\,\big|\, X=x\big] \left( 1+\frac{y\lambda}{k_n}(r-k_n) \right)\right.\right. \\
   & \qquad\qquad\qquad\qquad\qquad\qquad\qquad\qquad \left.\left.+ \frac{y\lambda}{k_n}  \sum\limits_{i=1}^{k_n} \mathbb{E}\big[ S_{[i:r]}\,\big|\, X=x \big] \right] \right\}.
  \end{align*}
  Since this minimization is a linear program over an infinite-dimensional simplex, any optimal solution, if it exists, will be concentrated on the set of indices
  \[ \underset{r\geq k_n}{\arg\min} \left\{ \frac{y\lambda r}{k_n} + \mathbb{E}\big[S_{[k_n:r]}\,\big|\, X=x\big] \left( 1+\frac{y\lambda}{k_n}(r-k_n) \right) + \frac{y\lambda}{k_n}  \sum\limits_{i=1}^{k_n} \mathbb{E}\big[ S_{[i:r]}\,\big|\, X=x \big] \right\}. \]
  We now proceed to explore how these indices depend on $x$ and $k_n$. Let us define the function $f_{x,y}:\mathbb{N}\to\mathbb{R}$ such that
  \[ f_{x,y}(r)\triangleq \frac{y\lambda r}{k_n} + \mathbb{E}\big[S_{[k_n:r]}\,\big|\, X=x\big] \left( 1+\frac{y\lambda}{k_n}(r-k_n) \right) + \frac{y\lambda}{k_n}  \sum\limits_{i=1}^{k_n} \mathbb{E}\big[ S_{[i:r]}\,\big|\, X=x \big]. %\footnote{ PROBLEM: we don't know if this is convex in $r$}
  \]
  For $r\geq 2$, we have
  \begin{align*}
    &f_{x,y}(r)-f_{x,y}(r-1) \\
    &\qquad = \frac{y\lambda}{k_n}\left( 1 + \sum\limits_{i=1}^{k_n} \mathbb{E}\big[ S_{[i:r]}\,\big|\, X=x \big] - \mathbb{E}\big[ S_{[i:r-1]}\,\big|\, X=x \big] \right) \\
    &\qquad\qquad + \Big( \mathbb{E}\big[ S_{[k_n:r]}\,\big|\, X=x \big] - \mathbb{E}\big[ S_{[k_n:r-1]}\,\big|\, X=x \big] \Big) \left[ 1+\frac{y\lambda}{k_n}(r-k_n) \right] \\
    &\qquad\qquad\qquad\qquad\qquad\qquad\qquad\qquad\qquad\qquad\quad + \frac{y\lambda}{k_n}\mathbb{E}\big[ S_{[k_n:r-1]}\,\big|\, X=x \big].
  \end{align*}%\footnote{ PROBLEM: By our assumptions on the slowdowns, we have that $\mathbb{E}\big[ S_{[i:r]}\,\big|\, X=x \big]-\mathbb{E}\big[ S_{[i:r-1]}\,\big|\, X=x \big]$ is increasing in $x$, for all $r\geq k_n+1$, for all $i=1,\dots,k_n$. However, $\mathbb{E}\big[ S_{[k_n:r-1]}\,\big|\, X=x \big]$ might be increasing or decreasing in $x$ depending whether $\mathbb{E}\big[ S_{[k_n:r-1]}\,\big|\, X=x \big]>\mathbb{E}[S(x)]$ or $\mathbb{E}\big[ S_{[k_n:r-1]}\,\big|\, X=x \big]<\mathbb{E}[S(x)]$. As a result, we cannot determine whether $f_{x,y}(r)-f_{x,y}(r-1)$ is increasing or decreasing in $x$.}
  Since
  \[ \lim\limits_{x\to\infty} \Big( \mathbb{E}\big[ S_{[i:r]}\,\big|\, X=x \big]-\mathbb{E}\big[ S_{[i:r-1]}\,\big|\, X=x\big] \Big) = 0 \]
  and
  \[ \lim\limits_{x\to\infty} \mathbb{E}\big[ S_{[k_n:r-1]}\,\big|\, X=x \big] = \frac{1}{\mu}, \]
  then $f_{x,y}(r)-f_{x,y}(r-1)>0$, for all $x$ large enough. Thus, $r^*(x,y)=k_n$ for all $x$ large enough. This proves part (i) of the theorem.\\

  On the other hand, we have
  \begin{align*}
     f_{x,y}(r) &\leq \frac{y\lambda r}{k_n} + \mathbb{E}\big[S_{[k_n:r]}\,\big|\, X=x\big] \left( 1+\frac{y\lambda}{k_n}(r-k_n) \right) \\
     & \qquad\qquad\qquad\qquad\qquad\qquad\qquad\qquad\qquad + \frac{y\lambda}{k_n}  \sum\limits_{i=1}^{k_n} \mathbb{E}\big[ S_{[k_n:r]}\,\big|\, X=x \big]\\
     &= \frac{y\lambda r}{k_n} + \mathbb{E}\big[S_{[k_n:r]}\,\big|\, X=x\big] \left( 1+\frac{y\lambda r}{k_n} \right).
  \end{align*}
  Since $\mathbb{E}\big[S_{[k_n:r]}\,\big|\, X=x\big]-1/\mu$ is either increasing or decreasing in $x$, we have that either
  \[ \mathbb{E}\big[ S_{[k_n:r]}\,\big|\, X=x \big] \leq \mathbb{E}\big[S_{[k_n:r]}\mid X=0\big], \]
  or
  \[ \mathbb{E}\big[ S_{[k_n:r]}\,\big|\, X=x \big] \leq \lim\limits_{x\to\infty} \mathbb{E}\big[S_{[k_n:r]}\,\big|\, X=x\big] = \frac{1}{\mu}. \]
  Either way, we have
  \[ \mathbb{E}\big[ S_{[k_n:r]}\,\big|\, X=x \big] \leq \max\left\{\mathbb{E}\big[S_{[k_n:r]}\mid X=0\big], \frac{1}{\mu} \right\}, \]
  for all $r$. Moreover, since $S_{[k_n:r]}$ is the $k_n$-th order statistic of $r$ i.i.d. random variables with mean $1/\mu$, there exists $\underline{r}\in O(k_n)$ such that $\mathbb{E}\big[S_{[k_n:r]}\mid X=0\big]\leq 1/\mu$ for all $r\geq \underline{r}$. Thus,
  \[ f_{x,y}(r) \leq \frac{y\lambda r}{k_n} + \frac{1}{\mu} \left( 1+\frac{y\lambda r}{k_n} \right) \triangleq \overline{f_{x,y}}(r), \]
  for all $r\geq \underline{r}$. Combining this with the fact that
  \[ f_{x,y}(r) \geq \frac{y\lambda r}{k_n}\triangleq \underline{f_{x,y}}(r), \]
  we see that $f_{x,y}$ is sandwiched between the two affine functions $\underline{f_{x,y}}$ and $\overline{f_{x,y}}$, for all $r\geq \underline{r}$. It follows that the minimum of $f_{x,y}$ is achieved at indices upper bounded by the largest positive integer $\overline{r^*_{k_n}}$ such that
  \[ \overline{f_{x,y}}\big(\underline{r}\big) \geq \underline{f_{x,y}}\big(\overline{r^*_{k_n}}\big). \]
  Equivalently, for all $y>0$, we have
  \[ \overline{r^*_{k_n}} \leq \underline{r} + \frac{1}{\mu} \left( \frac{k_n}{y\lambda} +\underline{r} \right) \in O(k_n). \]
  Finally, since this holds for all $y>0$, in particular it holds for the optimal $y^*>0$, and part (ii) of the theorem is proved.

\section{Proof of Theorem 4.5}\label{proof:sizeBasedThm}
We first present a simple result on the expectation of the minimum of i.i.d. random variables.

\begin{lemma}\label{lem:convexity}
  Let $S_1,S_2,\dots$ be a sequence of nonnegative and non constant i.i.d. random variables. Then,
  \[ g(r)\triangleq \mathbb{E}\big[\min\{S_1,\dots,S_r\}\big] \]
  is a strictly convex function, i.e., we have
  \[ g(r) < \frac{g(r-1)+g(r+1)}{2}, \]
  for all $r\geq 2$.
\end{lemma}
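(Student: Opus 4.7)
The plan is to use the layer-cake identity
\[ g(r) = \int_0^\infty \bar F(t)^r\,dt, \qquad \bar F(t) \triangleq \mathbb{P}(S_1 > t), \]
and reduce the claimed strict convexity in $r$ to the pointwise algebraic identity
\[ \bar F(t)^{r-1} - 2\bar F(t)^{r} + \bar F(t)^{r+1} = \bar F(t)^{r-1}\bigl(1-\bar F(t)\bigr)^{2} \geq 0. \]
Integrating this identity over $t \in [0,\infty)$ would immediately yield the weak convexity inequality $g(r-1) - 2g(r) + g(r+1) \geq 0$.

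To promote weak convexity to strict convexity, I would exhibit a set of positive Lebesgue measure on which the nonnegative integrand $\bar F(t)^{r-1}(1-\bar F(t))^2$ is strictly positive; equivalently, a set on which $\bar F(t) \in (0,1)$. Since $S_1$ is nonnegative and nonconstant, its distribution is not a point mass, so there exists $a \geq 0$ with $0 < \bar F(a) < 1$. By monotonicity of $\bar F$ we have $\bar F(t) \leq \bar F(a) < 1$ for all $t \geq a$, and by right-continuity of $\bar F$ there exists $\delta > 0$ such that $\bar F(a+\delta) \geq \bar F(a)/2 > 0$. Consequently $\bar F(t) \in (0,1)$ for every $t \in [a, a+\delta]$, so the integrand is strictly positive on an interval of positive length, and integrating gives the desired strict inequality $g(r-1) - 2g(r) + g(r+1) > 0$.

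The argument is essentially elementary, and the main (minor) technical point is the right-continuity/monotonicity step used to produce a nondegenerate interval from a single point where $\bar F$ is strictly between $0$ and $1$; once this is established, the rest is a direct pointwise-to-integrated convexity transfer.
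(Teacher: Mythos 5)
Your proof is correct, and it takes a genuinely different route from the paper's. You apply the layer-cake (survival-function) representation $g(r)=\int_0^\infty \bar F(t)^r\,dt$, reduce the second forward difference directly to the pointwise identity $\bar F^{r-1}(1-\bar F)^2\geq 0$, and then secure strictness via right-continuity and monotonicity of $\bar F$ (which, combined with nonconstancy, yields an interval of positive Lebesgue measure on which $\bar F\in(0,1)$). The paper instead works with the \emph{first} difference: it writes $g(r+1)-g(r)=-\mathbb{E}\big[(\min\{S_1,\dots,S_r\}-S_{r+1})^+\big]$, conditions on $S_{r+1}$ to get $-\int_0^\infty\mathbb{E}\big[(\min\{S_1,\dots,S_r\}-s)^+\big]\,dF_{S_{r+1}}(s)$, and argues the inner expectation is nonincreasing in $r$ and strictly decreasing on a set of $s$ of positive $F_{S_{r+1}}$-measure, so the first differences strictly increase. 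Both arguments are short and both exploit nonnegativity and nonconstancy, but your version collapses the convexity claim to a one-line algebraic identity and makes the strictness argument transparent at the level of the tail function, whereas the paper's version leans on a coupling/conditioning decomposition. One small thing worth noting: your strictness step can be shortened by observing that nonconstancy of a nonnegative $S_1$ already forces $\bar F$ to take a value in $(0,1)$, and then right-continuity plus monotonicity immediately supply an interval $[a,a+\delta)$ with $0<\bar F\leq\bar F(a)<1$; the factor-of-two bound $\bar F(a+\delta)\geq\bar F(a)/2$ is a fine but unnecessary quantitative choice.
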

\begin{proof}
We have
  \begin{align*}
   &\mathbb{E}[\min\{S_1,\dots,S_{r+1}\}] \\
   &\qquad = \mathbb{E}\big[\min\{S_1,\dots,S_r\}\big] - \mathbb{E}\Big[\big(\min\{S_1,\dots,S_r\}-S_{r+1}\big) \mathds{1}_{\{\min\{S_1,\dots,S_r\}>S_{r+1}\}}\Big].
  \end{align*}
  This means that
  \begin{align*}
   g(r+1)-g(r) &= - \mathbb{E}\Big[\big(\min\{S_1,\dots,S_r\}-S_{r+1}\big) \mathds{1}_{\{\min\{S_1,\dots,S_r\}>S_{r+1}\}} \Big] \\
   & = - \mathbb{E}\Big[ \mathbb{E}\Big[ \big(\min\{S_1,\dots,S_r\}-S_{r+1}\big) \mathds{1}_{\{\min\{S_1,\dots,S_r\}>S_{r+1}\}} \mid S_{r+1} \Big] \Big].
  \end{align*}
  Since $S_1,\dots,S_{r+1}$ are independent, we have
  \begin{align*}
   &\mathbb{E}\Big[ \mathbb{E}\Big[ \big(\min\{S_1,\dots,S_r\}-S_{r+1}\big) \mathds{1}_{\{\min\{S_1,\dots,S_r\}>S_{r+1}\}} \mid S_{r+1} \Big] \Big] \\
   &\qquad\qquad\qquad = \int\limits_0^\infty \mathbb{E}\Big[ \big(\min\{S_1,\dots,S_r\}-s\big) \mathds{1}_{\{\min\{S_1,\dots,S_r\}>s\}} \Big] dF_{S_{r+1}}(s).
  \end{align*}
  Note that, for all $s\geq 0$, the integrand is a nonincreasing function of $r$. Moreover, since $S_{r+1}$ is not constant, the integrand is a decreasing function of $r$ for a set of values of $s$ with positive probability (with respect to $F_{S_{r+1}}$). It follows that $g(r+1)-g(r)$ is a decreasing function of $r$. As a result,
  \[ [g(r+1)-g(r)] - [g(r)-g(r-1)] > 0, \]
  and thus
  \[ g(r)< \frac{g(r-1)+g(r+1)}{2}. \]
\end{proof}

For the case $k_n=1$, the optimization problem defined by equations \eqref{eq:problem1} and \eqref{eq:problem2} simplifies to the following.
\begin{align*}
    \inf\limits_{p\in\mathcal{P}_1}\quad &\int\limits_0^\infty x \sum\limits_{r=1}^{\infty} p_r(x)\Big(1+\mathbb{E}\big[ S_{[1:r]}\,\big|\, X=x \big]\Big) d\mathbb{P}_X(x) \\
    s.t.\quad & \lambda \int\limits_0^\infty x \sum\limits_{r=1}^{\infty} r p_r(x)\Big(1+\mathbb{E}\big[ S_{[1:r]}\,\big|\, X=x \big]\Big) d\mathbb{P}_X(x) \leq 1.
\end{align*}
Since we have strong duality (Lemma \ref{lem:strongDuality}), we focus on the solution of the dual problem:
  \begin{align*}
    \sup\limits_{y\geq 0} \inf\limits_{p\in \mathcal{P}_1} \quad &\int\limits_0^\infty x \sum\limits_{r=1}^{\infty} p_r(x)\Big[ \Big( 1+\mathbb{E}\big[ S_{[1:r]}\,\big|\, X=x \big]\Big)(1+y\lambda r)-y \Big] d\mathbb{P}_X(x)
  \end{align*}
  Furthermore, since Lemma \ref{lem:strongDuality} states that the supremum with respect to $y$ is attained at points with $y>0$, we can take the supremum over all $y>0$ and obtain the same solutions. Moreover, since there are no constraints involving multiple values of $x$ at the same time, we can take the infimum for each $x$ separately, inside the integral. Using these two facts, we obtain the equivalent problem:
  \begin{align*}
    \sup\limits_{y> 0}  \quad &\int\limits_0^\infty  x \inf\limits_{p(x)\in P_1} \left\{ \sum\limits_{r=1}^{\infty} p_r(x)\Big[ \Big( 1+\mathbb{E}\big[ S_{[1:r]}\,\big|\, X=x \big]\Big)(1+y\lambda r)-y \Big] \right\} d\mathbb{P}_X(x).
  \end{align*}

  We now explore the properties of the solutions. Fix $x\geq 0$ and $y>0$, and consider the problem
  \[ \inf\limits_{p(x)\in P_1}  \quad \sum\limits_{r=1}^{\infty} p_r(x)\Big[ \Big( 1+\mathbb{E}\big[ S_{[1:r]}\,\big|\, X=x \big]\Big)(1+y\lambda r)-y \Big]. \]
  Since this is a linear program over a simplex, the solutions are concentrated on the set of indices
  \[ \underset{r\geq 1}{\arg\min} \Big[ \Big(1+\mathbb{E}\big[ S_{[1:r]}\,\big|\, X=x \big]\Big)(1+y\lambda r)-y \Big]. \]
  We now proceed to explore how these indices depend on $x$. Let us define the function $f_{x,y}:\mathbb{N}\to\mathbb{R}$ by
  \[ f_{x,y}(r)\triangleq \Big(1+\mathbb{E}\big[ S_{[1:r]}\,\big|\, X=x \big]\Big)(1+y\lambda r)-y. \]
  First, note that Lemma \ref{lem:convexity} states that $\mathbb{E}\big[ S_{[1:r]}\,\big|\, X=x \big]$ is a strictly convex function in $r$. Furthermore, Assumption \ref{ass:convexity} states that $r\mathbb{E}\big[ S_{[1:r]}\,\big|\, X=x \big]$ is also a convex function in $r$. As a result, $f_{x,y}(r)$ is a strictly convex function in $r$. Moreover, since $y>0$, we have
  \[ \lim\limits_{r\to\infty} f_{x,y}(r)=\infty. \]
  Combining this with the fact that $f_{x,y}$ is strictly convex, we can conclude that the minimum of $f_{x,y}$ is achieved in either one point, or in two consecutive points. This proves part (i) of the theorem.\\

  On the other hand, for $r\geq 2$, consider
  \begin{align*}
    f_{x,y}(r)-f_{x,y}(r-1) &= (1+y\lambda r)\Big( \mathbb{E}\big[ S_{[1:r]}\,\big|\, X=x \big]-\mathbb{E}\big[ S_{[1:r-1]}\,\big|\, X=x \big]\Big) \\
    &\qquad\qquad\qquad\qquad\qquad + y\lambda\Big(1+\mathbb{E}\big[ S_{[1:r-1]}\,\big|\, X=x \big]\Big).
  \end{align*}
  By our assumptions on the slowdowns, $\mathbb{E}\big[ S_{[1:r-1]}\,\big|\, X=x \big]$ and $\mathbb{E}\big[ S_{[1:r]}\,\big|\, X=x \big]-\mathbb{E}\big[ S_{[1:r-1]}\,\big|\, X=x \big]$ are increasing in $x$, for all $r\geq 2$. As a result, $f_{x,y}(r)-f_{x,y}(r-1)$ is increasing in $x$, for all $r\geq 2$. Since $f_{x,y}$ is convex, it follows that the minimum of $f_{x,y}$ is achieved at integers $r^*(x,y)$, which are nonincreasing with $x$. This proves part (iii) of the theorem.\\

  %Furthermore, since
%  \[ \lim\limits_{x\to\infty} \mathbb{E}\big[ S_{[1:r]}\,\big|\, X=x \big]-\mathbb{E}\big[ S_{[1:r-1]}\,\big|\, X=x \big] = 0, \]
%  then $f_{x,y}(2)-f_{x,y}(1)>0$, for all $x$ large enough. Combining this with the convexity of $f_{x,y}(r)$, it follows that $f_{x,y}(r)-f_{x,y}(r-1)>0$, for all $r\geq 2$, for all $x$ large enough. Thus, $r^*(x,y)=1$ for all $x$ large enough.

Finally, note that the minimum of $f_{x,y}(\cdot)$ is achieved at two consecutive integers $r^*_{x,y}$ and $r^*_{x,y}+1$ only if $f_{x,y}(r^*_{x,y}+1)-f_{x,y}(r^*_{x,y})=0$. Since this difference is strictly increasing in $x$, the minimum can only be achieved at a certain pair of consecutive integers for only one value of $x$, call it $\hat x$. Thus, if $\mathbb{P}(X=\hat x)=0$, optimal solutions are equal almost everywhere to a solution that is concentrated only on $r^*_{x,y}$. This proves part (ii) of the theorem.

  %Finally, since $r^*(x,y)$ is nonincreasing with $x$, its supremum with respect to $x$ is achieved at $x=0$, which is where
%  \[ f_{0,y}(r) = \big(1+g(r,0)\big)(1+y\lambda r)-y \]
%  has its minimum. Note that, since $\lim\limits_{r\to\infty} f_{0,y} = \infty$, the minimum is achieved at a finite value of $r$. This proves part (iv) of the theorem.

\section{Proof of theorems 4.6 and 4.7} \label{sec:proofSizeBasedUpper}

Note that the SB-FREC and SB-DQ policies are almost the same as the FREC and DQ policies introduced in Subsection \ref{sec:upperBoundFinite}. The main difference is that there are more than two subsystems (but still finitely many of them), and that the routing of incoming jobs to the subsystems depends on the task sizes of the incoming jobs. Thus, while it is clear that the SB-DQ policy is a Block policy by construction, the fact that the SB-FREC policy is a Block policy is established using the same argument that was used to establish that the FREC policy is a Block policy (Lemma~\ref{lem:startFinish}).

Moreover, the convergence of the queueing delays to zero under the SB-FREC and SB-DQ policies follow the same arguments given in the proofs of theorems \ref{thm:UpperBound1} and \ref{thm:UpperBoundK} in sections \ref{proof:upperBound} and \ref{proof:upperBound2}, respectively. It only remains to prove that the expected service times converge to the stated limits.\\

Consider the function $I:\mathbb{R}\to\mathbb{R}$ given by
\begin{align*}
  I(u)= \inf_{p\in \mathcal{P}_{k}} \quad &\int\limits_0^\infty x \sum\limits_{r=k}^\infty p_r(x)\Big(1+\mathbb{E}\big[ S_{[k:r]}\,\big|\, X=x \big] \Big) d\mathbb{P}_X(x) \\
  s.t.\quad & \frac{\lambda}{k} \int\limits_0^\infty x \sum\limits_{r=k}^\infty p_r(x) \left[ r + (r-k)\mathbb{E}\big[S_{[k:r]}\,\big|\, X=x \big] \right. \\
  &\qquad\qquad\qquad\qquad \left. + \sum\limits_{i=1}^{k_n} \mathbb{E}\big[ S_{[i:r]}\,\big|\, X=x \big] \right] d\mathbb{P}_X(x) \leq 1+u.
\end{align*}
By construction, the expected service time of a typical job under the SB-FREC and SB-DQ policies is
\begin{equation*}
 \mathbb{E}[W_n^s] = \int\limits_0^\infty x \sum\limits_{r=k}^\infty p^{(n)}_r(x)\Big(1+\mathbb{E}\big[ S_{[k:r]}\,\big|\, X=x \big] \Big) d\mathbb{P}_X(x),
\end{equation*}
Moreover, the definition of $p^{(n)}$ implies that
\begin{equation}\label{eq:serviceTime}
  \mathbb{E}[W_n^s] = I\big(-n^{\alpha-1}\big).
\end{equation}
On the other hand, the definition of $p^*$ implies that
\begin{equation}\label{eq:limitTime}
   \int\limits_0^\infty x \sum\limits_{r=k}^\infty p^*_r(x)\Big(1+\mathbb{E}\big[ S_{[k:r]}\,\big|\, X=x \big] \Big) d\mathbb{P}_X(x) = I(0).
\end{equation}
Finally, combining equations \eqref{eq:serviceTime} and \eqref{eq:limitTime} with the fact that $I(\cdot)$ is continuous around $0$ (Lemma \ref{lem:strongDuality}), we obtain
\begin{align*}
  \lim\limits_{n\to\infty} \mathbb{E}[W_n^s] &= \lim\limits_{n\to\infty} I\big(-n^{\alpha-1}\big) \\
  &= I(0) \\
  &= \int\limits_0^\infty x \sum\limits_{r=k}^\infty p^*_r(x)\Big(1+\mathbb{E}\big[ S_{[k:r]}\,\big|\, X=x \big] \Big) d\mathbb{P}_X(x),
\end{align*}
which proves the convergence of the expected service times.

\bibliographystyle{imsart-number}
\bibliography{references}

\end{document}